\documentclass[runningheads, orivec]{llncs}
\usepackage[title]{appendix}
\usepackage[T1]{fontenc}
\usepackage{graphicx}
\usepackage{hyperref}
\usepackage{color}

\usepackage{amssymb}

\usepackage{tikz}
\usetikzlibrary{automata, arrows.meta, positioning, shapes.arrows}
\usetikzlibrary{backgrounds,automata}
\usetikzlibrary{calc}
\usetikzlibrary{spath3,intersections}
\usetikzlibrary{decorations.pathreplacing}

\usepackage[caption=false]{subfig}
\usepackage{mathtools}

\usepackage{paralist, tabularx}

\usepackage{wasysym}
\usepackage{trimclip}

\usepackage[shortlabels]{enumitem}

\usepackage{ntheorem}

\usepackage{placeins}

\usepackage{mymacros}
\begin{document}
\title{\papertitle}
%
%
\author{Alexander Lieb\inst{1}\and
Malte Lochau \inst{2}}
\authorrunning{A. Lieb and M. Lochau}
%
\institute{TU Darmstadt, Real-Time Systems Lab, Darmstadt, Germany \email{alexander.lieb@es.tu-darmstadt.de} \and
University of Siegen, Model-based Engineering Group, Siegen, Germany \email{malte.lochau@uni-siegen.de} 
}
\maketitle              
\begin{abstract}
%
Timed automata provide a modeling formalism for time-critical properties 
of reactive systems with discrete-state/continuous-time behaviors. 
To handle the infinite state space of timed automata, recent verification tools use zone graphs, a symbolic semantic model that guarantees sound results, at least for properties reducible to reachability problems. 
If we instead want to compare the behavior of two timed automata, checking for timed trace equivalence is undecidable. 
Fortunately, timed bisimulation equivalence is decidable, but currently available checks do not provide useful explanations of the results. 
To overcome this limitation, we use a recently proposed extension of zone graphs by so-called virtual clocks. 
The extension not only facilitates effective tool support for timed bisimilarity checking but also enables the derivation of useful explanations from the results. 
If timed bisimilarity holds, all witnesses derivable from the composed symbolic representation of both models are indeed valid for both models. 
If timed bisimilarity does not hold, we describe how to obtain counterexamples, making explicit the behavioral differences. 
These witnesses/counterexamples may serve as test cases in later stages of system refinement.
%
\end{abstract}
%


\thispagestyle{plain}
\pagestyle{plain}

\section{Introduction}
\label{chap:intro}
Timed automata are a successful formalism for reasoning about 
time-critical behaviors~\cite{Alur1990}.
Application examples include communication protocols~\cite{IEEE1394RCP} and safety-critical software~\cite{CollisionAvoidance}.
Timed automata incorporate clock variables to express quantifiable timing constraints on valid execution traces.
These variables impose infinite state spaces being computationally intractable.
Thus, most analysis tools for timed automata use zone graphs, a symbolic representation which is, however, limited to checking problems reducible to reachability~\cite{Henzinger1994}.
In contrast, a complete comparison of the timing behavior of two timed automata, based on timed trace equivalence checking is undecidable~\cite{AlurDill}, whereas timed bisimulation has been proven to be decidable~\cite{Cerans1992,Larsen1994}.
But, to make timed bisimilarity checking an effective tool that scales to realistic applications, it must be performed on zone graphs as well. 
Due to the limited precision of zone graphs, the outcomes of bisimilarity checks are neither reliable~\cite{WeiseUndLenzke1997} nor do they provide any further \emph{explanation} of the outcome~\cite{ClarkeEtAl,Pearl_2009}. 
%
The first problem is tackled in in~\cite{FullReport} by extending zone graphs with virtual clocks, while the second issue is unsolved until now, although being particularly relevant for many applications in which fine-grained insights into behavioral differences are essential. 
For example, Cort\'{e}s et al.~\cite{CortesEtAl} propose to use timed bisimulation in mutation-based testing to check whether a generated mutant of a TA is equivalent to the original model. In such a setting, a counterexample can be used to generate test cases, distinguishing the original model and the mutant \cite{MutationTesting}. Conversely, when a mutant turns out to be equivalent by accident, the developer benefits from a witness that explains why the original model and the mutant exhibit the same behavior. 

\noindent\emph{Contribution.}
In this paper, we use virtual clocks~\cite{FullReport} to facilitate the derivation of explanations of timed bisimilarity checks.
If timed bisimilarity holds, we describe how to derive a complete and finite witness representation from the composed symbolic representation of both models.
If timed bisimilarity does not hold, we describe how to obtain counterexamples, making explicit the behavioral differences. To do so, we introduce the new \emph{Product of TLTS with Virtual Clocks (PTVC)} and \emph{Product of Zone Graphs with Virtual Clocks (PZVC)}.
We provide tool support and present evaluation results gained from applying our tool to community benchmarks. 
The results show that bisimilarity checking with subsequent witness/counterexample generation scales well to larger models.

\noindent\emph{Related Work.}
Timed bisimulation originates from Moller and Tofts~\cite{Moller1990} and Yi~\cite{Yi1990}.
\u{C}er\={a}ns~\cite{Cerans1992} provides the first decidability proof using region graphs. Aceto et al.~\cite{AcetoEtAl2000} describe a formula that is fulfilled if and only if two states are timed bisimilar.
Weise and Lenzkes~\cite{WeiseUndLenzke1997} propose a more space-efficient approach using a variation of zone graphs. 
However, their descriptions lack crucial details and no tool is available. 
Tanimoto et al.~\cite{Tanimoto2004} use timed bisimulation to verify whether behavioral abstractions preserve time-critical properties, but no description of the checking procedure itself is given. 
To the best of our knowledge, the tool in~\cite{FullReport} is the only available tool for effective timed bisimulation checking. 
However, it only gives a yes/no-answer without any further explanation.
Although Wimmer et al.~\cite{wimmer2020} establish a comprehensive theory for explanations in terms of certificates for reachability analysis in timed automata, no comparable approach exists so far for timed bisimulation.

\noindent\emph{Open Science Policy.}
The open-source implementation of our approach and all examples are available at \url{https://github.com/Echtzeitsysteme/tchecker}. We also provide a Virtual Machine at \url{https://doi.org/10.5281/zenodo.20702440}.
\section{Background}
\label{sec:background}

We revisit basic definitions of timed automata~\cite{AlurDill} and timed bisimulation~\cite{Cerans1992}.
We either assume as a time domain $\TimeDomain = \mathbb{N}^{\geq 0}$ for discrete time, or $\TimeDomain = \mathbb{R}^{\geq 0}$ for dense time. 
All results apply in both settings. 
We assume a finite set $C$ of numerical variables over $\TimeDomain$, called \emph{clocks}. 
The values of all clocks in $C$ increase synchronously, but can be reset to zero independently. 
The set $\mathcal{B}(C)$ of \emph{clock constraints} $\phi$ over $C$ is defined inductively as:
%
$\phi := \text{true} \ | \ c \lesseqgtr n \ | \ c - c' \lesseqgtr n \ | \ \phi \land \phi$,
%
with $n \in \mathbb{N}^{\geq 0}$, $\lesseqgtr \ \in \{<, \leq, >, \geq\}$, $\text{ and } c, c' \in C$.
A clock constraint is \emph{atomic} if it does not contain the $\land$-operator. 
%
\begin{definition}[Timed Automaton (TA)~\cite{Henzinger1994}]
  \label{def:background:Timed-Automata:Timed-Automaton}
  A TA is a tuple $(L, l_0, \Sigma, \allowbreak C, I, E)$, where $L$ is a finite set of \emph{locations}, $l_0 \in L$ is an \emph{initial location}, $\Sigma$ is a finite set of \emph{actions}, $C$ is a finite set of \emph{clocks} such that $C \cap \Sigma = \emptyset$, $I: L \rightarrow \mathcal{B}(C)$ is a function assigning \emph{invariants} to locations, and $E \subseteq L \times \mathcal{B}(C) \times \Sigma \times 2^{C} \times L$ is a relation containing \emph{switches}.
  We write $l_1 \TATrans{g}{\sigma}{R} l_2$ for $(l_1, g, \sigma, R, l_2) \in E$.
\end{definition}
To simplify proofs, we assume TA to be \emph{diagonal-free} 
(i.e., no atomic constraint in the TA has the form $c - c' \lesseqgtr n$). 
For any non-diagonal-free TA, a language-equivalent diagonal-free TA can be constructed \cite{Berard1998}. 
We reconsider so-called difference constraints when we define zone graphs later on.

\newcommand{\backgroundTikzFontSize}{\Large}
\newcommand{\backgroundExampleAutomataScalingFactor}{0.45}
\newcommand{\backgroundExampleTAArrowDesc}{\draw[-{Latex[length=3mm]}]}

\begin{example}
  \newcommand{\backgroundExampleTABendFactor}{50}
  \newcommand{\backgroundExTAVertDistance}{1cm}
  \label{ex:background:TA}
    \begin{figure}[tp]
        \centering
        \subfloat[$A_1$\label{fig:exampleA1}] {
            \scalebox{\backgroundExampleAutomataScalingFactor}{
            \begin{tikzpicture}
                \tikzstyle{every node}=[font=\backgroundTikzFontSize]
                \tikzstyle{state} = [draw,circle,minimum size=2cm,inner sep=0pt,semithick]
                \node[state, align=center, initial above, initial text=] (0) {$l_{10}$\\$x_1\leq1$};
                \node[state, align=center, below = \backgroundExTAVertDistance of 0] (1a) {$l_{11}$\\$x_1\leq1$};
                \node[state, align=center, below = \backgroundExTAVertDistance of 1a] (2) {$l_{12}$};
                \backgroundExampleTAArrowDesc (0) --node[left, align=center, xshift=-0.2cm]{$a$} (1a);
                \backgroundExampleTAArrowDesc (1a) --node[left, align=center, yshift=0.2cm]{$b$\\$x_1:=0$} (2);
                \backgroundExampleTAArrowDesc (2) edge [bend right = \backgroundExampleTABendFactor] node[right, align=center] {$c$\\$x_1:=0$} (0);
            \end{tikzpicture}
            }
        }
        \hfill
        \subfloat[$A_2$\label{fig:exampleA2}]{
            \scalebox{\backgroundExampleAutomataScalingFactor}{
            \begin{tikzpicture}
                \tikzstyle{every node}=[font=\backgroundTikzFontSize]
                \tikzstyle{state} = [draw,circle,minimum size=2cm,inner sep=0pt,semithick]
                \node[state, align=center, initial above, initial text=] (0) {$l_{20}$\\$x_2\leq1$};
                \node[state, align=center, below = \backgroundExTAVertDistance of 0] (1a) {$l_{21}$\\$x_2\leq1$};
                \node[state, align=center, below = \backgroundExTAVertDistance of 1a] (2) {$l_{22}$};
                \node[state, align=center, right = \backgroundExTAVertDistance of 2] (3) {$l_{23}$\\$x_2\leq1$};
                \node[state, align=center, right = \backgroundExTAVertDistance of 1a] (4) {$l_{24}$\\$x_2\leq1$};
                \backgroundExampleTAArrowDesc (0) --node[left, align=center]{$a$} (1a);
                \backgroundExampleTAArrowDesc (1a) --node[left, align=center, xshift=-0.2cm]{$b$} (2);
                \backgroundExampleTAArrowDesc (2) --node[below, align=center, yshift=-0.2cm] {$c$\\$x_2:=0$} (3);
                \backgroundExampleTAArrowDesc (3) --node[left, align=center]{$a$} (4);
                \backgroundExampleTAArrowDesc (4) edge [bend left = 10] node[left, align=center, xshift=0.6cm, yshift=1cm]{$b$\\$x_2:=0$} (2);
            \end{tikzpicture}
            }
        }
        \hfill
        \subfloat[$A_3$\label{fig:exampleA3}]{
            \scalebox{\backgroundExampleAutomataScalingFactor}{
            \begin{tikzpicture}
                \tikzstyle{every node}=[font=\backgroundTikzFontSize]
                \tikzstyle{state} = [draw,circle,minimum size=2cm,inner sep=0pt,semithick]
                \node[state, align=center, initial above, initial text=] (0) {$l_{30}$\\$x_3\leq1$};
                \node[state, align=center, below = \backgroundExTAVertDistance of 0] (1a) {$l_{31}$\\$x_3\leq1$};
                \node[state, align=center, below = \backgroundExTAVertDistance of 1a] (2) {$l_{32}$};
                \backgroundExampleTAArrowDesc (0) --node[left, align=center, xshift=-0.2cm]{$a$} (1a);
                \backgroundExampleTAArrowDesc (1a) --node[left, align=center, yshift=0.2cm]{$b$\\$x_3:=0$} (2);
                \backgroundExampleTAArrowDesc (2) edge [bend right = \backgroundExampleTABendFactor] node[right, align=center] {$c$\\$x_3\leq0$} (0);
            \end{tikzpicture}
            }
        }
        \caption{Examples of Timed Automata}
        \label{fig:background:TA:example-TA}
    \end{figure}
    
    Figure~\ref{fig:background:TA:example-TA} shows three TA, each with one clock ($x_1$, $x_2$, or $x_3$), labeled over the same alphabet $\Sigma = \{a, b, c\}$. 
    The initial location of $A_1$ has invariant $x_1 \leq 1$ and an outgoing switch labeled \texttt{a} to $l_{11}$, which also has invariant $x_1 \leq 1$. $l_{11}$ has a single outgoing switch labeled \texttt{b} to $l_{12}$, which resets $x_1$. Finally, location $l_{12}$ has an outgoing switch to the initial location, labeled \texttt{c}, which also resets $x_1$.
    $A_2$ is similar to $A_1$ in its first two locations, except that the switch labeled \texttt{b} does not reset $x_2$. 
    From the third location, the switch labeled \texttt{c} leads to a new location $l_{23}$, which resembles the initial location with invariant $x_2 \leq 1$ and an outgoing switch labeled \texttt{a}. 
    Finally, $l_{24}$ has an outgoing switch labeled \texttt{b} to $l_{22}$, which resets $x_2$.
    $A_3$ differs from $A_1$ primarily in the naming of locations and clocks, as well as in the edge labeled \texttt{c}, which involves a guard instead of a reset.
\end{example}

Clock constraints are evaluated by \emph{clock valuations} $\ClockValuation{}: C \rightarrow \TimeDomain$, which assign a value $\ClockValuation{}(c) \in \TimeDomain$ to each clock $c \in C$. 
By $\ClockValuation{} + d$, we denote the clock valuation that assigns to each clock $c$ the value $\ClockValuation{}(c) + d$. 
For a set $R \subseteq C$, the clock valuation $\ClockValuation{}' = [R \rightarrow 0]\ClockValuation{}$ assigns to each clock $c \in R$ the value $\ClockValuation{}'(c) = 0$ and for each clock $c \in C \setminus R$ the value $\ClockValuation{}'(c) = \ClockValuation{}(c)$. 
We denote the clock valuation $u$ with $\forall c \in C : u(c) = 0$ with $[C \rightarrow 0]$. 
To check whether a clock valuation $\ClockValuation{}$ satisfies a $\phi \in \mathcal{B}(C)$, denoted $\ClockValuation{} \models \phi$, we replace all occurrences of any clock $c \in C$ in $\phi$ with its value $\ClockValuation{}(c)$ and check whether the result is true. 
We next define the operational semantics of TA as a \emph{Timed Labeled Transition Systems (TLTS)}.

\begin{definition}[TLTS]
    \label{def:background:TLTS}
    Let $A$ be a TA. The \emph{TLTS} $\TLTS{A}$ of $A$ is a tuple $(\TLTSAllStates{}, (l_0, [C \rightarrow 0]), \Sigma \cup \TimeDomain, \TLTSTrans{})$, where $\TLTSAllStates{}= L \times (C \rightarrow \TimeDomain)$ is a set of states, $(l_0, [C \rightarrow 0]) \in \TLTSAllStates{}$ is the initial state, $\Sigma \cup \TimeDomain$ is a set of transition labels, where $\Sigma \cap \TimeDomain = \emptyset$, and $\TLTSTrans{} \ \subseteq \TLTSAllStates{} \times \Sigma \cup \TimeDomain \times \TLTSAllStates{}$ (transitions) is the least relation s.t.
    \begin{compactitem}
    \item $\TLTSState{}{} \TLTSTrans{d} (l, \ClockValuation{}+d)$ if $(\ClockValuation{}+d) \models I(l)$ for $d \in \TimeDomain$, and
    \item $\TLTSState{1}{} \TLTSTrans{\sigma} \TLTSState{2}{}$ if $l_1 \TATrans{g}{\sigma}{R} l_{2}$ with $\ClockValuation{1} \models g$, $\ClockValuation{2} = [R \rightarrow 0]\ClockValuation{1}$, $\ClockValuation{2} \models I(l_2)$, and $\sigma \in \Sigma$. 
    \end{compactitem}
\end{definition}    

To compare two TA, we define \emph{timed bisimulation} on their TLTS~\cite{Lynch1992,Cerans1992}.

\begin{definition}[Timed Bisimulation]
    \label{def:background:Strong-Timed-Bisimulation}
    Let $A$ and $B$ be TA over $\Sigma$. We denote $\TLTS{A} = (\TLTSAllStates{A}, \TLTSState{A, 0}, \Sigma \cup \TimeDomain, \TLTSTrans{}_A)$ and $\TLTS{B} = (\TLTSAllStates{B}, \TLTSState{B, 0}, \Sigma \cup \TimeDomain, \TLTSTrans{}_B)$, respectively. 
    A Relation $R \subseteq \TLTSAllStates{A} \times \TLTSAllStates{B}$ is a \textit{timed bisimulation}, iff for all $(\TLTSState{A, 1}, \TLTSState{B, 1}) \in R$ it holds that 
    \begin{compactenum}
    \item if $\TLTSState{A, 1} \TLTSTrans{\mu}_A \TLTSState{A, 2}$ for $\mu \in \Sigma \cup \TimeDomain$, then there exists a $\TLTSState{B, 1} \TLTSTrans{\mu}_B \TLTSState{B, 2}$ with $(\TLTSState{A, 2}, \TLTSState{B, 2}) \in R$, and
    \item if $\TLTSState{B, 1} \TLTSTrans{\mu}_B \TLTSState{B, 2}$ for $\mu \in \Sigma \cup \TimeDomain$, then there exists a $\TLTSState{A, 1} \TLTSTrans{\mu}_A \TLTSState{A, 2}$ with $(\TLTSState{A, 2}, \TLTSState{B, 2}) \in R$.
    \end{compactenum}
    \noindent $A$ is timed bisimilar to $B$, denoted $A \sim B$, iff there exists a timed bisimulation $R$ with $(\TLTSState{A, 0}, \TLTSState{B, 0}) \in R$.
\end{definition}
\begin{example}
\label{ex:background:TLTS}
\newcommand{\backgroundExampleTLTSArrowDesc}{\draw[-{Latex[length=2mm]}]}
\newcommand{\backgroundExTLTSVertDistance}{2}
\newcommand{\backgroundExTLTSHorDistance}{2.5}
\begin{figure}[pt]
\centering
    \subfloat[$A_{\text{TLTS}, 1}$\label{subfigure:background:TLTS:example:A-1}]{
        \scalebox{\backgroundExampleAutomataScalingFactor}{
        \begin{tikzpicture}
            \tikzstyle{every node}=[font=\backgroundTikzFontSize]
            \tikzstyle{state} = [draw,rectangle,rounded corners, semithick, align=center]
            \node (l00) [state, initial, initial text = {}, align=center] at (0, 0) {$l_{10}$\\$x_1 = 0$};
            \node (l01) [state] at (\backgroundExTLTSHorDistance, 0) {$l_{10}$\\$x_1 = 1$};
            \node (l10b) [state] at (0, -\backgroundExTLTSVertDistance) {$l_{11}$\\$x_1 = 0$};
            \node (l11b) [state] at (\backgroundExTLTSHorDistance, -\backgroundExTLTSVertDistance) {$l_{11}$\\$x_1 = 1$};
            \node (l20) [state] at (0, -2*\backgroundExTLTSVertDistance) {$l_{12}$\\$x_1 = 0$};
            \node (l21) [state] at (\backgroundExTLTSHorDistance, -2*\backgroundExTLTSVertDistance) {$l_{12}$\\$x_1 = 1$};
            \node (l22) [rectangle] at (2*\backgroundExTLTSHorDistance, -2*\backgroundExTLTSVertDistance) {...};
            \backgroundExampleTLTSArrowDesc (l00) -- (l01) node[anchor=south, midway]{1};
            \backgroundExampleTLTSArrowDesc (l10b) -- (l11b) node[anchor=south, midway]{1};
            \backgroundExampleTLTSArrowDesc (l20) -- (l21) node[anchor=south, midway]{1};
            \backgroundExampleTLTSArrowDesc (l21) -- (l22) node[anchor=south, midway]{1};
            \backgroundExampleTLTSArrowDesc (l00) -- (l10b) node[anchor=east, midway]{a};
            \backgroundExampleTLTSArrowDesc (l10b) -- (l20) node[anchor=east, midway]{b};
            \backgroundExampleTLTSArrowDesc (l01) -- (l11b) node[anchor=east, midway]{a};
            \backgroundExampleTLTSArrowDesc (l11b) -- (l20) node[anchor=south east, midway]{b};
            \draw (l20) -- (2*\backgroundExTLTSHorDistance, -\backgroundExTLTSVertDistance);
            \draw (l21) -- (2*\backgroundExTLTSHorDistance, -\backgroundExTLTSVertDistance);
            \backgroundExampleTLTSArrowDesc (l22) -- node[anchor=west, midway]{c} (2*\backgroundExTLTSHorDistance, 0.5*\backgroundExTLTSVertDistance) -- (0.5*\backgroundExTLTSHorDistance, 0.5*\backgroundExTLTSVertDistance) -- (l00);
        \end{tikzpicture}
        }
    }
    \hfill
    \subfloat[$A_{\text{TLTS}, 2}$\label{subfigure:background:TLTS:example:A-2}]{
        \scalebox{\backgroundExampleAutomataScalingFactor}{
        \begin{tikzpicture}
            \tikzstyle{every node}=[font=\backgroundTikzFontSize]
            \tikzstyle{state} = [draw,rectangle,rounded corners, semithick, align=center]
            \node (l00) [state, initial, initial text = {}] at (0, 0) {$l_{20}$\\$x_2 = 0$};
            \node (l01) [state] at (\backgroundExTLTSHorDistance, 0) {$l_{20}$\\$x_2 = 1$};
            \node (l10b) [state] at (0, -\backgroundExTLTSVertDistance) {$l_{21}$\\$x_2 = 0$};
            \node (l11b) [state] at (\backgroundExTLTSHorDistance, -\backgroundExTLTSVertDistance) {$l_{21}$\\$x_2 = 1$};
            \node (l20) [state] at (0, -2*\backgroundExTLTSVertDistance) {$l_{22}$\\$x_2 = 0$};
            \node (l21) [state] at (\backgroundExTLTSHorDistance, -2*\backgroundExTLTSVertDistance) {$l_{22}$\\$x_2 = 1$};
            \node (l22) [rectangle] at (2*\backgroundExTLTSHorDistance, -2*\backgroundExTLTSVertDistance) {...};
            \node (l40) [state] at (0, -3*\backgroundExTLTSVertDistance) {$l_{24}$\\$x_2 = 0$};
            \node (l41) [state] at (\backgroundExTLTSHorDistance, -3*\backgroundExTLTSVertDistance) {$l_{41}$\\$x_2 = 1$};  
            \node (l30) [state] at (0, -4*\backgroundExTLTSVertDistance) {$l_{23}$\\$x_2 = 0$};
            \node (l31) [state] at (\backgroundExTLTSHorDistance, -4*\backgroundExTLTSVertDistance) {$l_{23}$\\$x_2 = 1$};
            \backgroundExampleTLTSArrowDesc (l00) -- (l01) node[anchor=south, midway]{1};
            \backgroundExampleTLTSArrowDesc (l10b) -- (l11b) node[anchor=south, midway]{1};
            \backgroundExampleTLTSArrowDesc (l20) -- (l21) node[anchor=south, midway]{1};
            \backgroundExampleTLTSArrowDesc (l21) -- (l22) node[anchor=south, midway]{1};
            \backgroundExampleTLTSArrowDesc (l30) -- (l31) node[anchor=south, midway]{1};
            \backgroundExampleTLTSArrowDesc (l40) -- (l41) node[anchor=south, midway]{1};
            \backgroundExampleTLTSArrowDesc (l00) -- (l10b) node[anchor=east, midway]{a};
            \backgroundExampleTLTSArrowDesc (l10b) -- (l20) node[anchor=east, midway]{b};
            \backgroundExampleTLTSArrowDesc (l01) -- (l11b) node[anchor=east, midway]{a};
            \backgroundExampleTLTSArrowDesc (l11b) -- (l21) node[anchor=east, midway]{b};
            \backgroundExampleTLTSArrowDesc (l30) -- (l40) node[anchor=east, midway]{a};
            \backgroundExampleTLTSArrowDesc (l40) -- (l20) node[anchor=east, midway]{b};
            \backgroundExampleTLTSArrowDesc (l31) -- (l41) node[anchor=east, midway]{a};
            \backgroundExampleTLTSArrowDesc (l41) -- (l20) node[anchor=north east, midway]{b};
            \backgroundExampleTLTSArrowDesc (l22) -- node[anchor=west, midway]{c} (2*\backgroundExTLTSHorDistance, -4.5*\backgroundExTLTSVertDistance) -- (0.5*\backgroundExTLTSHorDistance, -4.5*\backgroundExTLTSVertDistance) -- (l30);
            \draw (l20) -- (2*\backgroundExTLTSHorDistance, -3*\backgroundExTLTSVertDistance);
            \draw (l21) -- (2*\backgroundExTLTSHorDistance, -3*\backgroundExTLTSVertDistance);
        \end{tikzpicture}
        }
    }
    \hfill
    \subfloat[$A_{\text{TLTS}, 3}$\label{subfigure:background:TLTS:example:A-3}]{
        \scalebox{\backgroundExampleAutomataScalingFactor}{
        \begin{tikzpicture}
            \tikzstyle{every node}=[font=\backgroundTikzFontSize]
            \tikzstyle{state} = [draw,rectangle,rounded corners, semithick, align=center]
            \node (l00) [state, initial, initial text = {}] at (0, 0) {$l_{30}$\\$x_3 = 0$};
            \node (l01) [state] at (\backgroundExTLTSHorDistance, 0) {$l_{30}$\\$x_3 = 1$};
            \node (l10b) [state] at (0, -\backgroundExTLTSVertDistance) {$l_{31}$\\$x_3 = 0$};
            \node (l11b) [state] at (\backgroundExTLTSHorDistance, -\backgroundExTLTSVertDistance) {$l_{31}$\\$x_3 = 1$};
            \node (l20) [state] at (0, -2*\backgroundExTLTSVertDistance) {$l_{32}$\\$x_3 = 0$};
            \node (l21) [state] at (\backgroundExTLTSHorDistance, -2*\backgroundExTLTSVertDistance) {$l_{32}$\\$x_3 = 1$};
            \node (l22) [rectangle] at (2*\backgroundExTLTSHorDistance, -2*\backgroundExTLTSVertDistance) {...};
            \backgroundExampleTLTSArrowDesc (l00) -- (l01) node[anchor=south, midway]{1};
            \backgroundExampleTLTSArrowDesc (l10b) -- (l11b) node[anchor=south, midway]{1};
            \backgroundExampleTLTSArrowDesc (l20) -- (l21) node[anchor=south, midway]{1};
            \backgroundExampleTLTSArrowDesc (l21) -- (l22) node[anchor=south, midway]{1};
            \backgroundExampleTLTSArrowDesc (l00) -- (l10b) node[anchor=east, midway]{a};
            \backgroundExampleTLTSArrowDesc (l10b) -- (l20) node[anchor=east, midway]{b};
            \backgroundExampleTLTSArrowDesc (l01) -- (l11b) node[anchor=east, midway]{a};
            \backgroundExampleTLTSArrowDesc (l11b) -- (l20) node[anchor=south east, midway]{b};
            \path[spath/save=cross 2] (l20) -- (1.5*\backgroundExTLTSHorDistance, -1.25*\backgroundExTLTSVertDistance) -- node[anchor=west, midway]{c} (1.5*\backgroundExTLTSHorDistance, 0.5*\backgroundExTLTSVertDistance) -- (0.5*\backgroundExTLTSHorDistance, 0.5*\backgroundExTLTSVertDistance) -- (l00);
            \backgroundExampleTLTSArrowDesc[spath/use=cross 2];
      \end{tikzpicture}
      }
    }
\caption{Extracts of the TLTS of the TA of Figure \ref{fig:background:TA:example-TA}}
\label{fig:background:TLTS:example}
\end{figure}

We compare the TA from Example~\ref{ex:background:TA} by considering the excerpt of their respective TLTS, shown in Figure~\ref{fig:background:TLTS:example}.
In $A_{\text{TLTS}, 1}$, the initial state $(l_{10}, x_1 = 0)$ 
can delay up to one time unit (due to the invariant of $l_{10}$) 
or take the transition labeled \texttt{a}, followed by a delay or \texttt{b}. 
At location $l_{12}$, infinite delays are possible, or a transition labeled \texttt{c} can be used to reach the initial state again. 
Analogously, $A_{\text{TLTS}, 2}$ has an equivalent beginning, except that there is no reset at the transition labeled \texttt{b} and the state $(l_{21}, x_2 = 1)$ has an outgoing transition to $(l_{22}, x_2 = 1)$. 
The part concerning $l_{23}$ and $l_{24}$ of $A_{\text{TLTS}, 2}$ is analogous to $A_{\text{TLTS}, 1}$. $A_{\text{TLTS}, 3}$ is equivalent to $A_{\text{TLTS}, 1}$, but has only one state with an outgoing transition labeled \texttt{c}.
To show $A_1 \sim A_2$, we construct a timed bisimulation $R_{1, 2}$ with $((l_{10}, x_1 = 0), (l_{20}, x_2 = 0)) \in R_{1, 2}$. 
By Def.~\ref{def:background:Strong-Timed-Bisimulation}, this implies $((l_{10}, x_1 = 1), (l_{20}, x_2 = 1)) \in R_{1, 2}$, $((l_{11}, x_1 = 0), (l_{21}, x_2 = 0)) \in R_{1, 2}$, and $((l_{11}, x_1 = 1), (l_{21}, x_2 = 1)) \in R_{1, 2}$.
Iteratively applying this procedure confirms that $R_{1, 2}$ fulfills the conditions of Def.~\ref{def:background:Strong-Timed-Bisimulation}. 
In contrast, assuming a timed bisimulation $R_{1, 3}$ with $((l_{10}, x_1 = 0), (l_{30}, x_3 = 0)) \in R_{1, 3}$ implies $((l_{12}, x_1 = 1), (l_{32}, x_3 = 1)) \in R_{1, 3}$, which contradicts the assumption that $R_{1, 3}$ is a timed bisimulation since $(l_{12}, x_1 = 1)$ has a transition labeled \texttt{c}, which is not the case for $(l_{32}, x_3 = 1)$. 
Hence, no timed bisimulation that contains $((l_{10}, x_1 = 0), (l_{30}, x_3 = 0))$ exists and $A_1 \not\sim A_3$. The reader might recognizes that the actual TLTS are infinitely large (even uncountable in dense time model).
\end{example}

Since TLTS are infinitely large they are not tractable by analysis tools. 
This problem is solved by symbolic TA semantics. 
A \emph{symbolic state} consists of a location and a \emph{zone}. 
A zone comprises a set of clock valuations, where we denote $2^{C \rightarrow \TimeDomain} = \mathcal{D}(C)$. $\TLTSState{A} \in \ZGState{B}$ holds if and only if $l_A = l_B$ and $\ClockValuation{A} \in \Zone{B}$. 
A zone $\Zone{} \in \mathcal{D}(C)$ is typically represented by a clock constraint $\phi\in\mathcal{B}(C)$, such that $\Zone{} = \{\ClockValuation{} \ | \ \ClockValuation{} \in (C \rightarrow \TimeDomain) \wedge \ClockValuation{} \models \phi\}$. 
Assume a zone $\Zone{}\in \mathcal{D}(C)$, a clock constraint $\phi_{cc} \in \mathcal{B}(C)$, and any $R \subseteq C$. We use the operations $\Zone{} \land \phi_{cc} = \{u \ | \ \ClockValuation{} \in \Zone{} \land \ClockValuation{} \models \phi_{cc}\}$ ($\land$-operation), $\Zone{}^{\uparrow} = \{\ClockValuation{}+d \ | \ d \in \TimeDomain \land \ClockValuation{} \in \Zone{}\}$ (future-operation), and $R(\Zone{}) = \{[R \rightarrow 0] \ClockValuation{} \ | \ \ClockValuation{} \in \Zone{}\}$ (reset-operation).
The following definition is a slightly adapted definition of the zone graphs of Bengtsson and Yi~\cite{Bengtsson2004}. A transition labeled $\varepsilon \not\in \Sigma$ is used to represent delay transitions~\cite{Yi1995}.

\begin{definition}[Zone Graph]
\label{def:background:zonegraphs}
Let $A$ be a TA. 
A \emph{zone graph} $\ZG{A} = (\ZGAllStates{}, \ZGState{0}{}, \Sigma \cup \{\varepsilon\}, \ZGTrans{})$ 
of $A$ is a transition system, where $\ZGAllStates{} = L \times \mathcal{D}(C)$ is the set of symbolic states, $(l_0, \{[C \rightarrow 0]\}) \in \ZGAllStates{}$ is the initial symbolic state, and $\ZGTrans{} \ \subseteq \ZGAllStates{} \times \Sigma \cup \{\varepsilon\} \times \ZGAllStates{}$ is the least relation satisfying the rules:
\begin{compactitem}
    \item $\ZGState{}{} \ZGTrans{\varepsilon} (l, \Zone{}^{\uparrow} \land I(l))$, and
    \item $\ZGState{1}{} \ZGTrans{\sigma} (l_2, R(\Zone{1} \land g) \land I(l_2))$, if $l_1 \TATrans{g}{\sigma}{R} l_{2}$ and $R(\Zone{1} \land g) \land I(l_2) \neq \emptyset$.
\end{compactitem}
\end{definition}
While zone graphs can be used for checking reachability of TA states, they do not preserve enough information 
of the TA to check for timed bisimulation~\cite{WeiseUndLenzke1997}.
\begin{example}
\label{ex:background:ZG}
\newcommand{\backgroundExampleZoneGraphScalingFactor}{0.45}
\newcommand{\backgroundExZGHorDistance}{2}
\newcommand{\backgroundExZGVertDistance}{1}
\begin{figure}[tp]
    \centering
    \scalebox{\backgroundExampleZoneGraphScalingFactor}{
    \begin{tikzpicture}
        \tikzstyle{every node}=[font=\backgroundTikzFontSize]
        \tikzstyle{symstate} = [draw,rectangle,minimum width=3.5cm,minimum height=1.5cm,inner sep=5pt,thick]
        %
        \node[symstate, align=center, initial left, initial text=] (00) {$l_{10}$\\$x_1=0$};
        \node[symstate, align=center, right = \backgroundExZGVertDistance of 00] (01) {$l_{10}$\\$x_1\leq1$};
        %
        \node[symstate, align=center, right = \backgroundExZGVertDistance of 01] (10) {$l_{11}$\\$x_1=0$};
        \node[symstate, align=center, right = \backgroundExZGVertDistance of 10] (11) {$l_{11}$\\$x_1\leq1$};
        %
        \node[symstate, align=center, right = \backgroundExZGVertDistance of 11] (20) {$l_{12}$\\$x_1=0$};
        \node[symstate, align=center, right = \backgroundExZGVertDistance of 20] (21) {$l_{12}$\\$\text{true}$};
        \backgroundExampleTAArrowDesc (00) --node[above, align=center]{$\varepsilon$} (01);
        \backgroundExampleTAArrowDesc (01) to[loop below] node[above, align=center]{$\varepsilon$} (01);
        \backgroundExampleTAArrowDesc (10) --node[above, align=center]{$\varepsilon$} (11);
        \backgroundExampleTAArrowDesc (11) to[loop below] node[above, align=center]{$\varepsilon$} (11);
        \backgroundExampleTAArrowDesc (20) --node[above, align=center]{$\varepsilon$} (21);
        \backgroundExampleTAArrowDesc (21) to[loop below] node[above, align=center]{$\varepsilon$} (21);
        %
        %
        \node[state, minimum size=0, inner sep=0, draw = none, above right = 0.5*\backgroundExZGVertDistance and 0.3*\backgroundExZGHorDistance of 00] (from00to10first){};
        \node[state, minimum size=0, inner sep=0, draw = none, above left = 0.5*\backgroundExZGVertDistance and 0.3*\backgroundExZGHorDistance of 10] (from00to10second){};
        \path[spath/save=cross] ($(00.north east) - (0.1*\backgroundExZGHorDistance, 0)$) |- (from00to10first) -- node[below, align=center]{a} (from00to10second) -| ($(10.north west) + (0.1*\backgroundExZGHorDistance, 0)$);
        \backgroundExampleTAArrowDesc[spath/use=cross];
        \node[state, minimum size=0, inner sep=0, draw = none, below right = 0.4*\backgroundExZGVertDistance and 0.3*\backgroundExZGHorDistance of 01] (from01to11first){};
        \node[state, minimum size=0, inner sep=0, draw = none, below left = 0.4*\backgroundExZGVertDistance and 0.3*\backgroundExZGHorDistance of 11] (from01to11second){};
        \path[spath/save=cross3] ($(01.south east) - (0.1*\backgroundExZGHorDistance, 0)$) |- (from01to11first) -- node[below, align=center]{a} (from01to11second) -| ($(11.south west) + (0.1*\backgroundExZGHorDistance, 0)$);
        \backgroundExampleTAArrowDesc[spath/use=cross3];
        %
        \node[state, minimum size=0, inner sep=0, draw = none, above right = 0.5*\backgroundExZGVertDistance and 0.3*\backgroundExZGHorDistance of 10] (from10to20first){};
        \node[state, minimum size=0, inner sep=0, draw = none, above left = 0.5*\backgroundExZGVertDistance and 0.3*\backgroundExZGHorDistance of 20] (from10to20second){};
        \path[spath/save=cross4] ($(10.north east) - (0.1*\backgroundExZGHorDistance, 0)$) |- (from10to20first) -- node[below, align=center]{b} (from10to20second) -| ($(20.north west) + (0.1*\backgroundExZGHorDistance, 0)$);
        \backgroundExampleTAArrowDesc[spath/use=cross4];
        \backgroundExampleTAArrowDesc (11) --node[above, align=center]{b} (20);
        %
        \node[state, minimum size=0, inner sep=0, draw = none, above = 0.7*\backgroundExZGVertDistance of 00] (above00){};
        \path[spath/save=cross 2] (21) |- (above00) -| (00);
        \backgroundExampleTAArrowDesc[spath/use=cross 2];
        \draw (20) |-node[above, align=center]{c} (above00);
      \end{tikzpicture}
    }
    \caption{Zone Graph of $A_1$.}
    \label{fig:background:example-zg}
\end{figure} 
Applying Def.~\ref{def:background:zonegraphs} to $A_1$ yields the zone graph in Fig.~\ref{fig:background:example-zg}. 
$A_3$ has the exact same zone graph up-to renaming of locations and the clock although $A_1$ and $A_3$ not bisimilar.
\end{example}
Zone graphs lack two important properties 
for bisimulation checking: (1) prone to false positives due to
subtle effects of clock resets and (2) they do not contain sufficient information to explain the outcome. 
Issue (1) is addressed in~\cite{FullReport} by adding \emph{virtual clocks} into zone graphs, whereas (2) is addressed in the remainder.
\section{Finding Witnesses}
\label{chap:main:finding-witnesses}

%
%
We describe how to represent in a finite way 
a timed bisimulation between two TA.
For this, we first describe a set-based construction of 
bisimulation witnesses in a single unified graph based on
\emph{synchronous product of TLTS}~\cite{Nicollin1994}.
We extend the product by a set of \emph{virtual clocks}, called PTVC.
Let $C_{AB}$ be the disjoint union of the set clocks $A$, $B$ of two TA. 
Each original clock $c \in C_{AB}$ is paired with exactly one virtual clock $\chi(c) \in \chi(C_{AB})$ assuming $\chi(C_{AB}) \cap C_{AB} = \emptyset$, which initially has the same interpretation as its original clock. 
We decompose action transitions into two steps. 
First, only the original clocks, which are reset at action transitions of the TA, are also reset in the PTVC. 
By leaving the virtual clocks unaffected, they preserve the missing information 
required to correctly distinguish non-bisimilar states (see Example~\ref{ex:background:ZG}).
Second, virtual clocks are reset before 
the next transition takes place to synchronize with their original clocks. 

\newcommand{\newdefTikzFontSize}{\Large}
\begin{example}
  \label{ex:main:virtual-bisimulation:init-example}
  \newcommand{\newdefExampleProductScalingFactor}{0.45}
  \newcommand{\newdefProdHorDistance}{1.2}
  \newcommand{\newdefProdVertDistance}{0.5}
  \newcommand{\newdefExampleProdArrowDesc}{\draw[-{Latex[length=2mm]}]}
  \begin{figure}[tp]
    \centering
    \scalebox{\newdefExampleProductScalingFactor}{
      \begin{tikzpicture}
        \tikzstyle{every node}=[font=\newdefTikzFontSize]
        \tikzstyle{state} = [draw,rectangle,rounded corners, inner sep=1pt, text width=4.7cm, minimum height=1.9cm, semithick, align=center]
        \node[state, initial left, initial text=] (10300) {$l_{10}, l_{30}$\\$x_{1}=x_{3}=$ \\ $\chi(x_1)=\chi(x_3)=0$};
        \node[state, right = \newdefProdHorDistance of 10300] (10301) {$l_{10}, l_{30}$\\$x_{1}=x_{3}=$ \\ $\chi(x_1)=\chi(x_3)=1$};
        \node[state, right = \newdefProdHorDistance of 10301] (11310) {$l_{11}, l_{31}$\\$x_{1}=x_{3}=1$ \\ $\chi(x_1)=\chi(x_3)=1$};
        \node[state, right = \newdefProdHorDistance of 11310] (11311) {$l_{11}, l_{31}$\\$x_{1}=x_{3}=$ \\ $\chi(x_1)=\chi(x_3)=0$};
        \node[state, below = \newdefProdVertDistance of 10301] (12321) {$l_{12}, l_{32}$\\$x_{1}=x_{3}=0$ \\ $\chi(x_1)=\chi(x_3)=1$};
        \node[state, right = \newdefProdHorDistance of 12321] (12320) {$l_{12}, l_{32}$\\$x_{1}=x_{3}=0$ \\ $\chi(x_1)=\chi(x_3)=0$};
        \node[state, right = \newdefProdHorDistance of 12320] (12322) {$l_{12}, l_{32}$\\$x_{1}=x_{3}=1$ \\ $\chi(x_1)=\chi(x_3)=1$};
        \node[rectangle, align=center, right = \newdefProdHorDistance of 12322] (12323) {\dots};
        %
        \newdefExampleProdArrowDesc (10300) --node[above, align=center]{1} (10301);
        \node[rectangle, minimum size=0, inner sep=0, draw = none, above = 1*\newdefProdVertDistance of 10300] (above10300){};
        \node[rectangle, minimum size=0, inner sep=0, draw = none, above = 1*\newdefProdVertDistance of 11311] (above11311){};
        \path[spath/save=cross1] (10300) |- (above10300) --node[above, align=center]{a} (above11311) -|  (11311);
        \backgroundExampleTAArrowDesc[spath/use=cross1];
        \newdefExampleProdArrowDesc (10301) --node[above, align=center]{a} (11310);
        %
        \newdefExampleProdArrowDesc (11311) --node[above, align=center]{1} (11310);
        \newdefExampleProdArrowDesc (11311) --node[right, align=center, xshift=0.5cm]{b} (12320);
        \newdefExampleProdArrowDesc (11310) --node[right, align=center, xshift=0.5cm]{b} (12321);
        %
        \newdefExampleProdArrowDesc (12321) --node[above, align=center]{sync} (12320);
        \newdefExampleProdArrowDesc (12322) --node[above, align=center]{1} (12323);
        \newdefExampleProdArrowDesc (12320) --node[above, align=center]{1} (12322);
        %
        \node[rectangle, minimum size=0, inner sep=0, draw = none, below = 0.5*\newdefProdVertDistance of 12320] (below12320){};
        \path[spath/save=cross2] (12320) |- (below12320) -|node[above, xshift=10*\newdefProdHorDistance, align=center]{c} (10300);
        \backgroundExampleTAArrowDesc[spath/use=cross2];
      \end{tikzpicture}
    }
  \caption{\PTVC{} of $A_1$ and $A_3$.}
  \label{fig:main:virtual-bisimulation:product-synchronizable-tlts:example}
  \end{figure}
  Figure~\ref{fig:main:virtual-bisimulation:product-synchronizable-tlts:example} shows the \PTVC{} of $A_1$ and $A_3$ (only delay transitions for $d=1$). 
  In the initial state, all clocks are set to zero. 
  As both TA have enabled \texttt{a}-transitions, the initial state has an \texttt{a}-transition to the state with $(l_{11}, l_{31})$. 
  A $1$-delay is also possible in both TA, resulting in a \texttt{1}-labeled transition
  after which both TA, again, have an \texttt{a}-transition. 
  This \texttt{a}-transition leads to a state in which only a \texttt{b}-transition is enabled.
  In the next state, the original clocks are set to zero due to the clock resets of the switches while the virtual clock values remain unchanged. 
  Next, a \texttt{sync}-transition is added which targets the state in which the values of the virtual clocks are also set to zero. 
  After another delay, we reach the state with location pair $(l_{12}, l_{32})$ and all clocks are set to $1$. 
  Here, $A_1$ enables a \texttt{c}-transition, but $A_3$ does not, so the state of the product has no \texttt{c}-transition.
\end{example}


\begin{definition}[\PTVC{}]
  \label{def:main:virtual-bisimulation:product-synchronizable-tlts}
  The \PTVC{} of two TA $A$ and $B$ is a labeled transition system $(\STLTSAllStates{}, \allowbreak s_{AB, 0}, \Sigma \cup \TimeDomain \cup \{\text{sync}\}, \STLTSTrans{})$. $\STLTSAllStates{} \subseteq (L_A \times L_B) \times (C_{AB} \cup \chi(C_{AB}) \rightarrow \TimeDomain)$ is the set of states $((l_A, l_B), \ClockValuation{})$ with $\forall c \in C_{AB} : (\ClockValuation{}(c) = \ClockValuation{}(\chi(c)) \lor \ClockValuation{}(c) = 0)$. We denote $\STLTSAllStates{s} \subseteq \STLTSAllStates{}$ for the set of all states $((l_A, l_B), \ClockValuation{s}) \in \STLTSAllStates{}$ with $\forall c \in C_{AB} : \ClockValuation{}(c) = \ClockValuation{}(\chi(c))$. Moreover, $s_{AB, 0} = ((l_{A, 0}, l_{B, 0}), \allowbreak [C_{AB} \cup \chi(C_{AB}) \rightarrow 0]) \in \STLTSAllStates{s}$ is the initial state, $\Sigma \cup \TimeDomain \cup \{\text{sync}\}$ is the set of transition labels, and $\STLTSTrans{} \ \subseteq \STLTSAllStates{} \times \Sigma \cup \TimeDomain \cup \{\text{sync}\} \times \STLTSAllStates{}$ is a set of transitions, which is the least relation satisfying
  \begin{compactitem}
  \item $((l_A, l_B), \ClockValuation{}) \STLTSTrans{\text{sync}} ((l_A, l_B), \ClockValuation{\text{sync}})$, if $((l_A, l_B), \ClockValuation{}) \not\in \STLTSAllStates{s}$, $\forall c \in C_{AB} : \ClockValuation{}(c) = \ClockValuation{\text{sync}}(c)$, and $((l_A, l_B), \ClockValuation{\text{sync}}) \in \STLTSAllStates{s}$,
  \item $((l_A, l_B), \ClockValuation{}) \STLTSTrans{d} ((l_A, l_B), \ClockValuation{}+d)$ if $((l_A, l_B), \ClockValuation{}) \in \STLTSAllStates{s}$, $(\ClockValuation{}+d) \models I(l_A)$, and $(\ClockValuation{} + d) \models I(l_B)$ for $d \in \TimeDomain$, and
  \item $((l_A, l_B), \ClockValuation{}) \STLTSTrans{\sigma} ((l_{A, 1}, l_{B, 1}), \ClockValuation{1})$ if $((l_A, l_B), \ClockValuation{}) \in \STLTSAllStates{s}$, $l_A \TATrans{g_A}{\sigma}{R_A} l_{A, 1}$ with $\ClockValuation{} \models g_A$, $l_B \TATrans{g_B}{\sigma}{R_B} l_{B, 1}$ with $\ClockValuation{} \models g_B$, $\ClockValuation{1} = [R_A \cup R_B \rightarrow 0]\ClockValuation{}$, $\ClockValuation{1} \models I(l_{A, 1})$,  $\ClockValuation{1} \models I(l_{B, 1})$, and $\sigma \in \Sigma$. 
  \end{compactitem}
\end{definition}
We call a state $((l_A, l_B), \ClockValuation{}) \in \STLTSAllStates{s}$ \textit{synchronized}.
The target of a delay transition of a synchronized state is also synchronized, as delay transitions cannot perform clock resets, while the target of an action transition might not be synchronized due to resets.
Based on \PTVC, we define \emph{virtual bisimulation}.

\begin{definition}[Virtual Bisimulation]
  \label{def:main:virtual-bisimulation:virtual-bisimulation}
  Given two TA $A$, $B$, with their \PTVC{} $(\STLTSAllStates{}, s_{AB, 0}, \Sigma \cup \TimeDomain \cup \{\text{sync}\}, \STLTSTrans{})$. A relation $R_{\text{virt}} \subseteq \STLTSAllStates{s}$ is called a virtual bisimulation, if and only if for all elements $((l_A, l_B), \ClockValuation{}) \in R_{\text{virt}}$ it holds that 
  \begin{compactenum}
    \item for any $d \in \TimeDomain$ with $\ClockValuation{} + d \models I(l_A)$, there exists a transition $((l_A, l_B), \ClockValuation{}) \STLTSTrans{d} ((l_A, l_B), \ClockValuation{} + d)$ and $((l_A, l_B), \ClockValuation{} + d) \in R_{\text{virt}}$,
    \item if there exists a transition $l_A \TATrans{g_A}{\sigma}{R_A} l_{A, 1}$ with $\ClockValuation{} \models g_A$ and $[R_A \rightarrow 0] \ClockValuation{} \models I(l_{A, 1})$, then there exists a transition $((l_A, l_B), \ClockValuation{}) \STLTSTrans{\sigma} ((l_{A, 1}, l_{B, 1}), \ClockValuation{1})$ with $\forall c \in C_A : [R_A \rightarrow 0] \ClockValuation{}(c) = \ClockValuation{1}(c)$ and either
    \begin{enumerate}
      \item $((l_{A, 1}, l_{B, 1}), \ClockValuation{1})$ is synchronized and $((l_{A, 1}, l_{B, 1}), \ClockValuation{1}) \in R_{\text{virt}}$ holds, or
      \item $((l_{A, 1}, l_{B, 1}), \ClockValuation{1})$ is not synchronized and there exists a transition $((l_{A, 1}, \allowbreak l_{B, 1}), \ClockValuation{1}) \STLTSTrans{\text{sync}} ((l_{A, 1}, l_{B, 1}), \ClockValuation{1, s})$ with $((l_{A, 1}, l_{B, 1}), \ClockValuation{1, s}) \in R_{\text{virt}}$, and
    \end{enumerate}
    \item the analog for $B$.
  \end{compactenum}
\end{definition}
TA $A$ and $B$ are \emph{virtually bisimilar} if and only if there exists a virtual bisimulation containing the initial state of their \PTVC{}. 
Moreover, two TA are virtually bisimilar if and only if they are timed bisimilar according to Def.~\ref{def:background:Strong-Timed-Bisimulation}, where the mapping from a virtual bisimulation to a timed bisimulation is straightforward (see Appendix~\ref{app:equi-def-vb}). 
As a virtual bisimulation may, again, contain infinitely many state pairs, we 
use a \emph{symbolic} representation. 
Given a set of symbolic states $S$, we denote $\text{states}(S) = \{ \STLTSState{} \ | \ \exists \SZGState{} \in S :  \ClockValuation{} \in \Zone{}\}$. 
We next describe how to find a set $S$ of symbolic states such that $\text{states}(S)$ forms a virtual bisimulation. 
To do so, we construct the product of zone graphs with virtual clocks (\PZVC).

\newcommand{\prodSZGScalingFactor}{0.45}
\newcommand{\prodSZGHorDistance}{1.2}
\newcommand{\prodSZGVertDistance}{0.5}
\newcommand{\prodSZGTikzFontSize}{\Large} 

\begin{example}
  \label{ex:main:finding-wit-and-counter:init-example-PZVC}
  \begin{figure}[tp]
    \centering
    \scalebox{\prodSZGScalingFactor}{
    \begin{tikzpicture}
      \tikzstyle{every node}=[font=\prodSZGTikzFontSize]
      \tikzstyle{symstate} = [draw,rectangle,minimum width=5.4cm,minimum height=2cm,inner sep=5pt,thick]
      %
      \node[symstate, align=center, initial left, initial text=] (00) {$(l_{10}, l_{30})$, $x_1=x_3=0$\\$\chi(x_1)=\chi(x_3)=0$};
      \node[symstate, align=center, below = \prodSZGVertDistance of 00] (01) {$(l_{10}, l_{30})$, $x_1=x_3\leq1$\\$\chi(x_1)=x_1, \chi(x_3)=x_3$};
      %
      \node[symstate, align=center, right = \prodSZGHorDistance of 00] (10) {$(l_{11}, l_{31})$, $x_1=x_3=0$\\$\chi(x_1)=x_1, \chi(x_3)=x_3$};
      \node[symstate, align=center, right = \prodSZGHorDistance of 01] (11) {$(l_{11}, l_{31})$, $x_1=x_3\leq1$\\$\chi(x_1)=x_1, \chi(x_3)=x_3$};
      %
      \node[symstate, align=center, right = \prodSZGHorDistance of 10] (20) {$(l_{12}, l_{32})$, $x_1=0, x_3=0$\\$\chi(x_1)=\chi(x_3)=x_3$};
      \node[symstate, align=center, right = \prodSZGHorDistance of 20] (20e) {$(l_{12}, l_{32})$, $x_1=x_3$\\$\chi(x_1)=\chi(x_3)=x_3$};
      \node[symstate, align=center, right = \prodSZGHorDistance of 11] (21) {$(l_{12}, l_{32})$, $x_1=0, x_3 = 0$\\$\chi(x_1)=\chi(x_3)\leq 1$};
      %
      %
      \backgroundExampleTAArrowDesc (00) --node[right, align=center, xshift=0.2cm]{$\varepsilon$} (01);
      \backgroundExampleTAArrowDesc (10) --node[right, align=center, xshift=0.2cm]{$\varepsilon$} (11);
      \backgroundExampleTAArrowDesc (20) --node[above, align=center]{$\varepsilon$} (20e);
      %
      \backgroundExampleTAArrowDesc (01) to[loop below] node[above, align=center]{$\varepsilon$} (01);
      \backgroundExampleTAArrowDesc (11) to[loop below] node[above, align=center]{$\varepsilon$} (11);
      \backgroundExampleTAArrowDesc (20e) to[loop below] node[above, align=center]{$\varepsilon$} (20e);
      %
      \backgroundExampleTAArrowDesc (00) --node[above, align=center]{a} (10);
      \backgroundExampleTAArrowDesc (01) --node[above, align=center]{a} (11);
      %
      \backgroundExampleTAArrowDesc (10) --node[above, align=center]{b} (20);
      \backgroundExampleTAArrowDesc (11) --node[above, align=center]{b} (21);
      %
      \node[state, minimum size=0, inner sep=0, draw = none, above = 0.8*\prodSZGVertDistance of 20] (above20){};
      \draw (20) -- (above20){};
      \draw (20e) |- (above20){};
      \backgroundExampleTAArrowDesc (above20) -|node[above, align=center, xshift=5cm]{c} (00);
      %
      \backgroundExampleTAArrowDesc (21) --node[right, align=center, xshift=0.2cm]{sync} (20);
    \end{tikzpicture}
    }
    \caption{\PZVC{} of $A_1$ and $A_3$.}
    \label{fig:main:finding-a-timed-bisimulation:init-ex-pzvc}
  \end{figure}
  Figure~\ref{fig:main:finding-a-timed-bisimulation:init-ex-pzvc} shows the \PZVC{} of $A_1$ and $A_3$. 
  In the initial state, all clocks are set to zero. 
  An $\varepsilon$-transition allows time to elapse as long as $x_1$ and $x_3$ are smaller or equal $1$. 
  Since the next $\varepsilon$-transition is a self-loop, we take the \texttt{a}-transition
  which is either followed by an $\varepsilon$-transition (self-loop), or the \texttt{b}-transition. 
  As both TA reset their clocks in this step, $x_1$ and $x_3$ are also reset in the \PZVC, while the virtual clocks remain unchanged. 
  Analogously to Example~\ref{ex:main:virtual-bisimulation:init-example}, we synchronize the symbolic state as the values of original and their virtual clocks may differ. 
  After another $\varepsilon$-transition, we reach $((l_{12}, l_{32}), x_1 = x_3 = \chi(x_1) = \chi(x_3))$. 
  Since $l_{12}$ and $l_{32}$ have outgoing \texttt{c}-transitions, this symbolic state also has one. 
  However, since the transition of $l_{32}$ has a guard, we must comply to it such that the target only contains the initial state again. 
  If $A_3$ would contain a reset instead of a guard (i.e., $A_3$ would be timed bisimilar to $A_1$), the zone of the target would be $x_1 = x_3 = 0 \land \chi(x_1) = \chi(x_3) \geq 0$ instead of the virtual clock values being equal to $0$.
\end{example}
%
%
\begin{definition}[\PZVC]
  \label{def:main:finding-a-timed-bisimulation:product-of-szg}
  A \PZVC{} of two TA $A$ and $B$ is a labeled transition system $(\SZGAllStates{}, s_{AB, 0}, \Sigma \allowbreak \cup \{\varepsilon\} \cup \{\text{sync}\}, \SZGTrans{})$. $\SZGAllStates{} \subseteq (L_A \times L_B) \times \mathcal{D}(C_{AB} \cup \chi(C_{AB}))$ is the set of all symbolic states $((l_A, l_B), \Zone{})$ with $\forall c \in C_{AB} : (\forall \ClockValuation{} \in \Zone{} : \ClockValuation{}(c) = \ClockValuation{}(\chi(c))) \lor (\forall \ClockValuation{} \in \Zone{} : \ClockValuation{}(c) = 0)$. We denote $\SZGAllStates{s} \subseteq \SZGAllStates{}$ for the set of all symbolic states $((l_A, l_B), \Zone{}) \in \SZGAllStates{}$ with $\forall c \in C_{AB} : \forall \ClockValuation{} \in \Zone{} : \ClockValuation{}(c) = \ClockValuation{}(\chi(c))$. Moreover, $s_{AB, 0} = ((l_{A, 0}, l_{B, 0}), [C_{AB} \cup \chi(C_{AB}) \rightarrow 0]) \in \SZGAllStates{s}$ is the initial symbolic state, $\Sigma \allowbreak \cup \{\varepsilon\} \cup \{\text{sync}\}$ is the set of transition labels, and $\SZGTrans{} \subseteq \SZGAllStates{} \times \Sigma \cup \{\varepsilon\} \cup \{\text{sync}\} \times \SZGAllStates{}$ is the least transition relation satisfying
  \begin{compactitem}
    \item $((l_A, l_B), \Zone{}) \SZGTrans{\text{sync}} ((l_A, l_B), \Zone{\text{sync}})$, if $((l_A, l_B), \Zone{}) \not\in \SZGAllStates{s}$, $\forall \ClockValuation{} \in \Zone{} : \exists \ClockValuation{\text{sync}} \in \Zone{\text{sync}} : \forall c \in C_{AB} : \ClockValuation{}(c) = \ClockValuation{\text{sync}}(c)$, $\forall \ClockValuation{\text{sync}} \in \Zone{\text{sync}} : \exists \ClockValuation{} \in \Zone{} : \forall c \in C_{AB} : \ClockValuation{}(c) = \ClockValuation{\text{sync}}(c)$, and $((l_A, l_B), \Zone{\text{sync}}) \in \SZGAllStates{s}$,
    \item $((l_A, l_B), \Zone{}) \SZGTrans{\varepsilon} ((l_A, l_B), \Zone{}^{\uparrow} \land I(l_A) \land I(l_B))$, if $((l_A, l_B), \Zone{}) \in \SZGAllStates{s}$, and
    \item $((l_{A, 1}, l_{B, 1}), \Zone{}) \SZGTrans{\sigma} ((l_{A, 2}, l_{B, 2}), R_A(R_B(\Zone{} \land g_A \land g_B)) \land I(l_{A, 2}) \land I(l_{B, 2}))$, if $((l_{A, 1}, l_{B, 1}), \Zone{}) \in \SZGAllStates{s}$, $l_{A, 1} \TATrans{g_A}{\sigma}{R_A} l_{A, 2}$, $l_{B, 1} \TATrans{g_B}{\sigma}{R_B} l_{B, 2}$, and $R_A(R_B(\Zone{} \land g_A \land g_B)) \land I(l_{A, 2}) \land I(l_{B, 2}) \neq \emptyset$.
  \end{compactitem}
\end{definition}

Again, we call a symbolic state $\SZGState{} \in \SZGAllStates{s}$ \emph{\synchronized}.
We next define symbolic virtual bisimulation over \PZVC.

\begin{definition}[Symbolic Virtual Bisimulation]
  \label{def:main:finding-a-timed-bisimulation:virtual-bisimulation-for-PZVC}
  A relation $R_{\text{symb}, \text{virt}} \subseteq $ \\ $\SZGAllStates{s}$ is a symbolic virtual bisimulation, if and only if $\text{states}(R_{\text{symb}, \text{virt}})$ is a virtual bisimulation according to Def.~\ref{def:main:virtual-bisimulation:virtual-bisimulation}.
\end{definition}


\begin{example}
\begin{figure}[tp]
    \centering
    \scalebox{\prodSZGScalingFactor}{
    \begin{tikzpicture}
        \tikzstyle{every node}=[font=\prodSZGTikzFontSize]
        \tikzstyle{symstate} = [draw,rectangle,minimum width=6cm,minimum height=2cm,inner sep=5pt,thick]
        %
        \node[symstate, align=center, initial left, initial text=] (00) {$(l_{10}, l_{20})$, $x_1=x_2=0$\\$\chi(x_1)=\chi(x_2)=0$};
        \node[symstate, align=center, right = \prodSZGHorDistance/2 of 00] (01) {$(l_{10}, l_{20})$, $x_1=x_2\leq1$\\$\chi(x_1)=x_1, \chi(x_2)=x_2$};
        \node[draw, rectangle, minimum width = 5.8cm,minimum height=1.8cm,inner sep=5pt,thick] (01double) at (01) {};
        %
        \node[symstate, align=center, right = \prodSZGHorDistance of 01] (10) {$(l_{11}, l_{21})$, $x_1=x_2\leq1$\\$\chi(x_1)=x_1, \chi(x_2)=x_2$};
        \node[draw, rectangle, minimum width = 5.8cm,minimum height=1.8cm,inner sep=5pt,thick] (10double) at (10) {};
        %
        \node[symstate, align=center, right = \prodSZGHorDistance/2 of 10] (21) {$(l_{12}, l_{22})$, $x_1=0, x_2\leq1$\\$\chi(x_1)=\chi(x_2)=x_2$};
        \node[symstate, align=center, below = \prodSZGVertDistance of 21] (20) {$(l_{12}, l_{22})$, $x_1=0, x_2\leq1$\\$\chi(x_1)=x_1, \chi(x_2)=x_2$};
        \node[symstate, align=center, left = \prodSZGHorDistance/2 of 20] (22) {$(l_{12}, l_{22})$, $0 \leq x_2 - x_1 \leq1$\\$\chi(x_1)=x_1, \chi(x_2)=x_2$};
        \node[draw, rectangle, minimum width = 5.8cm,minimum height=1.8cm,inner sep=5pt,thick] (22double) at (22) {};
        %
        \node[rectangle, minimum width=6cm,minimum height=2cm, align=center, left = \prodSZGHorDistance of 22] (30) {\dots};
        \node[symstate, align=center, left = \prodSZGHorDistance/2 of 30] (31) {$(l_{10}, l_{23})$, $x_1=x_2\leq1$\\$\chi(x_1)=x_1, \chi(x_2)=x_2$};
        \node[draw, rectangle, minimum width = 5.8cm,minimum height=1.8cm,inner sep=5pt,thick] (31double) at (31) {};
        %
        \node[symstate, align=center, below = \prodSZGVertDistance of 31] (40) {$(l_{11}, l_{24})$, $x_1=x_2\leq1$\\$\chi(x_1)=x_1, \chi(x_2)=x_2$};
        \node[draw, rectangle, minimum width = 5.8cm,minimum height=1.8cm,inner sep=5pt,thick] (40double) at (40) {};
        %
        \node[symstate, align=center, right = \prodSZGHorDistance/2 of 40] (50) {$(l_{12}, l_{22})$, $x_1=x_2=0$\\$\chi(x_1)=\chi(x_2)\leq1$};
        \node[symstate, align=center, right = \prodSZGHorDistance of 50] (51) {$(l_{12}, l_{22})$, $x_1=x_2=0$\\$\chi(x_1)=x_1, \chi(x_2)=x_2$};
        \node[symstate, align=center, right = \prodSZGHorDistance/2 of 51] (52) {$(l_{12}, l_{22})$, $x_1=x_2$\\$\chi(x_1)=x_1, \chi(x_2)=x_2$};
        %
        %
        \backgroundExampleTAArrowDesc (00) --node[above, align=center, yshift=0.2cm]{$\varepsilon$} (01);
        \backgroundExampleTAArrowDesc (20) --node[above, align=center, yshift=0.2cm]{$\varepsilon$} (22);
        \backgroundExampleTAArrowDesc (51) --node[above, align=center, yshift=0.2cm]{$\varepsilon$} (52);
        %
        %
        \backgroundExampleTAArrowDesc (01) --node[above, align=center, yshift=0.2cm]{a} (10);
        \backgroundExampleTAArrowDesc (31) --node[left, align=center, xshift=-0.2cm]{a} (40);
        %
        \backgroundExampleTAArrowDesc (10) --node[above, align=center, yshift=0.2cm]{b} (21);
        \backgroundExampleTAArrowDesc (40) --node[above, align=center, yshift=0.2cm]{b} (50);
        %
        \backgroundExampleTAArrowDesc (22) --node[above, align=center, yshift=0.2cm]{c} (30);
        %
        \backgroundExampleTAArrowDesc (21) --node[right, align=center, xshift=4]{sync} (20);
        \backgroundExampleTAArrowDesc (30) --node[above, align=center, yshift=0.2cm]{$\varepsilon$} (31);
        \backgroundExampleTAArrowDesc (50) --node[above, align=center, yshift=0.2cm]{sync} (51);
    \end{tikzpicture}
    }
  \caption{Finding a Witness for $A_1 \sim A_2$.}
  \label{fig:main:finding-a-timed-bisimulation:determ-ex}
\end{figure}
Figure~\ref{fig:main:finding-a-timed-bisimulation:determ-ex} shows how to create a witness for the bisimulation of $A_1$ and $A_2$ (Example~\ref{ex:background:TA}). We start at the initial symbolic state of the \PZVC{} and add the target of the $\varepsilon$-transition to the set (doubled border). 
The initial symbolic state is a subset of the target and, therefore, we remove the initial symbolic state and the set now only contains the second symbolic state.
This replacement is always feasible as 
the source of an epsilon transition is always a subset of its target.
Since the outgoing $\varepsilon$-transition of the second symbolic state is a self-loop, we ignore it. 
The target of the outgoing \texttt{a}-transition is added to the set. 
Since the second symbolic state does not have any other outgoing transitions, we now consider the newly added symbolic state. 
The $\varepsilon$-transition is a self-loop and, therefore, we consider the outgoing action transition. 
Since the target of the \texttt{b}-transition is not synchronized (as there are clock valuations in the zone for which $x_1$ and $\chi(x_1)$ are not equal), we add the symbolic state after the \texttt{sync}-transition, which is replaced by the target of the outgoing $\varepsilon$-transition. 
We continue until all outgoing transitions of all contained symbolic state have been processed. 
Finally, the symbolic virtual bisimulation contains those symbolic states marked with doubled border. 
The last symbolic state is a subset of the symbolic state after the second $\varepsilon$-transition and is, therefore, not added to the set.
\end{example}
In case of non-deterministic choices, this construction becomes more complicated. 
We call clock constraints over virtual clocks \emph{virtual constraints} 
and use \emph{contradictions}, which are sets of virtual constraints~\cite{FullReport}. 
The checking if a symbolic state can become part of 
a symbolic virtual bisimulation returns a contradiction.
If the symbolic state can be added, the contradiction is empty. 
Otherwise, the contradiction consists of virtual constraints such that each constraint can be satisfied by at least one state within the symbolic state. 
Any state that satisfies a virtual constraint of the contradiction cannot be part of a virtual bisimulation and must be excluded.
\begin{example}
\label{ex:main:finding-a-timed-bisimulation:virtual-bisimulation-nondeterm-case}
\newcommand{\findExampleAutomataScalingFactor}{0.45}
\newcommand{\findExampleAutomataSpaceBetween}{10mm}
\newcommand{\findExampleTAArrowDesc}{\draw[-{Latex[length=3mm]}]}
\newcommand{\findExampleTABendFactor}{80}
\newcommand{\findExTAVertDistance}{1cm}
\newcommand{\findExTAHorDistance}{1cm}
\newcommand{\findExampleTLTSArrowDesc}{\draw[-{Latex[length=2mm]}]}
\begin{figure}[tp]
  \centering
  %
  \subfloat[$A_{\text{non}, 1}$\label{subfigure:main:finding-a-timed-bisimulation:nondeterm-example:TA-1-wit}]{
    \scalebox{\findExampleAutomataScalingFactor}{
    \begin{tikzpicture}
      \tikzstyle{every node}=[font=\prodSZGTikzFontSize]
      \tikzstyle{state} = [draw,circle,minimum size=2cm,inner sep=0pt,semithick]
      \node[state, align=center, initial above, initial text=] (0) {$l_{40}$\\$x_{\text{4}}\leq2$};
      \node[state, align=center, below right = \findExTAVertDistance and \findExTAHorDistance of 0] (1a) {$l_{41a}$\\$y_4 \leq 0$};
      \node[state, align=center, below = \findExTAVertDistance of 1a] (2a) {$l_{42a}$};
      \node[state, align=center, below left = \findExTAVertDistance and \findExTAHorDistance of 0] (1b) {$l_{41b}$\\$y_4 \leq 0$};
      \node[state, align=center, below = \findExTAVertDistance of 1b] (2b) {$l_{42b}$};
      \findExampleTAArrowDesc (0) --node[right, align=center, yshift=0.4cm]{$a$\\$y_4:=0$} (1a);
      \findExampleTAArrowDesc (1a) --node[left, align=center, xshift=-0.2cm]{$b$\\$x_4 \leq 1$} (2a);
      \findExampleTAArrowDesc (2a) edge [bend right = \findExampleTABendFactor] node[right, align=center, xshift=0.3cm] {$c$\\$x_{\text{4}}:=0$} (0);
      \findExampleTAArrowDesc (0) --node[left, align=center, yshift=0.4cm]{$a$\\$y_4:=0$} (1b);
      \findExampleTAArrowDesc (1b) --node[right, align=center, xshift=0.2cm]{$b$\\$x_4 > 1$} (2b);
      \findExampleTAArrowDesc (2b) edge [bend left = \findExampleTABendFactor] node[left, align=center, xshift=-0.3cm] {$c$\\$x_{\text{4}}:=0$} (0);
    \end{tikzpicture}
    }
  }%
  %
  %
  \subfloat[$A_{\text{non}, 2}$\label{subfigure:main:finding-a-timed-bisimulation:nondeterm-example:TA-2-wit}]{
    \scalebox{\findExampleAutomataScalingFactor}{
    \begin{tikzpicture}
      \tikzstyle{every node}=[font=\prodSZGTikzFontSize]
      \tikzstyle{state} = [draw,circle,minimum size=2cm,inner sep=0pt,semithick]
      \node[state, align=center, initial above, initial text=] (0) {$l_{50}$\\$x_{\text{5}}\leq2$};
      \node[state, align=center, below right = \findExTAVertDistance and \findExTAHorDistance of 0] (1a) {$l_{51a}$\\$y_5 \leq 0$};
      \node[state, align=center, below = \findExTAVertDistance of 1a] (2a) {$l_{52a}$};
      \node[state, align=center, below left = \findExTAVertDistance and \findExTAHorDistance of 0] (1b) {$l_{51b}$\\$y_5 \leq 0$};
      \node[state, align=center, below = \findExTAVertDistance of 1b] (2b) {$l_{52b}$};
      \node[state, align=center, right = \findExTAHorDistance of 1a] (1c) {$l_{51c}$\\$y_5 \leq 0$};
      \findExampleTAArrowDesc (0) --node[right, align=center, yshift=0.4cm]{$a$\\$y_5:=0$} (1a);
      \findExampleTAArrowDesc (1a) --node[right, align=center, xshift=0.2cm]{$b$\\$x_5 \leq 1$} (2a);
      \findExampleTAArrowDesc (2a) --node[left, align=center, xshift=0.3cm, xshift=-0.2cm, yshift=-0.2cm] {$c$\\$x_{\text{5}}:=0$} (0);
      \findExampleTAArrowDesc (0) --node[left, align=center, yshift=0.4cm]{$a$\\$y_5:=0$} (1b);
      \findExampleTAArrowDesc (1b) --node[right, align=center, xshift=0.2cm]{$b$\\$x_5 > 1$} (2b);
      \findExampleTAArrowDesc (2b) edge [bend left = \findExampleTABendFactor] node[left, align=center, xshift=-0.3cm] {$c$\\$x_{\text{5}}:=0$} (0);
      \findExampleTAArrowDesc (0) edge [bend left = 40] node[above, align=center, xshift=0.2cm, yshift=0.2cm]{$a$\\$y_5:=0$} (1c);
      \findExampleTAArrowDesc (1c) edge [bend left = 40] node[right, align=center, xshift=0.2cm]{$b$} (2a);
    \end{tikzpicture}
    }
  }%
\caption{Two timed bisimilar TA with Non-Deterministic Choices}
\label{fig:main::finding-a-timed-bisimulation::nondeterm-bisimilar-example}
\end{figure}
Figure~\ref{fig:main::finding-a-timed-bisimulation::nondeterm-bisimilar-example} shows two bisimilar TA with non-deterministic choices, where $A_{\text{non}, 2}$ has one additional branch. 
If action \texttt{a} is used while $x_5 \leq 1$ holds, the additional branch is timed bisimilar to the branch on the right hand side of $A_{\text{non}, 1}$. 
Therefore, we can combine these two branches into one symbolic state which leads to the contradiction $\{\chi(x_4) = \chi(y_4) = \chi(x_5) = \chi(y_5) > 1\}$. 
For $x_5 > 1$, the additional branch is timed bisimilar to the branch on the left hand side such that the contradiction $\{\chi(x_4) = \chi(y_4) = \chi(x_5) = \chi(y_5) \leq 1\}$ holds. 
\end{example}
%
%
To obtain finite witnesses, we utilize $k$-normalization~\cite{Rokicki,Pettersson}, where $k$ is the largest constant in any clock constraint of the TA thus serving as upper bound for all zones.
The image of normalization is finite and there exists a virtual bisimulation covering all states of a symbolic state if and only if one exists for its $k$-normalized counterpart~\cite{Cerans1992,FullReport}.
Since the initial symbolic state of the \PZVC{} contains the initial state of the \PTVC{}, any symbolic virtual bisimulation that includes this symbolic state also yields a virtual bisimulation containing the initial state of the \PTVC{}. 
To construct the bisimulation, we rely on forward- and backward-stability (see Appendix~\ref{app:fwd-and-bwd-stab}). 
Forward-stability guarantees that any transition from a state $((l_{A, 1}, l_{B, 1}), \ClockValuation{1}) \in ((l_{A, 1}, l_{B, 1}), \Zone{1})$ to a state $((l_{A, 2}, l_{B, 2}), \ClockValuation{2})$, whether due to action, delay, or synchronization, has a corresponding symbolic transition from $((l_{A, 1}, l_{B, 1}), \Zone{1})$ to $((l_{A, 2}, l_{B, 2}), \Zone{2})$ such that $((l_{A, 2}, l_{B, 2}), \ClockValuation{2}) \in ((l_{A, 2}, l_{B, 2}), \Zone{2})$. 
Backward-stability guarantees the reverse: If $((l_{A, 2}, l_{B, 2}), \Zone{2})$ has an incoming symbolic transition from $((l_{A, 1}, l_{B, 1}), \Zone{1})$, then every $((l_{A, 2}, l_{B, 2}), \ClockValuation{2}) \in ((l_{A, 2}, \allowbreak l_{B, 2}), \Zone{2})$ is reached by a transition from some $((l_{A, 1}, l_{B, 1}), \ClockValuation{1}) \in ((l_{A, 1}, l_{B, 1}), \Zone{1})$. 

\begin{proposition}
  \label{prop:main:finding-a-timed-bisimulation:virtual-bisimulation-for-PZVC}
  Given two timed bisimilar TA $A$ and $B$, a finite symbolic virtual bisimulation $R_{\text{symb}, \text{virt}}$ is constructed starting from the initial symbolic state by iteratively applying the following steps:
  \begin{compactitem}
    \item For any $((l_A, l_B), \Zone{}) \in R_{\text{symb}, \text{virt}}$ with outgoing $\varepsilon$-transition $((l_A, l_B), \Zone{}) \allowbreak \SZGTrans{\varepsilon} ((l_A, l_B), \Zone{\varepsilon})$, add the $k$-normalized $((l_A, l_B), \Zone{\varepsilon})$ to $R_{\text{symb}, \text{virt}}$ and
    \item for any $((l_A, l_B), \Zone{}) \in R_{\text{symb}, \text{virt}}$, if there exists a transition $l_A \TATrans{g_A}{\sigma}{R_A} l_{A, \sigma}$ with $\Zone{\sigma, A} = R_A(\Zone{} \land g_A) \land I(l_A) \neq \emptyset$, then there exists a finite set of symbolic states $\{((l_{A, \sigma}, \allowbreak l_{B, 1}), \allowbreak \Zone{\text{inter}, 1}), ..., \allowbreak ((l_{A, \sigma}, l_{B, n}), \Zone{\text{inter}, n})\}$ such that 
    \begin{compactenum}
      \item for any $((l_{A, \sigma}, l_{B, i}), \Zone{\text{inter}, i})$ there exists a symbolic state $((l_{A, \sigma}, l_{B, i}), \Zone{\sigma, i})$ with $\Zone{\text{inter}, i} \subseteq \Zone{\sigma, i}$ such that there exists a transition $((l_A, l_B), \Zone{}) \SZGTrans{\sigma} ((l_{A, \sigma}, l_{B, i}), \Zone{\sigma, i})$,
      \item if $((l_{A, \sigma}, l_{B, i}), \Zone{\text{inter}, i})$ is synchronized, it can be element of a symbolic virtual bisimulation,
      \item if $((l_{A, \sigma}, l_{B, i}), \Zone{\text{inter}, i})$ is not synchronized, there exists a transition $((l_{A, \sigma}, \allowbreak l_{B, i}), \Zone{\text{inter}, i}) \SZGTrans{\text{sync}} ((l_{A, \sigma}, l_{B, i}), \Zone{\text{sync}, i})$ and $((l_{A, \sigma}, l_{B, i}), \Zone{\text{sync}, i})$ can be element of a symbolic virtual bisimulation, and 
      \item for any $\ClockValuation{\sigma, A} \in \Zone{\sigma, A}$ there exists a $((l_{A, \sigma}, l_{B, i}), \allowbreak \Zone{\text{inter}, i})$ with a $\ClockValuation{\text{inter}, i} \models \phi_{\text{virt}}(\Zone{\text{inter}, i})$ with $\forall c \in C_A \cup \chi(C_{AB}) : \ClockValuation{\sigma, A}(c) = \ClockValuation{\text{inter}, i}(c)$ and vice versa.
    \end{compactenum}
    If $((l_{A, \sigma}, l_{B, i}), \Zone{\text{inter}, i})$ is synchronized, add its $k$-normalized version; if it is not synchronized, add its $k$-normalized version of $((l_{A, \sigma}, l_{B, i}), \Zone{\text{sync}, i})$, and
    \item do the analog for $B$.
  \end{compactitem}
\end{proposition}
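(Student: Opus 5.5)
The plan is to prove two things: that the iteration terminates with a finite set, and that $\text{states}(R_{\text{symb}, \text{virt}})$ satisfies the three conditions of Def.~\ref{def:main:virtual-bisimulation:virtual-bisimulation}. Throughout, I fix one auxiliary object. Since $A \sim B$, the equivalence of virtual and timed bisimilarity (Appendix~\ref{app:equi-def-vb}) yields a virtual bisimulation containing $s_{AB,0}$. Let $R^\ast$ be the union of all virtual bisimulations, i.e.\ the largest one; this union is again a virtual bisimulation. The central invariant I maintain is that every symbolic state added to $R_{\text{symb}, \text{virt}}$ satisfies $\text{states}(\{((l_A,l_B),\Zone{})\}) \subseteq R^\ast$. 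This is exactly what ``can be element of a symbolic virtual bisimulation'' in items~2 and~3 of the statement should mean. The invariant holds for the initial symbolic state, because it contains only $s_{AB,0}$.

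\emph{Preservation of the invariant.} For the $\varepsilon$-step, backward-stability (Appendix~\ref{app:fwd-and-bwd-stab}) says that every valuation of $\Zone{\varepsilon}$ is a delay successor of some valuation in $\Zone{}$ that is admissible for both invariants. Condition~1 of Def.~\ref{def:main:virtual-bisimulation:virtual-bisimulation}, applied to $R^\ast$, then puts it in $R^\ast$. $k$-normalization preserves the invariant by the cited result~\cite{Cerans1992,FullReport}: a virtual bisimulation covers a symbolic state iff it covers its $k$-normalization. For the action step I construct the split explicitly. Fix an $A$-switch and a valuation $\ClockValuation{} \in \Zone{}$ enabling it. Condition~2 for $R^\ast$ yields a $B$-switch with the same label whose (synchronized) target lies in $R^\ast$. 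I group these witnesses by the chosen $B$-switch; since $B$ has finitely many switches, this gives finitely many groups.

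For group $i$, let $\Zone{\text{inter},i}$ consist of those valuations of the symbolic target $\Zone{\sigma,i}$ whose synchronized version lies in $R^\ast$. Item~1 holds by construction. Items~2 and~3 hold by the invariant, using forward-stability to guarantee that the sync-transition exists on the symbolic level. Item~4 holds because every $\ClockValuation{\sigma,A}$ was matched in some group; here the virtual clocks record the pre-transition valuation, which is what makes the correspondence on $C_A \cup \chi(C_{AB})$ exact. The ``vice versa'' direction follows from $\Zone{\text{inter},i} \subseteq \Zone{\sigma,i}$ together with backward-stability. The $B$ case is symmetric.

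\emph{Finiteness and the main obstacle.} The hard step is that $\Zone{\text{inter},i}$, as defined above, need not be a zone, since a zone must be convex. Moreover, the grouping by $B$-switch may overlap, as Example~\ref{ex:main:finding-a-timed-bisimulation:virtual-bisimulation-nondeterm-case} shows. My first attempt is to use the contradictions of~\cite{FullReport}. The complement of $R^\ast$ inside $\Zone{\sigma,i}$ is described by finitely many virtual constraints with constants bounded by $k$. Intersecting $\Zone{\sigma,i}$ with the negations of these constraints splits it into finitely many zones. I then refine the family $\{\Zone{\text{inter},i}\}$ into this finite collection of zones, which preserves items~1--4.

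Termination then follows because every added symbolic state is $k$-normalized. The location pairs are finite and the image of $k$-normalization is finite, so the iteration can add only finitely many distinct symbolic states and must reach a fixpoint. At the fixpoint, each condition of Def.~\ref{def:main:virtual-bisimulation:virtual-bisimulation} is discharged by the step that was saturated. Delays are covered by the $\varepsilon$-successor and forward-stability, and actions by item~4 together with the added (synchronized) targets. By Def.~\ref{def:main:finding-a-timed-bisimulation:virtual-bisimulation-for-PZVC}, $R_{\text{symb}, \text{virt}}$ is therefore a finite symbolic virtual bisimulation.
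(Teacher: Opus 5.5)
Your proof is correct and follows the paper's own structure. That structure has four parts: the existence of the finite split (the helper proposition in Appendix~\ref{app:create-virt-bisim}); the inductive invariant that every added symbolic state can be element of a symbolic virtual bisimulation; termination via finiteness of $k$-normalized zones; and the final check of Def.~\ref{def:main:virtual-bisimulation:virtual-bisimulation} via forward- and backward-stability. Fixing the greatest virtual bisimulation $R^\ast$, carrying the invariant explicitly through $k$-normalization, and handling possible non-convexity of the intermediate zones (where the paper instead argues finiteness by a region decomposition) are careful refinements of that argument rather than a different route.
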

The proof can be found in Appendix~\ref{app:create-virt-bisim}.
To summarize, for two bisimilar TA, $A$ and $B$, we first apply the algorithm from \cite{FullReport} and then apply the construction steps from Proposition~\ref{prop:main:finding-a-timed-bisimulation:virtual-bisimulation-for-PZVC}
to obtain a witness as shown in Example~\ref{ex:main:finding-a-timed-bisimulation:virtual-bisimulation-nondeterm-case}

\section{Finding Counterexamples}
\label{chap:main:finding-counterexamples}

We describe the construction of counterexamples showing that two TA are not timed bisimilar.
For deterministic TA, a counterexample represents a single finite path through the PVTC whose 
final state contradicts Def.~\ref{def:main:virtual-bisimulation:virtual-bisimulation}.

\begin{example}
\label{ex:main:new-def-virt-bisim:determ-contradiction-example}
To compare $A_1$ and $A_3$, we consider the sequence of transitions labeled \texttt{a}, \texttt{b}, \texttt{1} in the \PTVC{}.
The resulting state, $(l_{12}, l_{32}, x_1 = x_3 = \chi(x_1) = \chi(x_3) = 1)$, is synchronized, and there exists a transition $l_{12} \TATrans{true}{c}{\{x\}} l_{10}$. 
While the preconditions of the second case in Def.~\ref{def:main:virtual-bisimulation:virtual-bisimulation} are satisfied, the state lacks an outgoing transition labeled \texttt{c} according to Def.~\ref{def:main:virtual-bisimulation:product-synchronizable-tlts}. 
This is due to the fact that there is no outgoing transition from $l_{32}$ labeled \texttt{c} whose guard is fulfilled. 
\end{example}
However, in case a non-deterministic choice occurs, counterexamples become more complicated in terms of directed acyclic graphs (DAG).
\begin{example}
\label{ex:main:new-def-virt-bisim:nondeterm-contradiction-example}
\newcommand{\newdefExampleAutomataScalingFactor}{0.45}
\newcommand{\newdefExampleAutomataSpaceBetween}{10mm}
\newcommand{\newdefExampleTAArrowDesc}{\draw[-{Latex[length=3mm]}]}
\newcommand{\newdefExampleTABendFactor}{80}
\newcommand{\newdefExTAVertDistance}{1cm}
\newcommand{\newdefExTAHorDistance}{1cm}
\newcommand{\newdefExampleTLTSArrowDesc}{\draw[-{Latex[length=2mm]}]}
\begin{figure}[tp]
    \centering
    %
    \subfloat[$A_{\text{non}, 3}$\label{subfigure:main:new-def-virt-bisim:nondeterm-contradiction-example:TA-3-cont}]{
      \scalebox{\newdefExampleAutomataScalingFactor}{
      \begin{tikzpicture}
        \tikzstyle{every node}=[font=\newdefTikzFontSize]
        \tikzstyle{state} = [draw,circle,minimum size=2cm,inner sep=0pt,semithick]
        \node[state, align=center, initial above, initial text=] (0) {$l_{60}$\\$x_{\text{6}}\leq1$};
        \node[state, align=center, below right = \newdefExTAVertDistance and \newdefExTAHorDistance of 0] (1a) {$l_{61a}$\\$x_{\text{6}}\leq1$};
        \node[state, align=center, below = \newdefExTAVertDistance of 1a] (2a) {$l_{62a}$};
        \node[state, align=center, below left = \newdefExTAVertDistance and \newdefExTAHorDistance of 0] (1b) {$l_{61b}$\\$x_{\text{6}}\leq1$};
        \node[state, align=center, below = \newdefExTAVertDistance of 1b] (2b) {$l_{62b}$};
        \newdefExampleTAArrowDesc (0) --node[right, align=center, xshift=0.2cm]{$a$} (1a);
        \newdefExampleTAArrowDesc (1a) --node[left, align=center, xshift=-0.2cm]{$b$\\$x_{\text{6}}:=0$} (2a);
        \newdefExampleTAArrowDesc (2a) edge [bend right = \newdefExampleTABendFactor] node[right, align=center, xshift=0.3cm] {$c$\\$x_{\text{6}}:=0$} (0);
        \newdefExampleTAArrowDesc (0) --node[left, align=center, xshift=-0.2cm]{$a$} (1b);
        \newdefExampleTAArrowDesc (1b) --node[right, align=center, xshift=0.2cm]{$b$\\$x_{\text{6}}:=0$} (2b);
        \newdefExampleTAArrowDesc (2b) edge [bend left = \newdefExampleTABendFactor] node[left, align=center, xshift=-0.3cm] {$c$\\$x_{\text{6}} > 1$\\$x_{\text{6}}:=0$} (0);
      \end{tikzpicture}
      }
    }%
    %
    %
    \subfloat[$A_{\text{non}, 4}$\label{subfigure:main:new-def-virt-bisim:nondeterm-contradiction-example:TA-4-cont}]{
      \scalebox{\newdefExampleAutomataScalingFactor}{
      \begin{tikzpicture}
        \tikzstyle{every node}=[font=\newdefTikzFontSize]
        \tikzstyle{state} = [draw,circle,minimum size=2cm,inner sep=0pt,semithick]
        \node[state, align=center, initial above, initial text=] (0) {$l_{70}$\\$x_{\text{7}}\leq1$};
        \node[state, align=center, below right = \newdefExTAVertDistance and \newdefExTAHorDistance of 0] (1a) {$l_{71a}$\\$x_{\text{7}}\leq1$};
        \node[state, align=center, below = \newdefExTAVertDistance of 1a] (2a) {$l_{72a}$};
        \node[state, align=center, below left = \newdefExTAVertDistance and \newdefExTAHorDistance of 0] (1b) {$l_{71b}$\\$x_{\text{7}}\leq1$};
        \node[state, align=center, below = \newdefExTAVertDistance of 1b] (2b) {$l_{72b}$};
        \newdefExampleTAArrowDesc (0) --node[right, align=center, xshift=0.2cm]{$a$} (1a);
        \newdefExampleTAArrowDesc (1a) --node[left, align=center, xshift=-0.2cm]{$b$\\$x_{\text{7}}:=0$} (2a);
        \newdefExampleTAArrowDesc (2a) edge [bend right = \newdefExampleTABendFactor] node[right, align=center, xshift=0.3cm] {$c$\\$x_{\text{7}} < 1$\\$x_{\text{7}}:=0$} (0);
        \newdefExampleTAArrowDesc (0) --node[left, align=center, xshift=-0.2cm]{$a$} (1b);
        \newdefExampleTAArrowDesc (1b) --node[right, align=center, xshift=0.2cm]{$b$\\$x_{\text{7}}:=0$} (2b);
        \newdefExampleTAArrowDesc (2b) edge [bend left = \newdefExampleTABendFactor] node[left, align=center, xshift=-0.3cm] {$c$\\$x_{\text{7}} > 1$\\$x_{\text{7}}:=0$} (0);
      \end{tikzpicture}
      }
    }
    \\
    \subfloat[Contradiction DAG\label{subfigure:main:new-def-virt-bisim:nondeterm-contradiction-example:DAG}]{
      \scalebox{\newdefExampleAutomataScalingFactor}{
      \begin{tikzpicture}
        \tikzstyle{every node}=[font=\newdefTikzFontSize]
        \tikzstyle{state} = [draw,rectangle,minimum width=5cm, minimum height = 2cm, inner sep=0pt,semithick]
        \node[state, align=center, initial left, initial text=] (0) {$l_{60}, l_{70}$\\$x_{6}=x_{7}=$ \\ $\chi(x_6)=\chi(x_7)=0$};
        \node[state, align=center, above right = -0.9cm and 1.5cm of 0] (1a) {$l_{61a}, l_{71a}$\\$x_{6}=x_{7}=$ \\ $\chi(x_6)=\chi(x_7)=0$};
        \node[state, align=center, right = 1.5cm of 1a] (2a) {$l_{62a}, l_{72a}$\\$x_{6}=x_{7}=$ \\ $\chi(x_6)=\chi(x_7)=0$};
        \node[state, align=center, right = 1.5cm of 2a] (3a) {$l_{62a}, l_{72a}$\\$x_{6}=x_{7}=$ \\ $\chi(x_6)=\chi(x_7)=1$};
        \node[state, align=center, below right = -0.9cm and 1.5cm of 0] (1b) {$l_{61a}, l_{71b}$\\$x_{6}=x_{7}=$ \\ $\chi(x_6)=\chi(x_7)=0$};
        \node[state, align=center, right = 1.5cm of 1b] (2b) {$l_{62a}, l_{72b}$\\$x_{6}=x_{7}=$ \\ $\chi(x_6)=\chi(x_7)=0$};
        \newdefExampleTAArrowDesc (0) --node[above, align=center, yshift=0.2cm]{$a$} (1a);
        \newdefExampleTAArrowDesc (1a) --node[above, align=center]{$b$} (2a);
        \newdefExampleTAArrowDesc (2a) --node[above, align=center]{1} (3a);
        \newdefExampleTAArrowDesc (0) --node[above, align=center, yshift=0.2cm]{$a$} (1b);
        \newdefExampleTAArrowDesc (1b) --node[above, align=center]{$b$} (2b);
    \end{tikzpicture}
    }
  }
\caption{Two TA with Non-Deterministic Choices}
\label{fig:main:new-def-virt-bisim:nondeterm-contradiction-example}
\end{figure}
The top of Fig.~\ref{fig:main:new-def-virt-bisim:nondeterm-contradiction-example} shows two non-deterministic TA. 
Beside names of clocks and locations, the only difference 
is the additional guard in one of the transitions labeled \texttt{c} of $A_{\text{non}, 4}$. 
To check for timed bisimilarity, we assume a virtual bisimulation $R$ to contain the state $((l_{60}, l_{70}), x_6 = x_7 = \chi(x_6) = \chi(x_7) = 0)$ and consider the transition $l_{60} \TATrans{true}{a}{\emptyset} l_{61a}$. 
By Def.~\ref{def:main:virtual-bisimulation:virtual-bisimulation}, the existence of this transition implies either $((l_{61a}, l_{71a}), x_6 = x_7 = \chi(x_6) = \chi(x_7) = 0) \in R$ or $((l_{61a}, l_{71b}), x_6 = x_7 = \chi(x_6) = \chi(x_7) = 0) \in R$. 
Therefore, in case such a non-deterministic choice occurs, we have to consider all possible options, as depicted in Fig.~\ref{subfigure:main:new-def-virt-bisim:nondeterm-contradiction-example:DAG}. 
The first case implies $((l_{62a}, l_{72a}), x_6 = x_7 = \chi(x_6) = \chi(x_7) = 1) \in R$ and the second case implies $((l_{62a}, l_{72b}), x_6 = x_7 = \chi(x_6) = \chi(x_7) = 0) \in R$. 
In both cases, the part regarding $A_{\text{non}, 3}$ has an outgoing transition labeled \texttt{c}, while the part regarding $A_{\text{non}, 4}$ does not. 
Therefore, in both cases, $R$ is not a virtual bisimulation and $A_{\text{non}, 3}$ and $A_{\text{non}, 4}$ are not timed bisimilar.
\end{example}

The following definition generalizes the concept from Examples~\ref{ex:main:new-def-virt-bisim:determ-contradiction-example} and~\ref{ex:main:new-def-virt-bisim:nondeterm-contradiction-example}.

\begin{definition}[Contradiction DAGs]
\label{def:main:new-def-virt-bisim:DAG}
Let $A$, $B$ be two TA, with corresponding \PTVC{} $(\STLTSAllStates{}, s_{AB, 0}, \Sigma \cup \TimeDomain \cup \{\text{sync}\}, \STLTSTrans{})$. 
A \emph{Contradiction DAG} $(S, E)$ of $A$ and $B$ is a 
finite directed acyclic graph where $S \subseteq \STLTSAllStates{}$ 
is the set of states and $E \subseteq (S \times \Sigma \cup \TimeDomain \cup \{\text{sync}\} \times S)$ is the set of edges with $E \subseteq \STLTSTrans{}$. 
For any $s_1 \in S$, either $s_1 = ((l_{A, 1}, l_{B, 1}), \ClockValuation{1})$ 
is a leaf, in which case it is \synchronized{} and
\begin{compactitem}
\item either there exists a $d \in \TimeDomain$ such that $\ClockValuation{1} + d \models I(l_{A, 1})$ but no transition $s_1 \STLTSTrans{d} s_2$, or
\item there exists a transition $l_{A, 1} \TATrans{g_A}{\sigma}{R_A} l_{A, 2}$ with $\ClockValuation{1} \models g_A$ and $[R_A \rightarrow 0]\ClockValuation{1} \models I(l_{A, 2})$ but no transition $s_1 \STLTSTrans{\sigma} ((l_{A, 2}, l_{B, 2}), \ClockValuation{2})$ with $\forall c \in C_A : [R_A \rightarrow 0]\ClockValuation{1}(c) = \ClockValuation{2}(c)$, or
\item an analog condition for $B$
\end{compactitem}
holds, or it is not a leaf, in which case the sync-transition is element of $E$ in case $s_1$ is not synchronized, or there exists exactly one outgoing transition $(s_1, d, s_2) \in E$ with $d \in \TimeDomain$, or there exists a transition $l_{A, 1} \TATrans{g_A}{\sigma}{R_A} l_{A, 2}$ s.t.
\begin{compactitem}
\item for all edges $(s_1, \sigma_1, s_2) \in E$ the statement $\sigma_1 = \sigma$ holds, and
\item for any outgoing transition $l_{B, 1} \TATrans{g_B}{\sigma}{R_B} l_{B, 2}$, it holds that if there exists a transition $(s_1, \sigma, ((l_{A, 2}, l_{B, 2}), \ClockValuation{2})) \in \STLTSTrans{}$ with $\ClockValuation{2} = R_A(R_B(\ClockValuation{1}))$, then $(s_1, \sigma, \allowbreak ((l_{A, 2}, l_{B, 2}), \ClockValuation{2})) \in E$ holds,
\end{compactitem}
or the symmetric conditions hold, with the roles of $A$ and $B$ reversed.
\end{definition}


\begin{proposition}
  \label{prop:main:virtual-bisimulation:contradiction-DAGs-only-for-non-bisimilar}
   Let $A$, $B$ be two TA. $A$ and $B$ are virtually bisimilar if and only if the initial state of the \PTVC{} cannot be the root of a Contradiction DAG.
\end{proposition}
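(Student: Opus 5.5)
We prove the two directions separately. Write $\mathcal{C}$ for the set of \PTVC{} states that can be the root of a Contradiction DAG, and $\mathcal{G}$ for the set of \synchronized{} states not in $\mathcal{C}$.

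\emph{($\Rightarrow$) Virtually bisimilar implies no Contradiction DAG at the initial state.}
Fix a virtual bisimulation $R_{\text{virt}}$ containing $s_{AB,0}$. We show by well-founded induction on the height of a finite DAG that no element of $R_{\text{virt}}$ is the root of a Contradiction DAG. It is convenient to treat non-synchronized states $s$ together with their unique sync-successor $s^{\text{sync}}$. The unique sync-successor is determined because the original clocks are fixed and the virtual clocks must equal them.

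\begin{compactitem}
\item \emph{Leaves.} Suppose $s \in R_{\text{virt}}$ is a leaf. The first leaf condition is excluded by item~1 of Def.~\ref{def:main:virtual-bisimulation:virtual-bisimulation}, which provides the $d$-transition. The second leaf condition is excluded by item~2, which provides a $\sigma$-transition agreeing on $C_A$. The $B$ case follows by item~3.
\item \emph{Delay node.} Suppose $s \in R_{\text{virt}}$ is an inner node with a single delay edge $(s,d,s_2)$. Then $s_2 \in R_{\text{virt}}$ by item~1 and the delay-determinism of the \PTVC{}. Its sub-DAG has smaller height, so the induction hypothesis gives a contradiction.
\item \emph{Action node.} Suppose the node is built from an $A$-switch labelled $\sigma$. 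Item~2 yields a $\sigma$-transition to some $s_2$ whose synchronized version lies in $R_{\text{virt}}$. Here $s_2$ is reached via some $B$-switch with $\ClockValuation{2}=R_A(R_B(\ClockValuation{1}))$. The DAG condition forces this edge $(s,\sigma,s_2)$ to be in $E$, so we recurse into $s_2$ and, through its sync edge, into its synchronized version.
\end{compactitem}

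Since $s_{AB,0} \in R_{\text{virt}}$, it cannot be the root of a Contradiction DAG.

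\emph{($\Leftarrow$) No Contradiction DAG at the initial state implies virtual bisimilarity.}
We take $R := \mathcal{G}$ and show it is a virtual bisimulation. Then $s_{AB,0} \in R$ because $s_{AB,0}$ is synchronized and, by assumption, not in $\mathcal{C}$. Let $s \in \mathcal{G}$. We verify the three clauses by contraposition.

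\begin{compactitem}
\item \emph{Clause~1.} Suppose some admissible $d$ has no $d$-transition. Then $s$ is a leaf DAG, contradicting $s \notin \mathcal{C}$. Suppose instead the $d$-successor $s+d$ (which is synchronized) lies in $\mathcal{C}$. Then prefixing its DAG with the edge $(s,d,s+d)$ gives a DAG rooted at $s$, again a contradiction.
\item \emph{Clause~2.} Fix an enabled $A$-switch $\sigma$. If no matching $\sigma$-transition exists, $s$ is a leaf. Otherwise suppose every matching $\sigma$-successor $s_2$, or its sync-successor when $s_2$ is unsynchronized, lies in $\mathcal{C}$. For an unsynchronized $s_2$ we take the sync edge followed by the DAG of its sync-successor. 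We then join all the matching $\sigma$-edges, together with these DAGs, under the root $s$.
\item \emph{Clause~3.} This is symmetric, with the roles of $A$ and $B$ reversed.
\end{compactitem}

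Hence some matching successor lies in $\mathcal{G}$, as clause~2 requires.

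\emph{Expected obstacle: finiteness of the combined DAG in clause~2.}
Finitely many $B$-switches exist, so the combined graph has finitely many children, and each child's DAG is finite. We must still check that merging them yields an acyclic graph satisfying the "all $\sigma$-edges from $s_1$" requirement at every node. Two issues arise here.
\begin{compactitem}
\item \emph{Shared nodes.} If distinct sub-DAGs share states, we avoid clashes by taking a disjoint-copy (tree unfolding) view. The definition uses $S\subseteq\STLTSAllStates{}$, so identical states must be merged. I would argue that merging preserves acyclicity because the per-node conditions are local. If merging creates a cycle, we instead choose, at each shared state, a minimal-height DAG rooted there and use it consistently.
\item \emph{The root condition.} At the new root we require that all $\sigma$-edges arise from the single chosen $A$-switch, and that every realized matching $B$-transition is included. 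This holds by construction.
\end{compactitem}
A further subtlety is that the definition asks for the matching transitions to satisfy $\ClockValuation{2}=R_A(R_B(\ClockValuation{1}))$ for each $B$-switch. This coincides exactly with the successor set quantified in clause~2, which is what makes the two notions dual.
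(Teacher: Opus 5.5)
Your ($\Rightarrow$) direction is the paper's argument (Proposition~\ref{prop:app:DAG-are-correct:if-ct-then-not-bisim}): induction on the distance to the leaves, showing that no DAG node can lie in a virtual bisimulation.

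For ($\Leftarrow$) you take a genuinely different route. The paper introduces bounded virtual bisimulations of order $n$. It proves by induction that a synchronized state failing order $n$ roots a Contradiction DAG of depth $2n$. It then invokes the claim from \cite{FullReport} that failing virtual bisimilarity means failing some finite order $m$.

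You instead show directly that the set $\mathcal{G}$ of synchronized states rooting no Contradiction DAG is itself a virtual bisimulation. This is self-contained, and it isolates exactly the finiteness that makes the equivalence hold. For delays, a single bad successor suffices, so the infinite delay branching never has to be combined. The only universal combination is over the finitely many $B$-switches answering a fixed $A$-switch. The paper's finite-stabilization step is where this fact is hidden, and it is not trivial in dense time. In exchange, the paper's route yields an explicit depth bound tied to the refinement round.

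One correction on merging. Your first justification (``the per-node conditions are local'') does not give acyclicity. Unioning sub-DAGs can close a cycle. It can also give a shared node two incompatible out-edge sets, for instance a delay edge from one sub-DAG and a $\sigma$-fan from another.

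Your fallback is the right fix, and it should be stated precisely:
\begin{itemize}
\item For every $t \in \mathcal{C}$, fix the out-edges of $t$ in one minimal-height Contradiction DAG rooted at $t$.
\item Minimal height strictly decreases along these edges. Hence the graph generated from the root is acyclic, finitely branching and of bounded depth, and therefore finite.
\item Every node of this graph satisfies the purely local conditions of Def.~\ref{def:main:new-def-virt-bisim:DAG}.
\end{itemize}
Also note that the new root $s$ cannot occur below its children, since otherwise already $s \in \mathcal{C}$; this also justifies the prefixing step in your clause~1. The paper's own proof simply unions the sub-DAGs and never addresses this point.
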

The proof can be found in Appendix~\ref{app:DAG-are-correct}.
A Contradiction DAG is finite thus no symbolic representation is required
and mapping to a Contradiction DAG without virtual clocks is also straightforward.
As shown in Example~\ref{ex:main:finding-a-timed-bisimulation:virtual-bisimulation-nondeterm-case}, the algorithm from~\cite{FullReport} produces contradictions for symbolic states that cannot belong to a symbolic virtual bisimulation. If a delay within a contradiction is possible, we use the highest possible delay that still fulfills the contradiction as it is unclear whether a particular state contributes to a contradiction solely due to a delay transition. Such a state may appear contradictory only because of the presence of that delay transition.
Once we reach a state where only delays less than 1 are allowed, we have to use a delay value s.t. any possible constraint of the TA does not change its evaluation if we get closer to the upper bound. In that case, we stay in the same \emph{region}, which is the base element for the decidability proof of \u{C}er\={a}ns~\cite{Cerans1992}. Two clock valuations $\ClockValuation{1}, \ClockValuation{2}$ are region-equivalent, if and only if
\begin{compactitem}
\item for any clock $c$, either $\ClockValuation{1}(c) > k \land \ClockValuation{2}(c) > k$ or the integral part of $\ClockValuation{1}(c)$ and $\ClockValuation{2}(c)$ are the same and the fractional part of $\ClockValuation{1}(c)$ is zero if and only if the fractional part of $\ClockValuation{2}(c)$ is zero and
\item for any clocks $c_1, c_2$, if the integral part of $\ClockValuation{1}(c_1)$ and $\ClockValuation{1}(c_2)$ are lower or equal $k$, the fractional part of $\ClockValuation{1}(c_1)$ is smaller than the fractional part of $\ClockValuation{1}(c_2)$ if and only if the the fractional part of $\ClockValuation{2}(c_1)$ is smaller than the fractional part of $\ClockValuation{2}(c_2)$.
\end{compactitem}
Let $\ClockValuation{}$ be the valuation and $0 < d_{\text{max}} < 1$ be the upper bound, we search for a value $0 \leq d_{\text{step}} < d_{\text{max}}$ s.t. for any delay $d$ between $d_{\text{step}}$ and $d_{\text{max}}$, $\ClockValuation{} + d$ is in the same region as $\ClockValuation{} + d_{\text{step}}$. If for any $c \in C$, $\ClockValuation{}(c)$ has a fractional value $\{\ClockValuation{}(c)\}$ such that $1 - \{\ClockValuation{}(c)\} < d_{\text{max}}$, we require $1 - \{\ClockValuation{}(c)\} < d_{\text{step}} < d_{\text{max}}$. Moreover, if for any $c \in C : \{\ClockValuation{}(c)\} = 0$, we require $0 < d_{\text{step}} < d_{\text{max}}$. In practice, we choose the center of the resulting interval.
In case there is no highest possible delay, we choose one such that all clock values are above the highest occurring constant in any of the TA.
The following proposition formalizes this approach.

\begin{proposition}
\label{prop:main:contradiction:find-a-DAG}
Let $A$ and $B$ be two non timed bisimilar TA with maximum used constant $k$.  
A Contradiction DAG can be constructed starting from the initial state as the root, by iteratively applying the following steps:
\begin{compactitem}
  \item If all leaves of the DAG serve as leaves of a Contradiction DAG, stop.
  \item If any leaf of the DAG is not synchronized, add a sync-transition.
  \item For any synchronized leaf $((l_A, l_B), \ClockValuation{})$ of the current DAG that cannot be a leaf of a Contradiction DAG, for which a $c$ exists such that $\ClockValuation{}(c) \leq k$, and for which a delay exists such that the target state fulfills the contradiction of its symbolic state and the resulting valuation is not region-equivalent to $\ClockValuation{}$, proceed as follows. If $\ClockValuation{} + k + 1$ fulfills the contradiction of the symbolic state, add the transition labeled $k + 1$, if the maximum delay satisfying the condition is an integer $d$, add the transition labeled $d$, and if the maximum lies strictly between two consecutive integers $d$ and $d+1$, add the transition labeled $d+d_{\text{step}}$.
  \item For any synchronized leaf $((l_A, l_B), \ClockValuation{})$ of the current DAG that cannot be a leaf of a Contradiction DAG and which does not fulfill the preconditions of the previous case, proceed as follows. There exists either a transition $l_A \TATrans{g_A}{\sigma}{R_A} l_{A,1}$ such that $\ClockValuation{} \models g_A$, $[R_A \rightarrow 0]\ClockValuation{} \models I(l_{A,1})$, and for any $l_B \TATrans{g_B}{\sigma}{R_B} l_{B,1}$ with $\ClockValuation{} \models g_B$ and $[R_B \rightarrow 0]\ClockValuation{} \models I(l_{B,1})$, the corresponding transition $((l_A, l_B), \ClockValuation{}) \STLTSTrans{\sigma} ((l_{A,1}, l_{B,1}), R_A(R_B(\ClockValuation{})))$ leads to a state whose synchronized clock valuation fulfills the contradiction of its symbolic state and we add all such transitions to the Contradiction DAG, or an analogous transition exists with the roles of $A$ and $B$ swapped.
  \item  If a transition introduces a cycle, there exists an alternative that allows the above procedure to continue without forming a cycle. Location-equivalent states with region-equivalent clock valuations are considered equivalent.
\end{compactitem}
\end{proposition}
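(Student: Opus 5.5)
The plan is to prove that the procedure terminates and that what it produces is a Contradiction DAG in the sense of Def.~\ref{def:main:new-def-virt-bisim:DAG}. Termination and acyclicity come from a finite quotient of the \PTVC{} by regions. Correctness of each step comes from an invariant carried along the construction.

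The invariant is this: every synchronized leaf $((l_A,l_B),\ClockValuation{})$ that has not yet been closed satisfies the contradiction of its symbolic state, as returned by the algorithm of~\cite{FullReport}. Consequently no virtual bisimulation contains it. This holds at the root. Since $A\not\sim B$, the equivalence stated after Def.~\ref{def:main:virtual-bisimulation:virtual-bisimulation} (Appendix~\ref{app:equi-def-vb}) shows that $A$ and $B$ are not virtually bisimilar. By the soundness and completeness of the contradiction computation together with $k$-normalization~\cite{Cerans1992,FullReport}, the initial symbolic state then carries a nonempty contradiction that is satisfied by the initial valuation.

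Next I show each step preserves the invariant, splitting on how Def.~\ref{def:main:virtual-bisimulation:virtual-bisimulation} fails at the leaf.
\begin{itemize}
\item \textbf{Sync step.} A sync-transition changes only the virtual clocks. By forward-stability (Appendix~\ref{app:fwd-and-bwd-stab}) its target lies in the sync-target symbolic state. Virtual constraints are evaluated on the virtual clocks, so the contradiction transfers to that target.
\item \textbf{Leaf with no witnessing successor.} If the leaf violates Def.~\ref{def:main:virtual-bisimulation:virtual-bisimulation} because some required transition has no witness, it is a leaf of a Contradiction DAG and nothing needs to be added.
\item \textbf{Delay step.} Otherwise the violation is caused by a successor that also violates the definition. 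If that successor is reached by a delay, pick the delay as in the third item of the statement. Because region-equivalent valuations satisfy the same virtual constraints, any delay landing in the same region as the maximal contradiction-satisfying one also satisfies the contradiction. The chosen values ($k+1$, the integer $d$, or $d+d_{\text{step}}$) land in that region by the choice of $d_{\text{step}}$ described before the proposition.
\item \textbf{Action step.} If instead the violation is caused by an action $\sigma$ of $A$, failure of clause~2 means that for every matching $B$-switch the synchronized target is excluded. Backward-stability lets me relate these concrete targets to the symbolic targets, so they satisfy the respective contradictions. Adding all of them gives exactly the branching condition required of an inner node in Def.~\ref{def:main:new-def-virt-bisim:DAG}.
\end{itemize}

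The main obstacle is finiteness and acyclicity, the last item of the statement. I would use the region quotient: locations times regions for $C_{AB}\cup\chi(C_{AB})$ with bound $k$ is finite. Consider the greatest fixpoint defining the maximal virtual bisimulation on this quotient, computed as a decreasing sequence of approximants $R_0\supseteq R_1\supseteq\dots$. Each excluded state receives a rank, namely the first index at which it is removed. A state removed at stage $n$ is removed because some required move leads only to states removed at stages $<n$. I would therefore choose the transitions in the delay and action steps to strictly decrease this rank. When the nominal choice would revisit a region already on the path, I replace it by the rank-decreasing alternative that the fixpoint characterization guarantees exists.

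Strict rank decrease gives acyclicity and a depth bound equal to the number of regions. Bounded branching, since there are finitely many switches, then gives a finite DAG. The delicate point is reconciling the "maximal delay" heuristic with rank decrease. My first attempt would be to argue that every delay in the same region has the same rank, so the heuristic choice is rank-optimal whenever any delay in that region is. If that fails, I would fall back on the cycle-avoidance clause of the statement.
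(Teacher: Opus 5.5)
Your invariant (every synchronized node satisfies the contradiction of its symbolic state, hence lies in no virtual bisimulation) and the step-by-step extension match what the paper does in Appendix~\ref{app:create-cont-dag}. The gap is the acyclicity argument, which is exactly the point you flag and leave open. Your rank (the first order of Definition~\ref{def:app:DAG-are-correct:bounded-virtual-bisimulation} at which a state drops out) is indeed constant on regions, because every approximant is a union of regions. That does not help, though. The third step takes priority over the action step and picks the \emph{maximal} delay, and this can lead to a state of higher rank. Concretely, let $A$ have a location $l$ with a $c$-self-loop resetting $x$ and an $a$-switch guarded by $x \leq 0$ to a location enabling $b$. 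Let $B$ be identical (with clock $y$) except that no $b$ is enabled after $a$, so $k=0$. The initial state has rank $2$ (via $a$) and is not a leaf, but the third step prescribes the delay $k+1=1$. The target has rank $3$, its only failing move is $c$, and after the sync step this returns exactly to the initial state. Your case split by the cause of failure would take $a$ at the root, but the procedure does not. At the node where the cycle closes there is no alternative, because the cycle-closing move is its unique rank-decreasing move. So your rule ``replace the nominal choice when it would revisit a region on the path'' fails: the deviation is needed at the root, where no cycle is visible yet. Falling back on the cycle-avoidance clause is circular, since that clause is part of what the proposition asserts.

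What closes the argument is to let the rank drive \emph{every} choice from the start. That is, read the fifth step as permitting an override of the maximal-delay preference (and of the delay step's priority) wherever the prescribed move does not decrease the rank. Your construction is then the order-$n$ induction of Proposition~\ref{prop:app:DAG-are-correct:if-not-bisim-then-ct}. The depth is at most twice the rank of the root (counting sync edges). The rank is bounded because the decreasing chain of region-closed approximants stabilizes after at most (number of location pairs times number of regions) steps. Branching is finite. The paper's own proof does not settle this point either. It shows only that the region quotient is finite and that a DAG, \emph{if} the search over alternatives returns one, is a Contradiction DAG, deferring the inner-node case to Proposition~\ref{prop:app:DAG-are-correct:if-not-bisim-then-ct}. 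So your rank function is the right additional ingredient; it just has to be applied globally. Two smaller repairs are also needed. First, your sync-step justification is backwards: a sync transition changes exactly the virtual clocks, so constraints over them do not simply transfer. The post-sync node satisfies the contradiction because the action step selects targets by their synchronized valuation. Second, when the third step is inapplicable you still have to show that the failure comes from an action rather than a delay. This follows from region-invariance of the rank, because a delay that stays in the current region can never lower it.
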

The proof can be found in Appendix~\ref{app:create-cont-dag}.
\section{Experimental Evaluation}
\label{sec:evaluation}


\smallskip \noindent \emph{Implementation.} Our tool is open-source software\footnote{\url{https://github.com/Echtzeitsysteme/tchecker/}}. 
It extends the open-source model-checker TChecker\cite{TChecker}. 
We also provide a GUI~\footnote{\url{https://github.com/Echtzeitsysteme/tchecker-webapp}}.

\smallskip \noindent \emph{Research Questions.}
\begin{compactenum}
  \item[\textbf{(RQ1)}] What are the sizes of witnesses/counterexamples in comparison to the sizes of the input models?
  \item[\textbf{(RQ2)}] What is the runtime overhead caused by witness/counterexample generation compared to bare timed bisimulation checking?
\end{compactenum}

\smallskip \noindent \emph{Community Models.} We use three TA models 
widely used as community benchmarks for TA analysis techniques: \emph{Root Contention Protocol (RCP)}~\cite{IEEE1394RCP}, the \emph{IEEE1394 root contention protocol} (ten locations, 26 switches, two clocks); \emph{Audio/Video Components (AVC)} protocol for message transmissions over a shared bus (18 locations, 30 switches, single clock), which includes a non-deterministic choice, and the synchronous product of a variation of the \emph{train-gate model (TG)}~\cite{ALUR1994183} with three trains (73 locations, 129 switches, three clocks).

\smallskip \noindent \emph{Synthetic Models.} For RQ2, we also generated synthetic models. 
Our generator allows to adjust number of locations, model depth/width, and number of clocks. 

\smallskip \noindent \emph{Comparison Models.} 
For each community model, we generated three mutants using 
the approach in~\cite{MutationTesting} such that  
one mutant is timed bisimilar to the original while the others are not. 
This ground truth was calculated with our tool from~\cite{FullReport} and the witnesses/\allowbreak counterexamples were checked manually using the GUI.
Each original model was checked against its one-to-one copy and the mutants twice: once with and once without witness/counterexample generation. 

\smallskip \noindent \emph{Setup.} All runs were performed on an Intel i7-6700K processor with 64 GB of RAM, Linux Mint 21.2 ("Victoria"). All experiments were repeated ten times.

\begin{table}[tp]
  \centering
  \scalebox{.8}{
  \begin{tabular}{| c | c | c c c c c |}
    \hline
    & & \multicolumn{2}{c}{without} & \multicolumn{2}{c}{with} & Witness/DAG Size \\
    & & avg. time [ms] & variance & avg. time [ms] & variance & \# nodes \\
    \hline
    RCP & 1-to-1 & 1.5 & 0.0 & 2.0 & 0.2 & 10 \\
    & Bisim & 1.5 & 0.0 & 1.8 & 0.0 & 10 \\
    & Non-Bisim 1 & 0.4 & 0.0 & 1.0 & 0.3 & 5 \\
    & Non-Bisim 2 & 0.3 & 0.0 & 0.5 & 0.0 & 4 \\
    \hline
    AV & 1-to-1 & 0.9 & 0.0 & 1.1 & 0.0 & 20 \\
    & Bisim & 0.9 & 0.0 & 1.1 & 0.0 & 20 \\
    & Non-Bisim 1 & 0.9 & 0.0 & 2.2 & 1.0 & 21 \\
    & Non-Bisim 2 & 0.3 & 0.0 & 0.8 & 0.1 & 12 \\
    \hline
    TG & 1-to-1 & 228.4 & 0.3 & 376.4 & 1.0 & 765 \\
    & Bisim & 284.8 & 0.3 &  461.2 & 3.6 & 765 \\
    & Non-Bisim 1 & 233.4 & 3.1 & 237.9 & 0.2 & 18 \\
    & Non-Bisim 2 & 203096.6 & 10310.5 & 203211.1 & 53037.3 & 21 \\
    \hline
  \end{tabular}
  }
  \caption{Results for the community models}
  \label{tab:eval:results:CommunityModels}
\end{table}

\smallskip \noindent \emph{Results (Community Models).} 
The results are summarized in Table~\ref{tab:eval:results:CommunityModels}. 
For the RCP model, the highest average runtime is 2ms with low variances. The bisimilar cases produce witnesses whose size matches the number of TA locations. For the non-bisimilar cases, we have smaller Contradiction DAGs, while runtime is increased by a factor of up to 2.5.
For the AVC model, the sizes of the witness are, once again, equivalent to the number of locations, despite having the non-deterministic choice. 
We can see that witness generation noticeably impacts runtime, especially for non-bisimilar cases.
For the TG model, we see for the bisimilar cases that the average runtime is below 0.5s, still with low variances. However, the size of the witness is ten times the number of locations. In the first non-bisimilar case the average runtime is still quite low, while the Contradiction DAG is small. Since the second non-bisimilar case hits an edge case of the original algorithm, the average runtime increases to 3 minutes and 23 seconds.

\smallskip \noindent \emph{Results (Synthetic Models).} 
We used model sizes ranging from three up to 781 locations as well as models with four and five clocks.
Figure~\ref{fig:eval:eval_art_model} presents the evaluation results. 
The left-hand side shows the results for models with four clocks, while the right-hand side corresponds to models with five clocks. 
The first row depicts the evaluation duration with witness or DAG generation enabled, the second row shows the same evaluation without generation, and the last row illustrates the resulting runtime overhead in percent. Each graph includes results for models with depths of three and five, with the x-axis representing different values of the width parameter. There is always one graph for the experiments in which the corresponding model where checked against its 1-to-1 copy (\emph{timed bisimilar, tb}), and one graph in which the original model is checked against a version where the guard of the very last switch was modified (\emph{not tb}). 
For depth three, the number of locations increases from three to 31, while for depth five, it grows from five to 781. The corresponding runtimes range from below one millisecond up to approximately seven seconds. 

\begin{figure}[h!]
  \centering
  \subfloat[With Witness/DAG (4 Clocks)]{\includegraphics[width=0.4\textwidth]{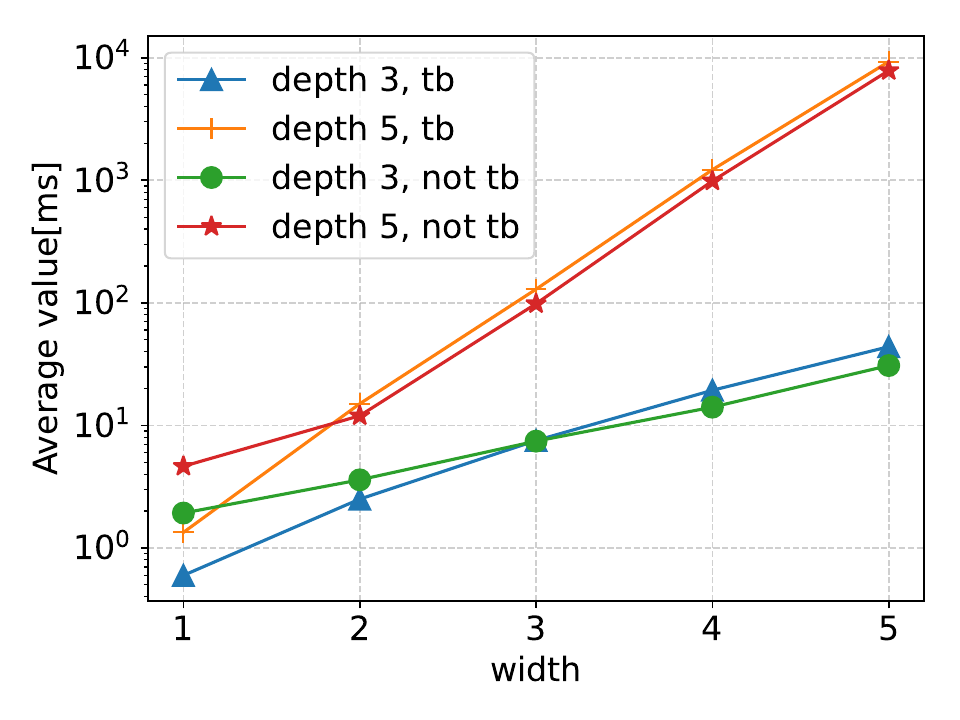}}\hfill
  \subfloat[With Witness/DAG (5 Clocks)]{\includegraphics[width=0.4\textwidth]{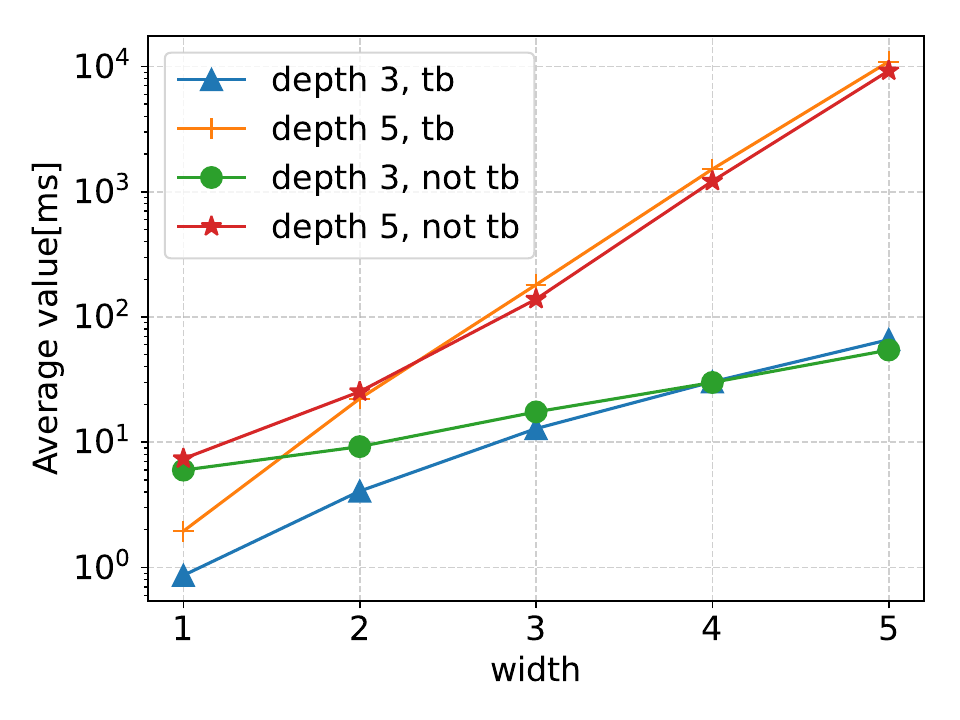}}\\[-1em]
  \subfloat[W/O Witness/DAG (4 Clocks)]{\includegraphics[width=0.4\textwidth]{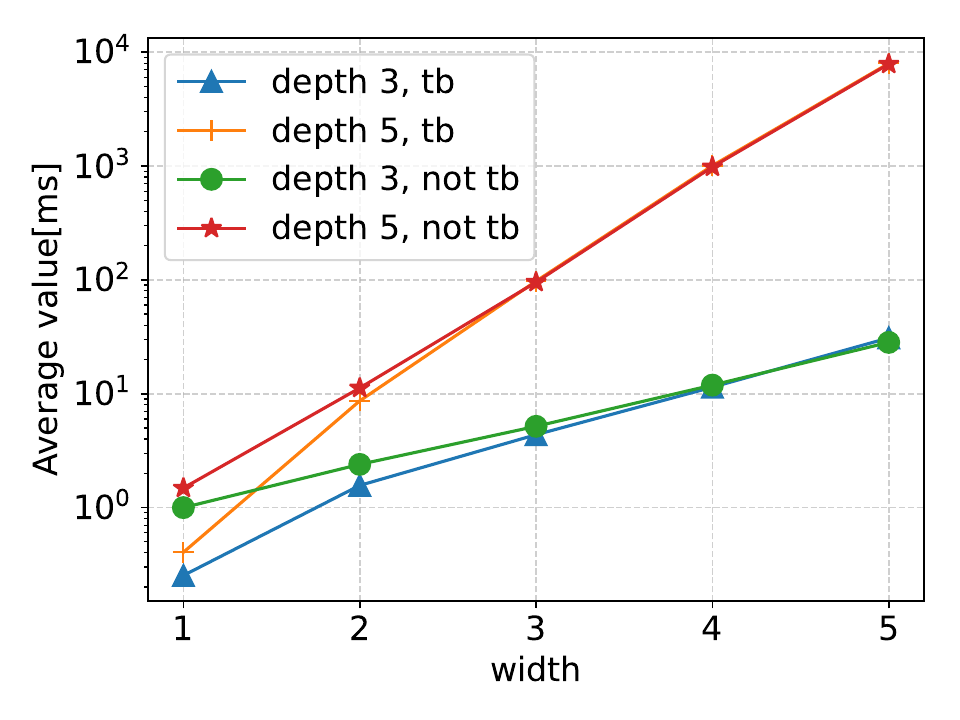}}\hfill
  \subfloat[W/O Witness/DAG (5 Clocks)]{\includegraphics[width=0.4\textwidth]{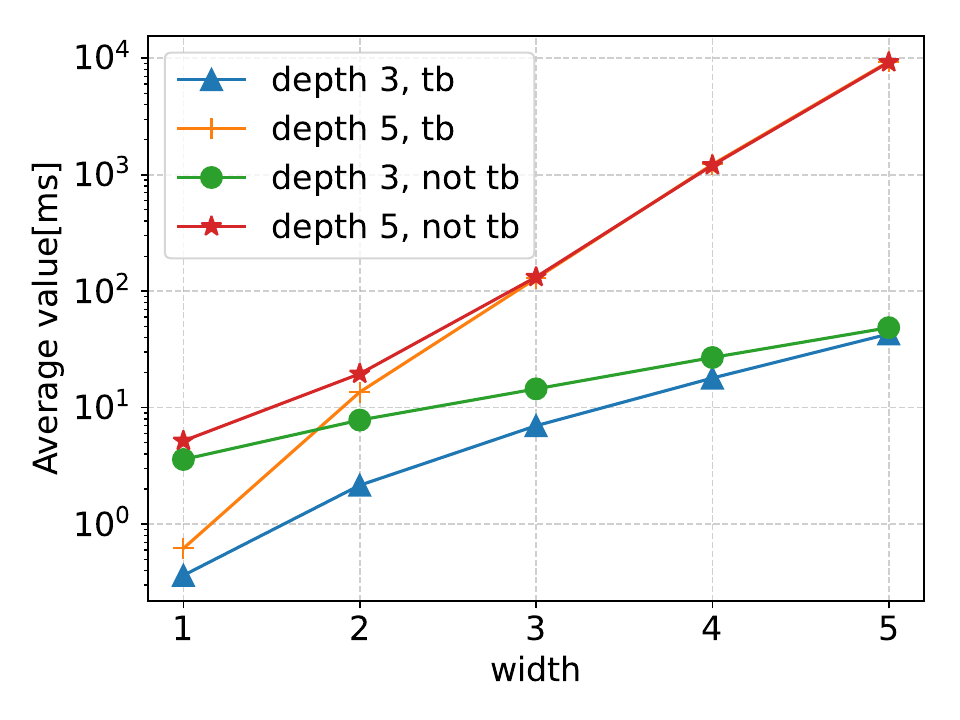}}\\[-1em]
  \subfloat[Runtime Overhead (4 Clocks)]{\includegraphics[width=0.4\textwidth]{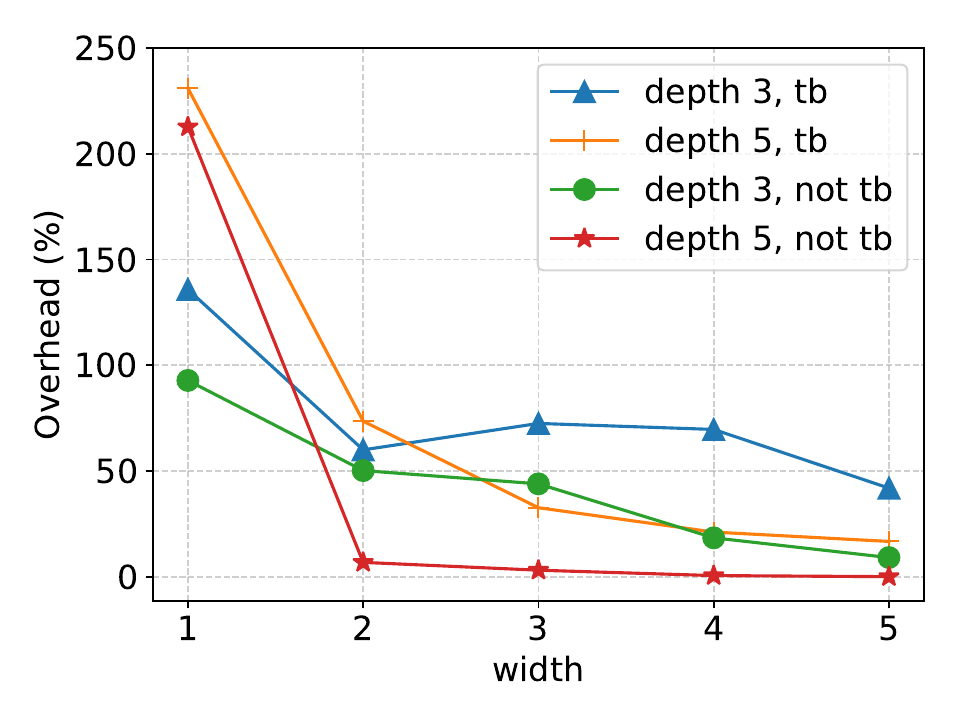}}\hfill
  \subfloat[Runtime Overhead (5 Clocks)]{\includegraphics[width=0.4\textwidth]{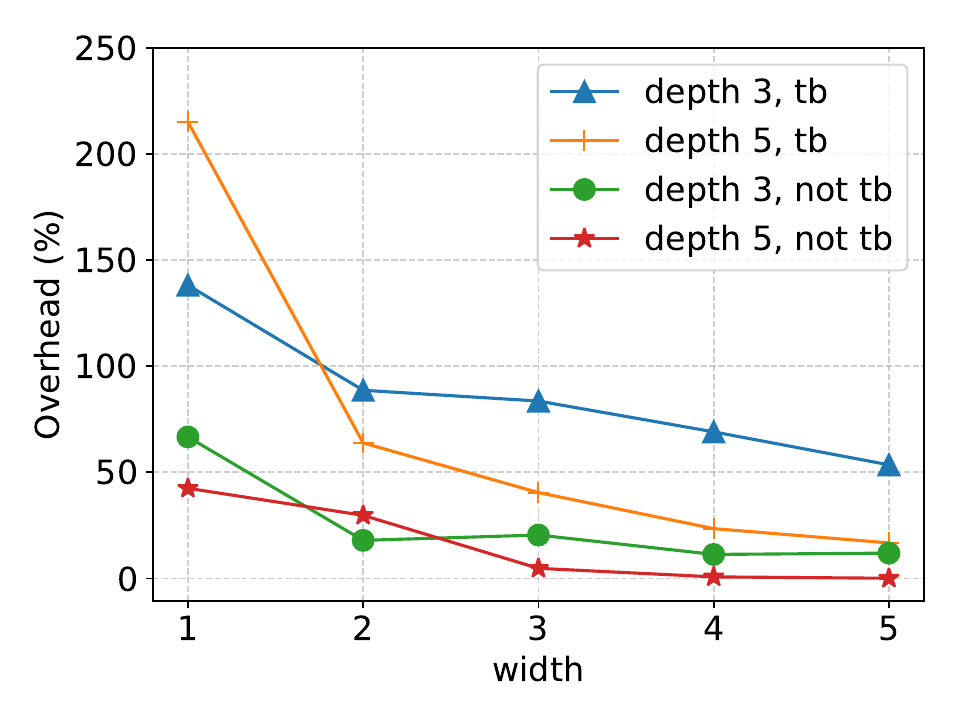}}
  \caption{Evaluation Results for the Artificial Model}
  \label{fig:eval:eval_art_model}
\end{figure}

The results reveal that, while the overhead is relatively large for smaller models, it decreases as the model size increases.

\smallskip \noindent \textbf{Answer to RQ1.} The number of locations is a possible indicator for the size of the Witness/Contradiction DAG, where
the Contradiction DAG is up to twice as large due to additional delay transitions. However, this indicator looses its reliability when the special structure of synchronous products comes into play.

\smallskip \noindent \textbf{Answer to RQ2.} The additional generation of the Witness/Contradiction DAG causes an acceptable runtime overhead
which decreases as model sizes increase.

\section{Conclusion}
\label{chap:conclusion}

We presented a method for generating witnesses to explain that two TA are timed bisimilar, as well as counterexamples to explain why they are not. Our open-source implementation extends the model checker TChecker, and our experimental evaluation shows that the approach is practically applicable and scalable.
As future work, we plan a case study to enhance our understanding of developers' needs and to assess the real-world applicability of our approach, for instance in automated test generation using mutation-based testing~\cite{MutationTesting,CortesEtAl}.
\FloatBarrier
\bibliographystyle{splncs04}
\bibliography{sections/references}

\begin{subappendices}
  \renewcommand{\thesection}{\Alph{section}}%
\section{Correctness of Virtual Bisimulation}%
\label{app:equi-def-vb}

In this appendix, we prove that two TA are virtual bisimilar according to Definition~\ref{def:main:virtual-bisimulation:virtual-bisimulation} if and only if they are timed bisimilar according to Definition~\ref{def:background:Strong-Timed-Bisimulation}. We begin by demonstrating that virtual bisimulation implies timed bisimulation and then establish the opposite direction.

\begin{proposition}
  \label{prop:app:equi-def-vb:VB-implies-TB}
  Given two virtual bisimilar TA $A$ and $B$. $A$ and $B$ are timed bisimilar.
\end{proposition}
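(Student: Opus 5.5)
Given a virtual bisimulation $R_{\text{virt}}$ containing $s_{AB,0}$, I will define a relation on the TLTS states by projection and show it is a timed bisimulation. Since every element of $R_{\text{virt}}$ is synchronized, original and virtual clocks agree, so the original clocks carry the relevant information. Writing $u|_{C_A}$ and $u|_{C_B}$ for the restrictions of a valuation $u$, I set
\[
R = \{ ((l_A, u|_{C_A}), (l_B, u|_{C_B})) \mid ((l_A,l_B),u) \in R_{\text{virt}} \}.
\]
The initial pair lies in $R$ because $s_{AB,0} \in R_{\text{virt}}$ projects to $(l_{A,0},[C_A\to 0])$ and $(l_{B,0},[C_B\to 0])$. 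It remains to check both transfer conditions of Def.~\ref{def:background:Strong-Timed-Bisimulation}; by symmetry (condition 3 of Def.~\ref{def:main:virtual-bisimulation:virtual-bisimulation}) it suffices to treat moves of $A$.

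\emph{Delay.} Suppose $(l_A,u|_{C_A}) \TLTSTrans{d}_A (l_A,u|_{C_A}+d)$, so $u+d \models I(l_A)$ (the invariant mentions only clocks of $A$). Condition 1 gives a transition $((l_A,l_B),u)\STLTSTrans{d}((l_A,l_B),u+d)$ with target in $R_{\text{virt}}$. By Def.~\ref{def:main:virtual-bisimulation:product-synchronizable-tlts} this transition requires $u+d\models I(l_B)$, so $(l_B,u|_{C_B})\TLTSTrans{d}_B(l_B,u|_{C_B}+d)$ in $\TLTS{B}$. Projecting $u+d$ gives the pair in $R$.

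\emph{Action.} Suppose $(l_A,u|_{C_A})\TLTSTrans{\sigma}_A(l_{A,1},[R_A\to0]u|_{C_A})$ via a switch with $u\models g_A$ and $[R_A\to 0]u\models I(l_{A,1})$. Condition 2 gives a product transition to $((l_{A,1},l_{B,1}),u_1)$ whose $C_A$-part equals $[R_A\to0]u$. By the product rule this comes from some $B$-switch $l_B\TATrans{g_B}{\sigma}{R_B}l_{B,1}$ with $u\models g_B$, $u_1=[R_A\cup R_B\to0]u$ and $u_1\models I(l_{B,1})$. Since clock sets are disjoint, $u_1|_{C_B}=[R_B\to0]u|_{C_B}$, which yields the matching $B$-step in $\TLTS{B}$. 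We then split on the two subcases of condition 2. If $u_1$ is synchronized and in $R_{\text{virt}}$, we project directly. Otherwise there is a sync to $u_{1,s}\in R_{\text{virt}}$. By the sync rule, $u_{1,s}$ agrees with $u_1$ on all original clocks $C_{AB}$, so its projections coincide with those of $u_1$. Hence the target pair lies in $R$.

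The main subtlety is this last step: $R$ must be defined via projection onto the original clocks only, so that the sync step, which changes just the virtual clocks, is invisible in the TLTS. I would also check carefully that the invariants and guards of $A$ and $B$ are evaluated only on their own clocks, so that restricting valuations preserves satisfaction. The symmetric case for $B$ is identical.
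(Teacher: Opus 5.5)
Your proposal is correct and follows essentially the same route as the paper: the same projection relation onto the original clocks, the same delay/action case split driven by conditions 1 and 2 of Definition~\ref{def:main:virtual-bisimulation:virtual-bisimulation}, and the same observation that a sync step only changes virtual clocks and is therefore invisible after projection. One harmless imprecision, which the paper shares, is that the product transition supplied by condition 2 may stem from a different $A$-switch with reset set $R_A'$; this does not matter, because condition 2 fixes the $C_A$-part of $u_1$ to $[R_A\to 0]u$ and $R_A'\subseteq C_A$, so you still get $u_1|_{C_B}=[R_B\to 0]u|_{C_B}$.
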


\begin{proof}
To show timed bisimulation of $A$ and $B$, we have to show that there exists a timed bisimulation that contains the pair of initial states of their TLTS. Since $A$ and $B$ are virtual bisimilar, we know that there exists a virtual bisimulation $R_{\text{virt}}$, which contains the initial state of the \PTVC. We define the relation
\begin{align*}
  R_{\text{time}} = \{ & (\TLTSState{A}, \TLTSState{B}) \ | \ \exists ((l_A, l_B), \ClockValuation{AB}) \in R_{\text{virt}} : \\
  & \qquad (\forall c \in C_A : \ClockValuation{A}(c) = \ClockValuation{AB}(c) \land \forall c \in C_B : \ClockValuation{B}(c) = \ClockValuation{AB}(c)) \} 
\end{align*}
and show first that $R_{\text{time}}$ is a timed bisimulation and afterwards that the pair of initial states is element of $R_{\text{virt}}$.

To show that $R_{\text{time}}$ is a timed bisimulation, we have to show the conditions of Definition~\ref{def:background:Strong-Timed-Bisimulation}. Since the second can be shown analogously to the first, we only show the first. To show the first condition, we assume any pair $(\TLTSState{A, 1}, \TLTSState{B, 1}) \in R_{\text{time}}$ and any transition $\TLTSState{A, 1} \TLTSTrans{\mu} \TLTSState{A, 2}$ for $\mu \in \Sigma \cup \TimeDomain$, and show the existence of a transition $\TLTSState{B, 1} \TLTSTrans{\mu} \TLTSState{B, 2}$ with $(\TLTSState{A, 2}, \TLTSState{B, 2}) \in R_{\text{time}}$. By $(\TLTSState{A, 1}, \TLTSState{B, 1}) \in R_{\text{time}}$, we know that there exists an element $((l_{A, 1}, l_{B, 1}), \ClockValuation{\text{\PTVC}, 1}) \in R_{\text{virt}}$ with $\forall c \in C_A : \ClockValuation{A, 1}(c) = \ClockValuation{\text{\PTVC,} 1}(c)$ and $\forall c \in C_B : \ClockValuation{B, 1}(c) = \ClockValuation{\text{\PTVC,} 1}(c)$. Moreover, $((l_{A, 1}, l_{B, 1}), \ClockValuation{\text{\PTVC,} 1})$ is \synchronized. We distinguish two cases.
\begin{enumerate}
\item $\mu \in \TimeDomain$. In this case, $\ClockValuation{A, 1} + \mu \models I(l_{A, 1})$ and since $I(l_{A, 1}) \in \mathcal{B}(C_A)$ and $\forall c \in C_A : \ClockValuation{A, 1}(c) = \ClockValuation{\text{\PTVC,} 1}(c)$, $\ClockValuation{\text{\PTVC,} 1} + \mu \models I(l_{A, 1})$, which implies the existence of a transition $((l_{A, 1}, l_{B,1}), \ClockValuation{\text{\PTVC,} 1}) \STLTSTrans{\mu} ((l_{A, 1}, l_{B, 1}), \ClockValuation{\text{\PTVC,}  1} + \mu)$ with $((l_{A, 1}, l_{B, 1}), \ClockValuation{\text{\PTVC,}  1} + \mu) \in R_{\text{virt}}$ by Definition~\ref{def:main:virtual-bisimulation:virtual-bisimulation}. By Definition~\ref{def:main:virtual-bisimulation:product-synchronizable-tlts}, this implies $\ClockValuation{\text{\PTVC,} 1} + \mu \models I(l_{B, 1})$ and due to $I(l_{B, 1}) \in \mathcal{B}(C_B)$ and $\forall c \in C_B : \ClockValuation{B, 1}(c) = \ClockValuation{\text{\PTVC,} 1}(c)$, this implies the existence of a transition $\TLTSState{B, 1} \TLTSTrans{\mu} \TLTSState{B, 2}$ and since $\forall c \in C_A : \ClockValuation{A, 2}(c) = \ClockValuation{A, 1}(c) + \mu = \ClockValuation{\text{\PTVC,} 1}(c) + \mu = \ClockValuation{\text{\PTVC,} 2}(c)$ and the analog for $B$, $(\TLTSState{A, 2}, \TLTSState{B, 2}) \in R_{\text{time}}$ holds.
\item $\mu \in \Sigma$. In this case, there exists a transition $l_{A, 1} \TATrans{g_A}{\mu}{R_A} l_{A, 2}$ with $\ClockValuation{A, 1} \models g_A$, $\ClockValuation{A, 2} = [R_A \rightarrow 0]\ClockValuation{A, 1}$, $\ClockValuation{A, 2} \models I(l_{A, 2})$ by Definition~\ref{def:background:TLTS}. By Definition~\ref{def:main:virtual-bisimulation:virtual-bisimulation}, this implies the existence of a transition $((l_{A, 1}, l_{B,1}), \ClockValuation{\text{\PTVC,} 1}) \STLTSTrans{\mu} ((l_{A, 2}, l_{B, 2}), \ClockValuation{\text{\PTVC,}  2})$, which implies the existence of a transition $l_{B, 1} \TATrans{g_B}{\sigma}{R_B} l_{B, 2}$ with $\ClockValuation{\text{\PTVC,} 1} \models g_B$, $\ClockValuation{\text{\PTVC,} 2} = [R_A \cup R_B \rightarrow 0]\ClockValuation{\text{\PTVC,} 1}$, and $\ClockValuation{\text{\PTVC,} 2} \models I(l_{B, 2})$. This implies the existence of a transition $\TLTSState{B, 1} \TLTSTrans{\mu} \TLTSState{B, 2}$ with $\ClockValuation{B, 2} = [R_B \rightarrow 0]\ClockValuation{B, 1}$ by Definition~\ref{def:background:TLTS} and we only have to show that $(\TLTSState{A, 2}, \TLTSState{B, 2}) \in R_{\text{time}}$. We distinguish two cases
\begin{enumerate}
\item If $((l_{A, 2}, l_{B, 2}), \ClockValuation{\text{\PTVC,}  2})$ is synchronized, $((l_{A, 2}, l_{B, 2}), \ClockValuation{\text{\PTVC,}  2}) \in R_{\text{virt}}$ by Definition~\ref{def:main:virtual-bisimulation:virtual-bisimulation}. Since $\ClockValuation{\text{\PTVC,} 2} = [R_A \cup R_B \rightarrow 0] \ClockValuation{\text{\PTVC,} 1}$, $\forall c \in C_A : \ClockValuation{\text{\PTVC,} 1}(c) = \ClockValuation{A, 1}(c)$, and $\forall c \in C_B : \ClockValuation{\text{\PTVC,} 1}(c) = \ClockValuation{B, 1}(c)$, we know that $\forall c \in C_A : \ClockValuation{\text{\PTVC,} 2}(c) = [R_A \rightarrow 0]\ClockValuation{\text{\PTVC,} 1} = [R_A \rightarrow 0]\ClockValuation{A, 1}(c) = \ClockValuation{A, 2}(c)$ and the analog for $B$. Therefore, $(\TLTSState{A, 2}, \TLTSState{B, 2}) \in R_{\text{time}}$ by the definition of $R_{\text{time}}$.
\item If $((l_{A, 2}, l_{B, 2}), \ClockValuation{\text{\PTVC,}  2})$ is not synchronized, there exists a synchronized $((l_{A, 2}, l_{B, 2}), \ClockValuation{\text{\PTVC,}  2, s}) \in R_{\text{virt}}$ such that $\forall c \in C_A \cup C_B : \ClockValuation{\text{\PTVC,}  2, s}(c) = \ClockValuation{\text{\PTVC,}  2}(c)$ by Definition~\ref{def:main:virtual-bisimulation:product-synchronizable-tlts} and Definition~\ref{def:main:virtual-bisimulation:virtual-bisimulation}. Therefore, we can show $(\TLTSState{A, 2}, \TLTSState{B, 2}) \in R_{\text{time}}$ analogously to the previous case.
\end{enumerate}
\end{enumerate}
The initial state of the \PTVC{} is $((l_{0, A}, l_{0, B}), [C_A \cup C_B \cup \chi(C_A \cup C_B) \rightarrow 0])$, with $l_{0, A}$ being the initial location of $A$ and $l_{0, B}$ being the initial location of $B$. If this state is element of $R_{\text{virt}}$, the definition of $R_{\text{time}}$ implies $((l_{0, A}, [C \rightarrow 0]), (l_{0, B}, [C_B \rightarrow 0])) \in R_{\text{time}}$. Therefore, Proposition~\ref{prop:app:equi-def-vb:VB-implies-TB} holds.
\end{proof}

Since we now have seen that virtual bisimulation implies timed bisimulation, we now show that timed bisimulation implies virtual bisimulation.

\begin{proposition}
\label{prop:app:equi-def-vb:TB-implies-VB}
Given two timed bisimilar TA $A$ and $B$. $A$ and $B$ are virtually bisimilar.
\end{proposition}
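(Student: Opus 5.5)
We take a timed bisimulation $R_{\text{time}}$ between $\TLTS{A}$ and $\TLTS{B}$ containing the pair of initial states and turn it into a relation on synchronized \PTVC{} states:
\begin{align*}
  R_{\text{virt}} = \{ & ((l_A, l_B), \ClockValuation{}) \in \STLTSAllStates{s} \ | \ ((l_A, \ClockValuation{}|_{C_A}), (l_B, \ClockValuation{}|_{C_B})) \in R_{\text{time}} \}.
\end{align*}
Here $\ClockValuation{}|_{C_A}$ and $\ClockValuation{}|_{C_B}$ are the restrictions to the original clocks. For synchronized states the virtual clocks agree with the original ones, so they carry no extra information. The initial state of the \PTVC{} lies in $\STLTSAllStates{s}$ and restricts to the pair of initial TLTS states, hence it belongs to $R_{\text{virt}}$. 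It remains to check the three conditions of Definition~\ref{def:main:virtual-bisimulation:virtual-bisimulation}; the condition for $B$ is symmetric, so only delays and $A$-actions need treatment.

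\emph{Delays.} Assume $\ClockValuation{} + d \models I(l_A)$. Since $I(l_A) \in \mathcal{B}(C_A)$, the TLTS of $A$ has a transition $(l_A, \ClockValuation{}|_{C_A}) \TLTSTrans{d} (l_A, \ClockValuation{}|_{C_A} + d)$. Condition~1 of Definition~\ref{def:background:Strong-Timed-Bisimulation} gives a matching $d$-transition of $B$. A delay in a TLTS is deterministic, so its target is $(l_B, \ClockValuation{}|_{C_B}+d)$ and this pair lies in $R_{\text{time}}$; moreover $\ClockValuation{}+d \models I(l_B)$. The state is synchronized, so the \PTVC{} has the transition $((l_A,l_B),\ClockValuation{}) \STLTSTrans{d} ((l_A,l_B),\ClockValuation{}+d)$. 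Adding $d$ to every clock keeps each clock equal to its virtual copy, so the target is synchronized. Its restrictions form the pair just obtained, so it belongs to $R_{\text{virt}}$.

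\emph{Actions.} Assume a switch $l_A \TATrans{g_A}{\sigma}{R_A} l_{A,1}$ with $\ClockValuation{} \models g_A$ and $[R_A\rightarrow 0]\ClockValuation{} \models I(l_{A,1})$. This yields a $\sigma$-transition of $A$'s TLTS. Bisimilarity gives a $\sigma$-transition of $B$'s TLTS, which by Definition~\ref{def:background:TLTS} comes from some switch $l_B \TATrans{g_B}{\sigma}{R_B} l_{B,1}$ with $\ClockValuation{}|_{C_B} \models g_B$ and $[R_B\rightarrow0]\ClockValuation{}|_{C_B} \models I(l_{B,1})$, and whose target pair lies in $R_{\text{time}}$. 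The clock sets are disjoint and guards and invariants only mention the clocks of their own automaton. Hence the third rule of Definition~\ref{def:main:virtual-bisimulation:product-synchronizable-tlts} produces
\[
((l_A,l_B),\ClockValuation{}) \STLTSTrans{\sigma} ((l_{A,1},l_{B,1}),\ClockValuation{1}), \qquad \ClockValuation{1}=[R_A\cup R_B\rightarrow0]\ClockValuation{},
\]
and $\ClockValuation{1}$ agrees with $[R_A\rightarrow0]\ClockValuation{}$ on $C_A$. The target belongs to $\STLTSAllStates{}$ because every clock is either still equal to its virtual copy or has been reset to $0$.

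\emph{Case split on the target.} If the target is synchronized, its restrictions are exactly the TLTS target pair, which lies in $R_{\text{time}}$, so the target is in $R_{\text{virt}}$ (case 2a). Otherwise, let $\ClockValuation{1,s}$ agree with $\ClockValuation{1}$ on $C_{AB}$ and set $\ClockValuation{1,s}(\chi(c)) = \ClockValuation{1}(c)$ for every clock $c$. Then $((l_{A,1},l_{B,1}),\ClockValuation{1,s})$ is synchronized, the sync rule gives $((l_{A,1},l_{B,1}),\ClockValuation{1}) \STLTSTrans{\text{sync}} ((l_{A,1},l_{B,1}),\ClockValuation{1,s})$, and since the original clocks are unchanged its restrictions are again the same TLTS pair. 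Thus it lies in $R_{\text{virt}}$ (case 2b).

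The step needing the most care is the action case. One must check that the $B$-transition supplied by the timed bisimulation really corresponds to a single switch whose guard and target invariant hold on the joint valuation. This is where the disjointness of $C_A$ and $C_B$ and the diagonal-free form of the constraints are used. One must also ensure that the sync target is uniquely determined and inherits membership in $R_{\text{time}}$ unchanged, because sync alters only virtual clocks.
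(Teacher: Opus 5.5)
Your proof is correct and follows essentially the same route as the paper. It uses the same relation $R_{\text{virt}}$ of synchronized PTVC states whose restrictions to $C_A$ and $C_B$ form a pair in $R_{\text{time}}$, transfers delay and action steps through the timed bisimulation in the same way (each automaton's guards and invariants mention only its own clocks), and makes the same case split on whether the action target is synchronized, with the sync target inheriting membership because sync leaves the original clocks unchanged. The only remark is that diagonal-freeness is not actually needed here: disjointness of $C_A$ and $C_B$ suffices.
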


\begin{proof}
To show virtual bisimulation of $A$ and $B$, we have to show that there exists a virtual bisimulation that contains the initial state of their \PTVC. Since $A$ and $B$ are timed bisimilar, we know that there exists a timed bisimulation $R_{\text{time}}$, which contains the initial states of the TLTS. We define the relation
\begin{align*}
  R_{\text{virt}} = \{ & ((l_{A, 1}, l_{B, 1}), \ClockValuation{AB}) \in \STLTSAllStates{AB} \ | \ ((l_{A, 1}, l_{B, 1}), \ClockValuation{AB}) \text{ is synchronized } \land \\
  & \exists (\TLTSState{A}, \TLTSState{B}) \in R_{\text{time}} : \\
  & \qquad (\forall c \in C_A : \ClockValuation{A}(c) = \ClockValuation{AB}(c) \land \forall c \in C_B : \ClockValuation{B}(c) = \ClockValuation{AB}(c)) \} 
\end{align*}
and show that $R_{\text{virt}}$ is a virtual bisimulation. To show that $R_{\text{virt}}$ is a virtual bisimulation, we consider any element $((l_{A, 1}, l_{B, 1}), \ClockValuation{\text{PTVC,} 1}) \in R_{\text{virt}}$ and show the conditions of Definition~\ref{def:main:virtual-bisimulation:virtual-bisimulation}. Since the conditions for $B$ can be shown analogously to the conditions for $A$, we only show the conditions for $A$.

By Definition of $R_{\text{virt}}$, there exists a pair of states $(\TLTSState[\text{TLTS}, A, 1]{A, 1}, \TLTSState[\text{TLTS}, B, 1]{B, 1}) \in R_{\text{time}}$ such that $\forall c \in C_A : \ClockValuation{\text{TLTS}, A, 1}(c) = \ClockValuation{\text{PTVC,} 1}(c)$ and $\forall c \in C_B : \ClockValuation{\text{TLTS}, B, 1}(c) = \ClockValuation{\text{PTVC,} 1}(c)$.

We start with the first condition. We assume any $d \in \TimeDomain$. In case $\ClockValuation{\text{PTVC,} 1} + d \not\models I(l_{A, 1})$, the condition holds trivially and, therefore, we assume $\ClockValuation{\text{PTVC,} 1} + d \models I(l_{A, 1})$ and have to show that there exists a transition $((l_{A, 1}, l_{B, 1}), \ClockValuation{\text{PTVC,} 1}) \STLTSTrans{d} ((l_{A, 1}, l_{B, 1}), \ClockValuation{\text{PTVC,} 1} + d)$ and $((l_{A, 1}, l_{B, 1}), \ClockValuation{\text{PTVC,} 1} + d) \in R_{\text{virt}}$.

Since $I(l_{A, 1}) \in \mathcal{B}(C_A)$, there exists a transition $\TLTSState[\text{TLTS}, A, 1]{A, 1} \TLTSTrans{d} (l_{A, 1}, \allowbreak \ClockValuation{\text{TLTS}, A, 1} + d)$. By Definition~\ref{def:background:Strong-Timed-Bisimulation}, this implies the existence of a transition $\TLTSState[\text{TLTS}, B, 1]{B} \TLTSTrans{d} (l_{B, 1}, \ClockValuation{\text{TLTS}, B, 1} + d)$ with $((l_{A, 1}, \ClockValuation{\text{TLTS}, A, 1} + d), (l_{B, 1}, \ClockValuation{\text{TLTS}, B, 1} + d)) \in R_{\text{time}}$. This can only be the case if $\ClockValuation{\text{TLTS}, B, 1} + d \models I(l_{B, 1})$. Since $I(l_{B, 1}) \in \mathcal{B}(C_B)$ and $\forall c \in C_B: (\ClockValuation{\text{TLTS}, B, 1} + d)(c) = (\ClockValuation{\text{PTVC,} 1} + d)(c)$, this implies $\ClockValuation{\text{PTVC,} 1} + d \models I(l_{B, 1})$. By Definition~\ref{def:main:virtual-bisimulation:product-synchronizable-tlts}, this implies the existence of a transition $((l_{A, 1}, l_{B, 1}), \ClockValuation{\text{PTVC,} 1}) \STLTSTrans{d} ((l_{A, 1}, l_{B, 1}), \ClockValuation{\text{PTVC,} 1} + d)$ and due to $\forall c \in C_A : (\ClockValuation{\text{PTVC,} 1} + d)(c) = (\ClockValuation{\text{TLTS}, A, 1} + d)(c)$ and the analog for $B$, $((l_{A, 1}, l_{B, 1}), \ClockValuation{\text{PTVC,} 1} + d) \in R_{\text{virt}}$ holds.

We now show the second condition. We assume a $\sigma \in \Sigma$ and transition $l_{A, 1} \TATrans{g_A}{\sigma}{R_A} l_{A, 2}$ with $\ClockValuation{\text{PTVC,} 1} \models I(g_A)$ and $[R_A \rightarrow 0]\ClockValuation{\text{PTVC,} 1} \models I(l_{A, 2})$ (otherwise, the condition holds trivially). We now have to show that there exists a transition $((l_{A, 1}, l_{B, 1}), \ClockValuation{\text{PTVC,} 1}) \STLTSTrans{\sigma} ((l_{A, 2}, l_{B, 2}), \ClockValuation{\text{PTVC,} 2})$ with $\forall c \in C_A : ([R_A \rightarrow 0 ]\ClockValuation{\text{PTVC,} 1})(c) = (\ClockValuation{\text{PTVC,} 2})(c)$ and either
\begin{itemize}
  \item $((l_{A, 2}, l_{B, 2}), \ClockValuation{\text{PTVC,} 2})$ is synchronized and $((l_{A, 2}, l_{B, 2}), \ClockValuation{\text{PTVC,} 2}) \in R_{\text{virt}}$ holds, or
  \item $((l_{A, 2}, l_{B, 2}), \ClockValuation{\text{PTVC,} 2})$ is not synchronized and there exists a transition $((l_{A, 2}, \allowbreak l_{B, 2}), \ClockValuation{\text{PTVC,} 2}) \STLTSTrans{\text{sync}} ((l_{A, 2}, l_{B, 2}), \ClockValuation{\text{PTVC,} 2, s})$ with $((l_{A, 2}, l_{B, 2}), \ClockValuation{\text{PTVC,} 2, s}) \in \\ R_{\text{virt}}$.
\end{itemize}

Since $\forall c \in C_A : \ClockValuation{\text{TLTS}, A, 1}(c) = \ClockValuation{\text{PTVC,} 1}(c)$, the transition $l_{A, 1} \TATrans{g_A}{\sigma}{R_A} l_{A, 2}$ implies by Definition~\ref{def:background:TLTS} the existence of a transition $\TLTSState[\text{TLTS}, A, 1]{A, 1} \TLTSTrans{\sigma} \TLTSState[\text{TLTS}, A, 2]{A, 2}$ with $\ClockValuation{\text{TLTS}, A, 2} = [R_A \rightarrow 0] \ClockValuation{\text{TLTS}, A, 1}$. By Definition~\ref{def:background:Strong-Timed-Bisimulation}, this implies the existence of a transition $\TLTSState[\text{TLTS}, B, 1]{B, 1} \TLTSTrans{\sigma} \TLTSState[\text{TLTS}, B, 2]{B, 2}$ with $(\TLTSState[\text{TLTS}, A, 2]{A, 2}, \TLTSState[\text{TLTS}, B, 2]{B, 2}) \in R_{\text{time}}$. By Definition~\ref{def:background:TLTS}, this can only be the case if there exists a transition $l_{B, 1} \TATrans{g_B}{\sigma}{R_B} l_{B, 2}$ with $\ClockValuation{\text{TLTS}, B, 1} \models g_B$, $\ClockValuation{\text{TLTS}, B, 2} = [R_B \rightarrow 0]\ClockValuation{\text{TLTS}, B, 1}$, and $\ClockValuation{\text{TLTS}, B, 2} \models I(l_{B, 2})$. Since $\forall c \in C_B : \ClockValuation{\text{PTVC,} 1}(c) = \ClockValuation{\text{TLTS}, B, 1}(c)$, this implies by Definition~\ref{def:main:virtual-bisimulation:product-synchronizable-tlts} a transition $((l_{A, 1}, l_{B, 1}), \ClockValuation{\text{PTVC,} 1}) \STLTSTrans{\sigma} ((l_{A, 2}, l_{B, 2}), \ClockValuation{\text{PTVC,} 2})$ with $\ClockValuation{\text{PTVC,} 2} = [R_A \cup R_B \rightarrow 0]\ClockValuation{\text{PTVC,} 1}$. Therefore, $\forall c \in C_A : \ClockValuation{\text{PTVC,}2}(c) = ([R_A \rightarrow 0] \ClockValuation{\text{PTVC,}1})(c) = ([R_A \rightarrow 0] \ClockValuation{\text{TLTS}, A, 1})(c) = \ClockValuation{\text{TLTS}, A, 2}(c)$ and the analog for $B$ holds. We distinguish two cases:
\begin{enumerate}
\item If $((l_{A, 2}, l_{B, 2}), \ClockValuation{\text{PTVC,} 2})$ is synchronized, then $((l_{A, 2}, l_{B, 2}), \ClockValuation{\text{PTVC,} 2}) \in R_{\text{virt}}$ holds by $(\TLTSState[\text{TLTS}, A, 2]{A, 2}, \TLTSState[\text{TLTS}, B, 2]{B, 2}) \in R_{\text{time}}$ and the definition of $R_{\text{virt}}$.
\item If $((l_{A, 2}, l_{B, 2}), \ClockValuation{\text{PTVC,} 2})$ is not synchronized, then we know by Definition~\ref{def:main:virtual-bisimulation:product-synchronizable-tlts} that there exists a transition $((l_{A, 2}, l_{B, 2}), \ClockValuation{\text{PTVC,} 2}) \STLTSTrans{\text{sync}} ((l_{A, 2}, l_{B, 2}), \ClockValuation{\text{PTVC,} 2, s})$ with $\forall c \in C_A \cup C_B : \ClockValuation{\text{PTVC,} 2}(c) = \ClockValuation{\text{PTVC,} 2, s}(c)$. Therefore, $((l_{A, 2}, l_{B, 2}), \allowbreak \ClockValuation{\text{PTVC,} 2, s}) \in R_{\text{virt}}$ holds by $(\TLTSState[\text{TLTS}, A, 2]{A, 2}, \TLTSState[\text{TLTS}, B, 2]{B, 2}) \in R_{\text{time}}$ and the definition of $R_{\text{virt}}$.
\end{enumerate}
If $((l_{0, A}, [C \rightarrow 0]), (l_{0, B}, [C_B \rightarrow 0])) \in R_{\text{time}}$, then $((l_{0, A}, l_{0, B}), [C_A \cup C_B \cup \chi(C_A \cup C_B) \rightarrow 0]) \in R_{\text{virt}}$ by definition of $R_{\text{virt}}$. Therefore, if the pair of initial states of the TLTS is element of $R_{\text{time}}$, then the initial state of the \PTVC{} is element of $R_{\text{virt}}$.
\end{proof}

Since for any two TA virtual bisimulation implies timed bisimulation and timed bisimulation implies virtual bisimulation, we have shown the equivalence of both definitions.
\section{Forward and Backward Stability}%
\label{app:fwd-and-bwd-stab}

In this appendix, we show that the \PTVC{} and the \PZVC{} are forward and backward stable.

\begin{proposition}[Forward Stability]
\label{prop:app:fwd-and-bwd-stab:forward-stability}
Given two TA $A$, $B$, with their \PTVC{} $(\STLTSAllStates{}, s_{\text{\PTVC}, 0}, \Sigma \cup \TimeDomain \cup \{\text{sync}\}, \STLTSTrans{})$ and their \PZVC{} $(\SZGAllStates{}, s_{\text{\PZVC}, 0}, \Sigma \cup \{\varepsilon\} \cup \{\text{sync}\}, \SZGTrans{})$, and a state $((l_A, l_B), \ClockValuation{}) \in \STLTSAllStates{}$.
\begin{enumerate}
\item If there exists a transition $((l_A, l_B), \ClockValuation{}) \allowbreak \STLTSTrans{\text{sync}} ((l_A, l_B), \ClockValuation{s})$ then for any $((l_A, l_B), \allowbreak \Zone{})$ with $((l_A, l_B), \ClockValuation{}) \in ((l_A, l_B), \Zone{})$ there exists a transition $((l_A, l_B), \Zone{}) \allowbreak \SZGTrans{\text{sync}} ((l_A, l_B), \Zone{s})$ with $((l_A, l_B), \ClockValuation{s}) \in ((l_A, l_B), \Zone{s})$. 
\item If there exists a transition $((l_A, l_B), \ClockValuation{}) \STLTSTrans{d} ((l_A, l_B), \ClockValuation{} + d)$ then for any synchronized symbolic state $((l_A, l_B), \Zone{})$ with $((l_A, l_B), \ClockValuation{}) \in ((l_A, l_B), \Zone{})$ there exists a transition $((l_A, l_B), \Zone{}) \SZGTrans{\varepsilon} ((l_A, l_B), \Zone{\varepsilon})$ with $((l_A, l_B), \ClockValuation{} + d) \in ((l_A, l_B), \Zone{\varepsilon})$.
\item If there exists a transition $((l_A, l_B), \ClockValuation{}) \STLTSTrans{\sigma} ((l_{A, 1}, l_{B, 1}), \ClockValuation{1})$ then for any synchronized symbolic state $((l_A, l_B), \Zone{})$ with $((l_A, l_B), \ClockValuation{}) \in ((l_A, l_B), \Zone{})$ there exists a transition $((l_A, l_B), \Zone{}) \SZGTrans{\sigma} ((l_{A, 1}, l_{B, 1}), \Zone{1})$ with $((l_{A, 1}, l_{B, 1}), \ClockValuation{1}) \allowbreak \in ((l_{A, 1}, l_{B, 1}), \Zone{1})$.
\end{enumerate}
\end{proposition}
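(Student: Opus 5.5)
The plan is to prove the three items separately, each by unfolding Definition~\ref{def:main:virtual-bisimulation:product-synchronizable-tlts} for the concrete transition and Definition~\ref{def:main:finding-a-timed-bisimulation:product-of-szg} for the symbolic one. In every case we exhibit the symbolic successor demanded by the \PZVC{} rules and check that the concrete successor lies in its zone. Throughout, membership $((l_A, l_B), \ClockValuation{}) \in ((l_A, l_B), \Zone{})$ just means $\ClockValuation{} \in \Zone{}$.

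\emph{Item 2 (delay).} Since the concrete delay exists, $((l_A, l_B), \ClockValuation{}) \in \STLTSAllStates{s}$. Moreover $\ClockValuation{}+d \models I(l_A)$ and $\ClockValuation{}+d \models I(l_B)$. Because $((l_A, l_B), \Zone{})$ is assumed synchronized, the $\varepsilon$-rule applies and yields $\Zone{\varepsilon} = \Zone{}^{\uparrow} \land I(l_A) \land I(l_B)$. From $\ClockValuation{} \in \Zone{}$ we get $\ClockValuation{}+d \in \Zone{}^{\uparrow}$, and together with the two invariants this gives $\ClockValuation{}+d \in \Zone{\varepsilon}$.

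\emph{Item 3 (action).} The concrete rule gives switches $l_A \TATrans{g_A}{\sigma}{R_A} l_{A,1}$ and $l_B \TATrans{g_B}{\sigma}{R_B} l_{B,1}$ with $\ClockValuation{} \models g_A \land g_B$, $\ClockValuation{1} = [R_A \cup R_B \rightarrow 0]\ClockValuation{}$, and $\ClockValuation{1}$ satisfying both target invariants. Then $\ClockValuation{} \in \Zone{} \land g_A \land g_B$, and since resets commute, $\ClockValuation{1} \in R_A(R_B(\Zone{} \land g_A \land g_B))$. Intersecting with the invariants shows $\ClockValuation{1}$ lies in the candidate target zone. That zone is therefore nonempty, so the symbolic $\sigma$-transition exists because the source is synchronized. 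One small point remains: the target must belong to $\SZGAllStates{}$, i.e.\ every original clock is either equal to its virtual clock throughout the zone or $0$ throughout it. This holds because the source zone satisfies $c = \chi(c)$ everywhere, and afterwards each $c$ is either reset for all valuations or left untouched.

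\emph{Item 1 (sync).} This is the main obstacle, since the sync rule of the \PZVC{} is not given by an explicit operation; it is characterized only by a projection condition. First, the concrete sync requires $\ClockValuation{} \notin \STLTSAllStates{s}$, so some $c$ has $\ClockValuation{}(c) \neq \ClockValuation{}(\chi(c))$, which forces $\ClockValuation{}(c)=0 \neq \ClockValuation{}(\chi(c))$. Hence the given $\Zone{}$ is not synchronized. This assumes $\Zone{}$ is a state of $\SZGAllStates{}$; otherwise the conclusion cannot be read literally.

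Next I will define $\Zone{s} = \{\ClockValuation{}' \mid \exists \ClockValuation{}'' \in \Zone{} : \forall c \in C_{AB} : \ClockValuation{}'(c) = \ClockValuation{}''(c) \land \ClockValuation{}'(\chi(c)) = \ClockValuation{}''(c)\}$, i.e.\ overwrite each virtual clock with its original clock. This zone is obtained from $\Zone{}$ by existentially eliminating the virtual clocks and adding the constraints $\chi(c)=c$, so it is again a zone. By construction it is synchronized, and its valuations agree on $C_{AB}$ exactly with those of $\Zone{}$, so both projection conditions of the sync rule hold. Hence $((l_A, l_B), \Zone{}) \SZGTrans{\text{sync}} ((l_A, l_B), \Zone{s})$. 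Finally, the concrete $\ClockValuation{s}$ agrees with $\ClockValuation{}$ on $C_{AB}$ and is synchronized, so $\ClockValuation{s}(\chi(c)) = \ClockValuation{}(c)$. Taking $\ClockValuation{}'' = \ClockValuation{}$ shows $\ClockValuation{s} \in \Zone{s}$.
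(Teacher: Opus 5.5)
Your proof is correct and follows the paper's route. Each item is handled by unfolding the PTVC rule for the concrete transition and the matching PZVC rule, then checking that the concrete successor lies in the symbolic successor zone. Your version is also slightly more complete than the paper's in two places. In item~1 you explicitly construct the synchronized target zone, whereas the paper only asserts that the sync transition exists and then uses that a synchronized valuation is determined by its original-clock values. In item~3 you check that the action target is actually a PZVC state.
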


\begin{proof}
We show one statement after the other.

In the first case, $((l_A, l_B), \ClockValuation{})$ is not synchronized, which implies that $((l_A, l_B), \allowbreak \Zone{})$ is not synchronized, which implies the existence of a transition $((l_A, l_B), \Zone{}) \allowbreak \SZGTrans{\text{sync}} ((l_A, l_B), \Zone{s})$ with $\forall \ClockValuation{} \in \Zone{} : \exists \ClockValuation{\text{sync}} \in \Zone{\text{s}} : \forall c \in C_{AB} : \ClockValuation{}(c) = \ClockValuation{\text{sync}}(c)$ and $((l_A, l_B), \Zone{\text{s}})$ is synchronized by Definition~\ref{def:main:finding-a-timed-bisimulation:product-of-szg} (which implies that all contained states are synchronized). By Definition~\ref{def:main:virtual-bisimulation:product-synchronizable-tlts}, $((l_A, l_B), \ClockValuation{s})$ is synchronized, which implies $\forall c \in C_{AB} : \ClockValuation{}(c) = \ClockValuation{}(\chi(c))$, and $\forall c \in C_{AB} : \ClockValuation{s}(c) = \ClockValuation{}(c)$ holds. Since the values of all clocks are specified, $\ClockValuation{s}$ is the only clock valuation of a synchronized state with the same original clock values as $\ClockValuation{}$. $((l_A, l_B), \ClockValuation{s}) \in ((l_A, l_B), \Zone{s})$ holds by $\forall \ClockValuation{} \in \Zone{} : \exists \ClockValuation{\text{sync}} \in \Zone{\text{s}} : \forall c \in C_{AB} : \ClockValuation{}(c) = \ClockValuation{\text{sync}}(c)$.

In the second case, we know by Definition~\ref{def:main:virtual-bisimulation:product-synchronizable-tlts} that $((l_A, l_B), \ClockValuation{})$ is synchronized and by the precondition that $((l_A, l_B), \Zone{})$ is synchronized. Moreover, we assume a transition $((l_A, l_B), \ClockValuation{}) \STLTSTrans{d} ((l_A, l_B), \ClockValuation{} + d)$ and we have to show that there exists a transition $((l_A, l_B), \Zone{}) \SZGTrans{\varepsilon} ((l_A, l_B), \Zone{\varepsilon})$ with $((l_A, l_B), \ClockValuation{} + d) \in ((l_A, l_B), \Zone{\varepsilon})$. By Definition~\ref{def:main:finding-a-timed-bisimulation:product-of-szg}, there exists a transition $((l_A, l_B), \Zone{}) \SZGTrans{\varepsilon} ((l_A, l_B), \Zone{}^{\uparrow} \land I(l_A) \land I(l_B))$. $\ClockValuation{} + d \in \Zone{}^{\uparrow}$ holds trivial by the definition of the future-operation (see Section~\ref{sec:background}). $(\ClockValuation{}+d) \models I(l_A)$ and $(\ClockValuation{} + d) \models I(l_B)$ hold by Definition~\ref{def:main:virtual-bisimulation:product-synchronizable-tlts}, which implies that $((l_A, l_B), \ClockValuation{} + d) \in ((l_A, l_B), \Zone{}^{\uparrow} \land I(l_A) \land I(l_B))$ holds.

In the third case, we know by Definition~\ref{def:main:virtual-bisimulation:product-synchronizable-tlts} that $((l_A, l_B), \ClockValuation{})$ is synchronized and by the precondition that $((l_A, l_B), \Zone{})$ is synchronized. Moreover, we assume a transition $((l_A, l_B), \ClockValuation{}) \STLTSTrans{\sigma} ((l_{A, 1}, l_{B, 1}), \ClockValuation{1})$ and we have to show that there exists a transition $((l_A, l_B), \Zone{}) \SZGTrans{\sigma} ((l_{A, 1}, l_{B, 1}), \Zone{1})$ with $((l_{A, 1}, l_{B, 1}), \ClockValuation{1}) \in ((l_{A, 1}, l_{B, 1}), \Zone{1})$. By Definition~\ref{def:main:virtual-bisimulation:product-synchronizable-tlts}, the given transition implies the existence of a transition $l_{A} \TATrans{g_A}{\sigma}{R_A} l_{A, 1}$ with $\ClockValuation{} \models g_A$ and the existence of a transition $l_{B} \TATrans{g_B}{\sigma}{R_B} l_{B, 1}$ with $\ClockValuation{} \models g_B$. Moreover, $\ClockValuation{1}(c) = [R_A \cup R_B \rightarrow 0] \ClockValuation{}(c)$, $\ClockValuation{1} \models I(l_{A, 1})$, and $\ClockValuation{1} \models I(l_{B, 1})$ hold. By Definition~\ref{def:main:finding-a-timed-bisimulation:product-of-szg}, these transitions imply the existence of a transition $((l_A, l_B), \Zone{}) \SZGTrans{\sigma} ((l_{A, 1}, l_{B, 1}), R_A(R_B(\Zone{} \land g_A \land g_B)) \land I(l_{A, 1}) \land I(l_{B, 1}))$, in case the target zone is not empty. Therefore, we only have to show that $\ClockValuation{1} \in R_A(R_B(\Zone{} \land g_A \land g_B)) \land I(l_{A, 1}) \land I(l_{B, 1})$. Since $\ClockValuation{}$ fulfills the guards, $\ClockValuation{} \in \Zone{} \land g_A \land g_B$ holds. Therefore, $\ClockValuation{1} \in R_A(R_B(\Zone{} \land g_A \land g_B))$ by $\ClockValuation{1}(c) = [R_A \cup R_B \rightarrow 0] \ClockValuation{}(c)$ and the definition of the reset-Operation. Due to $\ClockValuation{1} \models I(l_{A, 1})$, and $\ClockValuation{1} \models I(l_{B, 1})$, $\ClockValuation{1} \in R_A(R_B(\Zone{} \land g_A \land g_B)) \land I(l_{A, 1}) \land I(l_{B, 1})$ holds.

Therefore, the proposition holds.
\end{proof}

We now show backward stability.

\begin{proposition}[Backward Stability]
\label{prop:app:fwd-and-bwd-stab:backward-stability}
Given two TA $A$, $B$, with their \PTVC{} $(\STLTSAllStates{}, s_{\text{\PTVC}, 0}, \Sigma \cup \TimeDomain \cup \{\text{sync}\}, \STLTSTrans{})$ and their \PZVC{} $(\SZGAllStates{}, s_{\text{\PZVC}, 0}, \Sigma \cup \{\varepsilon\} \cup \{\text{sync}\}, \SZGTrans{})$, and a symbolic state $((l_A, l_B), \Zone{}) \in \SZGAllStates{}$.
\begin{enumerate}
\item If there exists a transition $((l_A, l_B), \Zone{}) \SZGTrans{\text{sync}} ((l_A, l_B), \Zone{s})$, for any $((l_A, l_B), \ClockValuation{s}) \allowbreak \in ((l_A, l_B), \Zone{s})$ either $((l_A, l_B), \ClockValuation{s}) \in ((l_A, l_B), \Zone{})$ holds or there exists a $((l_A, l_B), \ClockValuation{}) \in ((l_A, l_B), \Zone{})$ with $((l_A, l_B), \ClockValuation{}) \allowbreak \SZGTrans{\text{sync}} ((l_A, l_B), \ClockValuation{s})$.
\item If there exists a transition $((l_A, l_B), \Zone{}) \SZGTrans{\varepsilon} ((l_A, l_B), \Zone{\varepsilon})$, for any $((l_A, l_B), \ClockValuation{\varepsilon}) \allowbreak \in ((l_A, l_B), \Zone{\varepsilon})$, there exists a $((l_A, l_B), \ClockValuation{}) \in ((l_A, l_B), \Zone{})$ and a $d \in \TimeDomain$ such that $((l_A, l_B), \ClockValuation{}) \STLTSTrans{d} ((l_A, l_B), \ClockValuation{\varepsilon})$.
\item If there exists a transition $((l_A, l_B), \Zone{}) \SZGTrans{\sigma} ((l_{A, 1}, l_{B, 1}), \Zone{\sigma})$, for any $((l_{A, 1}, l_{B, 1} \allowbreak), \ClockValuation{\sigma}) \in ((l_{A, 1}, l_{B, 1}), \Zone{\sigma})$, there exists a $((l_A, l_B), \ClockValuation{}) \in ((l_A, l_B), \Zone{})$ such that $((l_A, l_B), \ClockValuation{}) \STLTSTrans{\sigma} ((l_{A, 1}, l_{B, 1}), \ClockValuation{\sigma})$.
\end{enumerate}
\end{proposition}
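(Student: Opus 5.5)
We prove the three statements of Proposition~\ref{prop:app:fwd-and-bwd-stab:backward-stability} one after the other. In each case we unfold the symbolic operation of Definition~\ref{def:main:finding-a-timed-bisimulation:product-of-szg} that produced the target zone into the set-theoretic definitions of the $\land$-, future- and reset-operations from Section~\ref{sec:background}. This yields a concrete predecessor valuation, and we then check the side conditions of Definition~\ref{def:main:virtual-bisimulation:product-synchronizable-tlts}.

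\textbf{Sync case.} Let $\ClockValuation{s} \in \Zone{s}$. By the second quantified condition of the sync rule there is $\ClockValuation{} \in \Zone{}$ with $\ClockValuation{}(c) = \ClockValuation{s}(c)$ for all $c \in C_{AB}$. Because $((l_A,l_B),\Zone{s})$ is synchronized, $\ClockValuation{s}$ is synchronized. We then split on $\ClockValuation{}$.
\begin{compactenum}
\item If $\ClockValuation{}$ is synchronized, then $\ClockValuation{}$ and $\ClockValuation{s}$ agree on all original clocks and all virtual clocks, so $\ClockValuation{} = \ClockValuation{s}$. Hence $((l_A,l_B),\ClockValuation{s}) \in ((l_A,l_B),\Zone{})$, which is the first disjunct.
\item If $\ClockValuation{}$ is not synchronized, the first rule of Definition~\ref{def:main:virtual-bisimulation:product-synchronizable-tlts} gives the transition $((l_A,l_B),\ClockValuation{}) \STLTSTrans{\text{sync}} ((l_A,l_B),\ClockValuation{s})$, which is the second disjunct.
\end{compactenum}
This case split is exactly why the statement is phrased as a disjunction. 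A non-synchronized zone may still contain synchronized valuations, for example where a reset clock and its virtual clock both happen to be $0$.

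\textbf{Delay case.} Let $\ClockValuation{\varepsilon} \in \Zone{}^{\uparrow} \land I(l_A) \land I(l_B)$. By the definition of the future-operation, $\ClockValuation{\varepsilon} = \ClockValuation{} + d$ for some $\ClockValuation{} \in \Zone{}$ and $d \in \TimeDomain$. The source symbolic state is synchronized by the $\varepsilon$-rule, so $\ClockValuation{}$ is synchronized, and $\ClockValuation{}+d$ satisfies both invariants. The delay rule of Definition~\ref{def:main:virtual-bisimulation:product-synchronizable-tlts} therefore gives $((l_A,l_B),\ClockValuation{}) \STLTSTrans{d} ((l_A,l_B),\ClockValuation{}+d)$.

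\textbf{Action case.} Let $\ClockValuation{\sigma} \in R_A(R_B(\Zone{} \land g_A \land g_B)) \land I(l_{A,1}) \land I(l_{B,1})$. By the definitions of the reset- and $\land$-operations there is $\ClockValuation{} \in \Zone{}$ with $\ClockValuation{} \models g_A$, $\ClockValuation{} \models g_B$, and $\ClockValuation{\sigma} = [R_A \rightarrow 0]([R_B \rightarrow 0]\ClockValuation{})$. We rewrite the nested reset as $[R_A \cup R_B \rightarrow 0]\ClockValuation{}$. The source symbolic state is synchronized, so $\ClockValuation{}$ is synchronized. Together with the target invariants, every premise of the action rule of Definition~\ref{def:main:virtual-bisimulation:product-synchronizable-tlts} holds, which gives the required transition.

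The only real obstacle is the sync case. There one must argue that the synchronized valuation sharing its original-clock values with $\ClockValuation{}$ is unique, since the virtual clocks are forced to equal their originals. One must also handle the degenerate situation where $\ClockValuation{}$ is itself already synchronized, because the PTVC sync rule requires a non-synchronized source. The delay and action cases are direct unfoldings of the definitions.
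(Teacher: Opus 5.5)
Your proof is correct and follows the paper's proof essentially step for step. In the sync case you take a valuation of the source zone that agrees with the target valuation on $C_{AB}$ and split on whether it is synchronized, and in the delay and action cases you unfold the future- and reset-operations. The only cosmetic difference is that in the non-synchronized subcase you instantiate the PTVC sync rule directly with the given target valuation, whereas the paper first takes some sync-successor and then shows it equals that valuation.
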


\begin{proof}
We show one statement after the other.

By Definition~\ref{def:main:finding-a-timed-bisimulation:product-of-szg}, there exists a $\ClockValuation{\text{prev}} \in \Zone{}$ such that $\forall c \in C_{AB} : \ClockValuation{\text{prev}}(c) = \ClockValuation{s}(c)$. We distinguish two cases:
\begin{enumerate}
\item If $((l_A, l_B), \ClockValuation{\text{prev}})$ is synchronized, we can follow:
\begin{equation*}
\forall \chi(c) \in \chi(C_{AB}) : \ClockValuation{\text{prev}}(\chi(c)) = \ClockValuation{\text{prev}}(c)
= \ClockValuation{s}(c)
= \ClockValuation{s}(\chi(c)).
\end{equation*}
Since the original and virtual clock values are equivalent, $\ClockValuation{\text{prev}} = \ClockValuation{s}$ holds and due to $\ClockValuation{\text{prev}} \in \Zone{}$ the statement $((l_A, l_B), \ClockValuation{s}) \in ((l_A, l_B), \Zone{})$ holds.
\item If $((l_A, l_B), \ClockValuation{\text{prev}})$ is not synchronized, we know by Definition~\ref{def:main:virtual-bisimulation:product-synchronizable-tlts} that there exists a transition $((l_A, l_B), \ClockValuation{\text{prev}}) \STLTSTrans{\text{sync}} ((l_A, l_B), \ClockValuation{\text{prev}, s})$ with $\forall c \in C_{AB} : \ClockValuation{\text{prev}}(c) = \ClockValuation{\text{prev}, s}(c)$ and $((l_A, l_B), \ClockValuation{\text{prev}, s})$ is synchronized. Since $\forall c \in C_{AB} : \ClockValuation{\text{prev}, s}(c) = \ClockValuation{\text{prev}}(c) = \ClockValuation{s}(c)$ and $\forall \chi(c) \in \chi(C_{AB}) : \ClockValuation{\text{prev}, s}(\chi(c)) = \ClockValuation{\text{prev}, s}(c) \allowbreak = \ClockValuation{s}(c) = \ClockValuation{s}(\chi(c))$, we can follow $\ClockValuation{\text{prev}, s} = \ClockValuation{s}$ and the statement holds.
\end{enumerate}

We now show the second statement. By Definition~\ref{def:main:finding-a-timed-bisimulation:product-of-szg}, we know that $((l_A, l_B), \allowbreak \Zone{})$ is synchronized and $\Zone{\varepsilon} = \Zone{}^{\uparrow} \land I(l_A) \land I(l_B)$. Therefore, for any $\ClockValuation{\varepsilon} \in \Zone{\varepsilon}$ exists a $\ClockValuation{} \in \Zone{}$ and a $d \in \TimeDomain$ such that $\ClockValuation{\varepsilon} = \ClockValuation{} + d$ by the definition of the future-Operator (see Section~\ref{sec:background}). Moreover, $((l_A, l_B), \ClockValuation{})$ is synchronized, since any element of a synchronized symbolic state is synchronized by Definition~\ref{def:main:finding-a-timed-bisimulation:product-of-szg}. Since $\ClockValuation{\varepsilon} \in \Zone{}^{\uparrow} \land I(l_A) \land I(l_B)$ implies $\ClockValuation{\varepsilon} \models I(l_A)$ and $\ClockValuation{\varepsilon} \models I(l_B)$, Definition~\ref{def:main:virtual-bisimulation:product-synchronizable-tlts} implies $((l_A, l_B), \ClockValuation{}) \STLTSTrans{d} ((l_A, l_B), \ClockValuation{\varepsilon})$ and the statement holds.

We now show the third statement. By Definition~\ref{def:main:finding-a-timed-bisimulation:product-of-szg}, $((l_{A}, l_{B}), \Zone{})$ is synchronized and $((l_{A}, l_{B}), \Zone{}) \SZGTrans{\sigma} ((l_{A, 1}, l_{B, 1}), \Zone{\sigma})$ implies the existence of a transition $l_{A} \TATrans{g_A}{\sigma}{R_A} l_{A, 1}$ and the analog for $B$ with $\Zone{\sigma} = R_A(R_B(\Zone{} \land g_A \land g_B)) \land I(l_{A, 1}) \land I(l_{B, 1})$. Since $\ClockValuation{\sigma} \in \Zone{\sigma}$, there exists a $\ClockValuation{} \in \Zone{} \land g_A \land g_B$ such that $\ClockValuation{\sigma} = [R_A \cup R_B \rightarrow 0]\ClockValuation{}$ by the definition of the reset-Operator (see Section~\ref{sec:background}). $\ClockValuation{\sigma} \in \Zone{\sigma}$ also implies $\ClockValuation{\sigma} \models I(l_{A, 1})$ and $\ClockValuation{\sigma} \models I(l_{B, 1})$, while $\ClockValuation{} \in \Zone{} \land g_A \land g_B$ implies $\ClockValuation{} \models g_A$ and $\ClockValuation{} \models g_B$ and $((l_A, l_B), \ClockValuation{}) \in ((l_{A}, l_{B}), \Zone{})$ is synchronized since $((l_{A}, l_{B}), \Zone{})$ is synchronized. Therefore, there exists a transition $((l_A, l_B), \ClockValuation{}) \STLTSTrans{\sigma} ((l_{A, 1}, l_{B, 1}), \ClockValuation{\sigma})$ by Definition~\ref{def:main:virtual-bisimulation:product-synchronizable-tlts}.
\end{proof}

\section{Construction of a Virtual Bisimulation}%
\label{app:create-virt-bisim}

We show Proposition~\ref{prop:main:finding-a-timed-bisimulation:virtual-bisimulation-for-PZVC} here. We start by an helping proposition, which shows that there exists such a set as described in the second step of Proposition~\ref{prop:main:finding-a-timed-bisimulation:virtual-bisimulation-for-PZVC} for symbolic states that can be element of a virtual bisimulation.

\begin{proposition}
\label{prop:app:create-virt-bisim:first-part-of-second-cond-for-virt-bisim-symb-states}
Given any symbolic state $((l_A, l_B), \Zone{})$ that can be element of a symbolic virtual bisimulation, if there exists a transition $l_A \TATrans{g_A}{\sigma}{R_A} l_{A, \sigma}$ with $\Zone{A, \sigma} = R_A(\Zone{} \land g_A) \land I(l_{A, \sigma}) \neq \emptyset$, then there exists a finite set of symbolic states $\{((l_{A, \sigma}, \allowbreak l_{B, 1}), \allowbreak \Zone{\text{inter}, 1}), ..., \allowbreak ((l_{A, \sigma}, l_{B, n}), \Zone{\text{inter}, n})\}$ such that 
    \begin{enumerate}
      \item for any $((l_{A, \sigma}, l_{B, i}), \Zone{\text{inter}, i})$ exists a symbolic state $((l_{A, \sigma}, l_{B, i}), \Zone{\sigma, i})$ with $\Zone{\text{inter}, i} \subseteq \Zone{\sigma, i}$ such that there exists a transition $((l_A, l_B), \Zone{}) \SZGTrans{\sigma} ((l_{A, \sigma}, l_{B, i}), \allowbreak \Zone{\sigma, i})$,
      \item if $((l_{A, \sigma}, l_{B, i}), \Zone{\text{inter}, i})$ is synchronized, it can be element of a symbolic virtual bisimulation,
      \item if $((l_{A, \sigma}, l_{B, i}), \Zone{\text{inter}, i})$ is not synchronized, there exists a transition $((l_{A, \sigma}, \allowbreak l_{B, i}), \Zone{\text{inter}, i}) \SZGTrans{\text{sync}} ((l_{A, \sigma}, l_{B, i}), \Zone{\text{sync}, i})$ and $((l_{A, \sigma}, l_{B, i}), \Zone{\text{sync}, i})$ can be element of a symbolic virtual bisimulation, and 
      \item for any $\ClockValuation{A, \sigma} \in \Zone{A, \sigma}$ there exists a $((l_{A, \sigma}, l_{B, i}), \allowbreak \Zone{\text{inter}, i})$ with a $\ClockValuation{\text{inter}, i} \models \phi_{\text{virt}}(\Zone{\text{inter}, i})$ with $\forall c \in C_A \cup \chi(C_{AB}) : \ClockValuation{A, \sigma}(c) = \ClockValuation{\text{inter}, i}(c)$ and vice versa.
    \end{enumerate}
\end{proposition}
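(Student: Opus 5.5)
The plan is to work at the level of concrete \PTVC{} states and then re-assemble finitely many symbolic states. Since $((l_A, l_B), \Zone{})$ can be element of a symbolic virtual bisimulation, there is a virtual bisimulation $R_{\text{virt}}$ with $\text{states}(\{((l_A, l_B), \Zone{})\}) \subseteq R_{\text{virt}}$. Every $\ClockValuation{} \in \Zone{}$ is synchronized.

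\emph{Step 1: a concrete matching move for each valuation.} Fix $\ClockValuation{} \in \Zone{}$ with $\ClockValuation{} \models g_A$ and $[R_A \rightarrow 0]\ClockValuation{} \models I(l_{A, \sigma})$. Condition 2 of Def.~\ref{def:main:virtual-bisimulation:virtual-bisimulation} yields a switch $l_B \TATrans{g_B}{\sigma}{R_B} l_{B, i}$ of $B$. It also yields a \PTVC{} transition to $((l_{A, \sigma}, l_{B, i}), [R_A \cup R_B \rightarrow 0]\ClockValuation{})$, whose state, or whose sync-successor, lies in $R_{\text{virt}}$.

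Since $B$ has finitely many switches, we partition the relevant valuations of $\Zone{}$ according to the chosen switch, fixing one choice per valuation. For each switch $i$ we let $\Zone{\text{inter}, i}$ be the set of resulting targets $[R_A \cup R_B \rightarrow 0]\ClockValuation{}$. This gives at most $|E_B|$ candidate sets.

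\emph{Step 2: the main obstacle, that these sets need not be zones.} The chosen switch can vary non-convexly across $\Zone{}$, so the candidate sets need not be convex. I would first try to refine each class into finitely many zones using region equivalence w.r.t.\ the maximal constant $k$. The claim to prove is that the choice of matching switch can be made region-uniformly: if $\ClockValuation{}$ and $\ClockValuation{}'$ are region-equivalent, and both lie in $\text{states}$ of a virtual bisimulation, then a switch matching one also matches the other. This is exactly the region-invariance of timed bisimulation underlying \u{C}er\={a}ns' result~\cite{Cerans1992}, which the paper already invokes for $k$-normalization.

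With that invariance, each class is a finite union of (intersections of $\Zone{}$ with) regions, and regions are zones. So we may take the family $\{((l_{A, \sigma}, l_{B, i}), \Zone{\text{inter}, i})\}$ to be the images of these finitely many pieces. Finiteness then follows from the finiteness of the region set together with the finiteness of $E_B$.

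\emph{Step 3: checking the four conditions.}
\begin{itemize}
\item \emph{Condition 1.} Each piece lies in $\Zone{} \land g_A \land g_B$, and its reset image satisfies the target invariants. By the construction in Def.~\ref{def:main:finding-a-timed-bisimulation:product-of-szg} (cf.\ Proposition~\ref{prop:app:fwd-and-bwd-stab:forward-stability}), its image is contained in $\Zone{\sigma, i} = R_A(R_B(\Zone{} \land g_A \land g_B)) \land I(l_{A, \sigma}) \land I(l_{B, i})$, and that transition exists because $\Zone{\sigma, i}$ is nonempty.
\item \emph{Conditions 2 and 3.} Every concrete state of $\Zone{\text{inter}, i}$, or its unique sync-successor, lies in $R_{\text{virt}}$. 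If $\Zone{\text{inter}, i}$ is not synchronized, the symbolic sync-target $\Zone{\text{sync}, i}$ consists exactly of those sync-successors: this is the uniqueness argument from part 1 of Proposition~\ref{prop:app:fwd-and-bwd-stab:forward-stability} together with part 1 of Proposition~\ref{prop:app:fwd-and-bwd-stab:backward-stability}. Hence $\text{states}$ of that symbolic state is contained in $R_{\text{virt}}$, so it can be element of a symbolic virtual bisimulation, namely one whose states are $R_{\text{virt}}$.
\item \emph{Condition 4.} Covering follows because every $\ClockValuation{A, \sigma} \in \Zone{A, \sigma}$ arises from some $\ClockValuation{} \in \Zone{}$ with $\ClockValuation{A, \sigma} = [R_A \rightarrow 0]\ClockValuation{}$, and that $\ClockValuation{}$ was assigned to some piece. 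The $A$-clocks agree after $R_A$, since $R_B$ does not touch $C_A$. The virtual clocks $\chi(C_{AB})$ are unchanged by action transitions in the \PTVC{}, so the values on $C_A \cup \chi(C_{AB})$ coincide. The converse direction is immediate, since every element of $\Zone{\text{inter}, i}$ was produced from some $\ClockValuation{} \in \Zone{}$ by an enabled $A$-switch.
\end{itemize}
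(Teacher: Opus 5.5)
Your proposal is correct and takes essentially the paper's route: a matching $B$-switch chosen pointwise via condition~2 of the virtual bisimulation definition, targets grouped under the corresponding PZVC $\sigma$-transition, backward stability of sync for conditions~2 and~3, condition~4 by construction, and regions for finiteness. Your explicit appeal to region-invariance of (virtual) bisimilarity is exactly what the paper's brief claim that ``all states of a region use the same transition'' needs; to apply it consistently, take $R_{\text{virt}}$ to be the greatest virtual bisimulation (the union of all of them), because an arbitrary virtual bisimulation need not be closed under region equivalence.
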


\begin{proof}
We first show that there exists such a set and afterwards we show that it is finite. 

Since $((l_A, l_B), \Zone{})$ can be element of a symbolic virtual bisimulation, it must be synchronized. Therefore, any contained state can be element of a virtual bisimulation and must be synchronized, too. Since $\Zone{A, \sigma} \neq \emptyset$, there exists a $\ClockValuation{A, \sigma} \in \Zone{A, \sigma}$. By definition of $\Zone{A, \sigma}$, this can only be the case if $\ClockValuation{A, \sigma} \models I(l_{A, \sigma})$ and there exists a $\ClockValuation{} \in \Zone{} \land g_A$ such that $[R_A \rightarrow 0]\ClockValuation{} = \ClockValuation{A, \sigma}$. By Definition~\ref{def:main:virtual-bisimulation:virtual-bisimulation}, this implies the existence of a transition $((l_A, l_B), \ClockValuation{}) \STLTSTrans{\sigma} ((l_{A, \sigma}, l_{B, 1}), \ClockValuation{1})$ with $\forall c \in C_A \cup \chi(C_{AB}) : \ClockValuation{1}(c) = \ClockValuation{A, \sigma}(c)$ and either $((l_{A, \sigma}, l_{B, 1}), \ClockValuation{1})$ can be element of a virtual bisimulation (in case it is synchronized) or there exists an outgoing sync transition $((l_{A, \sigma}, l_{B, 1}), \ClockValuation{1}) \STLTSTrans{\text{sync}} ((l_{A, \sigma}, l_{B, 1}), \ClockValuation{\text{sync}})$ and $((l_{A, \sigma}, l_{B, 1}), \ClockValuation{\text{sync}})$ can be element of a virtual bisimulation. By forward stability (Proposition~\ref{prop:app:fwd-and-bwd-stab:forward-stability}), this implies the existence of a transition $((l_A, l_B), \Zone{}) \SZGTrans{\sigma} ((l_{A, \sigma}, l_{B, 1}), \Zone{1})$ with $\ClockValuation{1} \in \Zone{1}$. Let $\Zone{\text{inter}, 1} \subseteq \Zone{1}$ be the zone such that for all $\ClockValuation{\text{inter}} \in \Zone{\text{inter}, 1}$ exists a $\ClockValuation{\text{prev}} \in \Zone{}$ with $((l_A, l_B), \ClockValuation{\text{prev}}) \STLTSTrans{\sigma} ((l_{A, \sigma}, l_{B, 1}), \ClockValuation{\text{inter}})$ and they can be element of a virtual bisimulation and let $\{((l_{A, \sigma}, \allowbreak l_{B, 1}), \allowbreak \Zone{\text{inter}, 1}), ..., \allowbreak ((l_{A, \sigma}, l_{B, n}), \Zone{\text{inter}, n})\}$ be the set of all symbolic states with such zones. Then the first and last condition are trivially fulfilled and we only have to show the second and third condition.

In case $((l_{A, \sigma}, l_{B, i}), \Zone{\text{inter}, i})$ is synchronized, all contained states are synchronized and, therefore, all contained states can be element of a virtual bisimulation by construction. In case $((l_{A, \sigma}, l_{B, i}), \Zone{\text{inter}, i})$ is not synchronized, we know by backward stability (Proposition~\ref{prop:app:fwd-and-bwd-stab:backward-stability}) that all states of the target of the sync function can be element of a virtual bisimulation and, therefore, the set fulfills the conditions.

The set is finite, since we can divide $\Zone{A, \sigma}$ into a finite number of so-called regions \cite{AlurDill}. For any region, all contained states either fulfill or not fulfill the same conditions (remember that in clock constraints, only natural numbers are allowed). Therefore, the states of a region have the same outgoing transitions, which implies that all states of a region use the same transition and, therefore, there is no need to have more elements in the set as number of regions, which is finite (of course, the actual number of elements can be much smaller).
\end{proof}

The following proposition shows that any symbolic state contained in $R_{\text{symb}, \text{virt}}$ can be element of a symbolic virtual bisimulation.

\begin{proposition}
\label{prop:app:create-virt-bisim:states-r-contains-virt-bisim-only}
Given two timed bisimilar TA $A$ and $B$ and a set of symbolic states $R_{\text{symb}, \text{virt}}$, which is constructed as described in Proposition~\ref{prop:main:finding-a-timed-bisimulation:virtual-bisimulation-for-PZVC}. Any $((l_A, l_B), \Zone{}) \in R_{\text{symb}, \text{virt}}$ can be element of a symbolic virtual bisimulation.
\end{proposition}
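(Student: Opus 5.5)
The plan is an induction on the number of construction steps of Proposition~\ref{prop:main:finding-a-timed-bisimulation:virtual-bisimulation-for-PZVC}. The invariant is that every symbolic state added so far can be element of a symbolic virtual bisimulation. By this phrase we mean that $\text{states}(\{s\})$ is contained in some virtual bisimulation. Since virtual bisimulations are closed under union, this is the same as saying that all states of $s$ lie in the largest virtual bisimulation $R_{\max}$, that is, the union of all virtual bisimulations. Using $R_{\max}$ is convenient: we only need to show that every concrete state of every added symbolic state lies in $R_{\max}$.

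\emph{Base case.} By Proposition~\ref{prop:app:equi-def-vb:TB-implies-VB}, $A$ and $B$ are virtually bisimilar, so the initial state of the \PTVC{} lies in $R_{\max}$. The initial symbolic state of the \PZVC{} contains exactly this single valuation, so it satisfies the invariant.

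\emph{$\varepsilon$-step.} Let $((l_A,l_B),\Zone{})$ satisfy the invariant, and take $\ClockValuation{\varepsilon}\in\Zone{}^{\uparrow}\land I(l_A)\land I(l_B)$. Backward stability (Proposition~\ref{prop:app:fwd-and-bwd-stab:backward-stability}, item 2) gives some $\ClockValuation{}\in\Zone{}$ and $d$ with $((l_A,l_B),\ClockValuation{})\STLTSTrans{d}((l_A,l_B),\ClockValuation{\varepsilon})$. Since $\ClockValuation{}\in R_{\max}$ and $\ClockValuation{}+d\models I(l_A)$, condition 1 of Definition~\ref{def:main:virtual-bisimulation:virtual-bisimulation} puts $\ClockValuation{\varepsilon}$ in $R_{\max}$. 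This step also has to cover $k$-normalization. Here I would invoke the cited fact (\cite{Cerans1992,FullReport}) that a virtual bisimulation covers a symbolic state iff it covers its $k$-normalized counterpart. If needed, I would justify this by region equivalence: valuations in the normalized zone are region-equivalent to valuations in the original zone, and region-equivalent synchronized states satisfy the same guards and invariants. One then closes $R_{\max}$ under region equivalence and checks that it remains a virtual bisimulation.

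\emph{$\sigma$-step.} Let $((l_A,l_B),\Zone{})$ satisfy the invariant. Proposition~\ref{prop:app:create-virt-bisim:first-part-of-second-cond-for-virt-bisim-symb-states} provides the finite set of intermediate symbolic states. Its items 2 and 3 state directly that each synchronized $\Zone{\text{inter},i}$, or each sync-target $\Zone{\text{sync},i}$, can be element of a symbolic virtual bisimulation. The added state is the $k$-normalization of one of these, so the normalization argument above preserves the invariant. The $B$ side is symmetric.

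\emph{Conclusion and main obstacle.} Induction over the (finitely many, by normalization) steps completes the proof. The main obstacle I expect is the normalization step. Normalization enlarges zones, so one must check that region equivalence is compatible with the virtual clocks: the virtual clocks also have to be bounded by $k$ and compared consistently, so that added valuations still respect the sync structure. I would first try to rely entirely on the cited equivalence result. If that proves insufficient, I would prove directly that the closure of $R_{\max}$ under region equivalence on $C_{AB}\cup\chi(C_{AB})$ satisfies Definition~\ref{def:main:virtual-bisimulation:virtual-bisimulation}, using delays chosen within regions.
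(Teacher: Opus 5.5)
Your proposal is correct and follows the paper's proof: induction over the construction steps, the base case via Proposition~\ref{prop:app:equi-def-vb:TB-implies-VB}, the $\varepsilon$-case via backward stability plus condition~1 of Definition~\ref{def:main:virtual-bisimulation:virtual-bisimulation}, and the $\sigma$-case read off from items~2 and~3 (which the construction itself already requires of the chosen set). You also treat $k$-normalization explicitly, either through the cited equivalence or through region-closure of the largest virtual bisimulation; the paper's proof of this statement passes over normalization silently, so this part of your argument is sound and adds rigor.
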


\begin{proof}
We show this statement by induction in the number of symbolic states in $R_{\text{symb}, \text{virt}}$.\\%
\textbf{Hypothesis:} Any $((l_A, l_B), \Zone{}) \in R_{\text{symb}, \text{virt}}$ can be element of a symbolic virtual bisimulation.\\
\textbf{Base Case:} If $|R_{\text{symb}, \text{virt}}| = 1$, the hypothesis holds.\\
\textbf{Induction Step:} If the hypothesis holds for $|R_{\text{symb}, \text{virt}}| = n$, then the hypothesis also holds for $|R_{\text{symb}, \text{virt}}| = n$.

We first show the base case. Since we initially add the initial symbolic state, we know that in the base case the set only contains the initial symbolic state. The initial symbolic state only contains the initial state. Since $A$ and $B$ are timed bisimilar, they are also virtually bisimilar by Proposition~\ref{prop:app:equi-def-vb:TB-implies-VB}. This can only be the case, if there exists a virtual bisimulation that contains the initial state. Therefore, the initial state can be element of a virtual bisimulation, which implies that the initial symbolic state can be element of a symbolic virtual bisimulation.

We now show the induction step. We assume a $R_{\text{symb}, \text{virt}, n+1}$, which was created from a $R_{\text{symb}, \text{virt}, n}$ by applying the next step of Proposition~\ref{prop:main:finding-a-timed-bisimulation:virtual-bisimulation-for-PZVC}. By the induction hypothesis, all symbolic states of $R_{\text{symb}, \text{virt}, n}$ can be element of a virtual bisimulation and we have to show that the newly added symbolic state $((l_{A, n+1}, l_{B, n+1}), \Zone{n+1})$ can also be element of a virtual bisimulation, to show the induction step. We distinguish only two cases, as the cases for $B$ can be shown analogously.
\begin{itemize}
\item In this case, we assume that $((l_{A, n+1}, l_{B, n+1}), \Zone{n+1})$ was added due to the fact that there exists a $((l_{A, n}, l_{B, n}), \Zone{n}) \in R_{\text{symb}, \text{virt}, n}$ with $((l_{A, n}, l_{B, n}), \Zone{n}) \allowbreak \SZGTrans{\varepsilon} ((l_{A, n+1}, l_{B, n+1}), \Zone{n+1})$. By backward stability (Proposition~\ref{prop:app:fwd-and-bwd-stab:backward-stability}), we know that for any $((l_{A, n+1}, l_{B, n+1}), \ClockValuation{n+1}) \in ((l_{A, n+1}, l_{B, n+1}), \Zone{n+1})$, there exists a $\ClockValuation{n} \in \Zone{n}$ and a $d \in \TimeDomain$ such that $\ClockValuation{n+1} = \ClockValuation{n} + d$. Since $((l_{A, n}, l_{B, n}), \Zone{n})$ can be element of a symbolic virtual bisimulation, $((l_{A, n}, l_{B, n}), \ClockValuation{n})$ can be element of a virtual bisimulation, which implies that $((l_{A, n+1}, l_{B, n+1}), \ClockValuation{n+1})$ can also be element of a virtual bisimulation by Definition~\ref{def:main:virtual-bisimulation:virtual-bisimulation}. Since we decided on $((l_{A, n+1}, l_{B, n+1}), \ClockValuation{n+1})$ arbitrarily, we can follow that this holds for any contained state of $((l_{A, n+1}, l_{B, n+1}), \Zone{n+1})$, which implies that this symbolic state can be element of a symbolic virtual bisimulation.
\item In this case, we assume that $((l_{A, n+1}, l_{B, n+1}), \Zone{n+1})$ was added due to the second condition. Since it is explicitly mentioned that only symbolic states that can be element of a symbolic virtual bisimulation are added, the induction step holds in this case.
\end{itemize}
\end{proof}

We now show Proposition~\ref{prop:main:finding-a-timed-bisimulation:virtual-bisimulation-for-PZVC}.

\begin{proof}
Since we use k-normalized symbolic states, and the number of k-normalized symbolic states is finite \cite{Bengtsson2004}, the procedure described in Proposition~\ref{prop:main:finding-a-timed-bisimulation:virtual-bisimulation-for-PZVC} terminates and we only have to show that the result is a symbolic virtual bisimulation.

To show that $R_{\text{symb}, \text{virt}}$ is a  symbolic virtual bisimulation, we have to show that $\text{states}(R_{\text{symb}, \text{virt}})$ is a virtual bisimulation. To do so we assume any element $((l_A, l_B), \ClockValuation{}) \in \text{states}(R_{\text{symb}, \text{virt}})$ and show the conditions of Definition~\ref{def:main:virtual-bisimulation:virtual-bisimulation}. By definition of the $\text{states}$ function, there exists a $((l_A, l_B), \Zone{}) \in R_{\text{symb}, \text{virt}}$ such that $((l_A, l_B), \ClockValuation{}) \in ((l_A, l_B), \Zone{})$. By Proposition~\ref{prop:app:create-virt-bisim:states-r-contains-virt-bisim-only}, $((l_A, l_B), \Zone{})$ can be element of a symbolic virtual bisimulation. We only show the conditions for $A$, as the conditions for $B$ can be shown analogously.

To show the first condition, we assume a $d \in \TimeDomain$ such that $\ClockValuation{} + d \models I(l_A)$. We have to show that there exists a transition $((l_A, l_B), \ClockValuation{}) \STLTSTrans{d} ((l_A, l_B), \ClockValuation{} + d)$ and $((l_A, l_B), \ClockValuation{} + d) \in \text{states}(R_{\text{symb}, \text{virt}})$.
By Proposition~\ref{prop:app:create-virt-bisim:states-r-contains-virt-bisim-only}, we know that $((l_A, l_B), \ClockValuation{})$ can be element of a virtual bisimulation. Therefore, we know by Definition~\ref{def:main:virtual-bisimulation:virtual-bisimulation} that there exists a transition $((l_A, l_B), \ClockValuation{}) \STLTSTrans{d} ((l_A, l_B), \ClockValuation{} + d)$. By forward stability (Proposition~\ref{prop:app:fwd-and-bwd-stab:forward-stability}), there exists an outgoing $\varepsilon$-transition $((l_A, l_B), \Zone{}) \SZGTrans{\varepsilon} ((l_A, l_B), \Zone{\varepsilon})$ with $((l_A, l_B), \ClockValuation{} + d) \in ((l_A, l_B), \Zone{\varepsilon})$. Since the k-normalized version of $((l_A, l_B), \Zone{\varepsilon})$ is element of $R_{\text{symb}, \text{virt}}$, and a symbolic state is always a subset of its k-normalized form \cite{Bengtsson2004}, $((l_A, l_B), \ClockValuation{} + d) \in \text{states}(R_{\text{symb}, \text{virt}})$ holds and the first condition of Definition~\ref{def:main:virtual-bisimulation:virtual-bisimulation} is shown.

To show the second condition, we assume the existence of a transition $l_A \allowbreak \TATrans{g_A}{\sigma}{R_A} l_{A, \sigma}$ with $\ClockValuation{} \models g_A$ and $[R_A \rightarrow 0]\ClockValuation{} \models I(l_{A, \sigma})$ and have to show that there exists a transition $((l_A, l_B), \ClockValuation{}) \STLTSTrans{\sigma} ((l_{A, \sigma}, l_{B, 1}), \ClockValuation{1})$ with $\forall c \in C_A : [R_A \rightarrow 0]\ClockValuation{}(c) = \ClockValuation{1}(c)$ and either     
\begin{enumerate}
  \item $((l_{A, \sigma}, l_{B, 1}), \ClockValuation{1})$ is synchronized and $((l_{A, \sigma}, l_{B, 1}), \ClockValuation{1}) \in \text{states}(R_{\text{symb}, \text{virt}})$ holds, or
  \item $((l_{A, \sigma}, l_{B, 1}), \ClockValuation{1})$ is not synchronized and there exists a transition $((l_{A, \sigma}, \allowbreak l_{B, 1}), \allowbreak \ClockValuation{1}) \STLTSTrans{\text{sync}} ((l_{A, \sigma}, l_{B, 1}), \ClockValuation{1, s})$ with $((l_{A, \sigma}, l_{B, 1}), \ClockValuation{1, s}) \in \text{states}(R_{\text{symb}, \text{virt}})$.
\end{enumerate}
Since $((l_A, l_B), \ClockValuation{})$ can be element of virtual bisimulation by Proposition~\ref{prop:app:create-virt-bisim:states-r-contains-virt-bisim-only}, there exists a transition $((l_A, l_B), \ClockValuation{}) \STLTSTrans{\sigma} ((l_{A, \sigma}, l_{B, 1}), \ClockValuation{1})$ with $\forall c \in C_A : [R_A \rightarrow 0]\ClockValuation{}(c) = \ClockValuation{1}(c)$ and $((l_{A, \sigma}, l_{B, 1}), \ClockValuation{1})$ can be element of a virtual bisimulation and we only have to show the second part of the condition. By Proposition~\ref{prop:app:create-virt-bisim:first-part-of-second-cond-for-virt-bisim-symb-states}, there exists a set of symbolic states that fulfills the conditions of the second step of Proposition~\ref{prop:main:finding-a-timed-bisimulation:virtual-bisimulation-for-PZVC} such that it includes a symbolic state $(((l_{A, \sigma}, l_{B, 1}), \Zone{1}))$ such that $((l_{A, \sigma}, l_{B, 1}), \ClockValuation{1}) \in (((l_{A, \sigma}, l_{B, 1}), \Zone{1}))$. In case there exists more than one such outgoing transitions of $((l_A, l_B), \ClockValuation{})$, this statement holds for at least one of them. We assume that this is the set considered in the second step of Proposition~\ref{prop:main:finding-a-timed-bisimulation:virtual-bisimulation-for-PZVC}. We distinguish two cases:
\begin{enumerate}
\item If $((l_{A, \sigma}, l_{B, 1}), \Zone{1})$ is synchronized then it is added to $R_{\text{symb}, \text{virt}}$. In this case, $((l_{A, \sigma}, l_{B, 1}), \ClockValuation{1})$ is also synchronized and $((l_{A, \sigma}, l_{B, 1}), \ClockValuation{1}) \in \text{states}(R_{\text{symb}, \text{virt}})$ holds by $((l_{A, \sigma}, l_{B, 1}), \Zone{1}) \in R_{\text{symb}, \text{virt}}$.
\item If $((l_{A, \sigma}, l_{B, 1}), \Zone{1})$ is not synchronized, then there exists an outgoing transition $((l_{A, \sigma}, l_{B, 1}), \Zone{1}) \SZGTrans{\text{sync}} ((l_{A, \sigma}, l_{B, 1}), \Zone{1, \text{sync}})$ and the target is added to $R_{\text{symb}, \text{virt}}$. We distinguish two subcases.
\begin{itemize}
\item If $((l_{A, \sigma}, l_{B, 1}), \ClockValuation{1})$ is synchronized, then $((l_{A, \sigma}, l_{B, 1}), \ClockValuation{1}) \in ((l_{A, \sigma}, l_{B, 1}), \allowbreak \Zone{1, \text{sync}})$ by backward stability (Proposition~\ref{prop:app:fwd-and-bwd-stab:backward-stability}), which implies $((l_{A, \sigma}, l_{B, 1}), \allowbreak \ClockValuation{1}) \in \text{states}(R_{\text{symb}, \text{virt}})$ and the second condition of Definition~\ref{def:main:virtual-bisimulation:virtual-bisimulation} holds in this case.
\item If $((l_{A, \sigma}, l_{B, 1}), \ClockValuation{1})$ is not synchronized, there exists a transition $((l_{A, \sigma}, l_{B, 1}), \allowbreak \ClockValuation{1}) \STLTSTrans{\text{sync}} ((l_{A, \sigma}, l_{B, 1}), \ClockValuation{\text{sync}})$ with $((l_{A, \sigma}, l_{B, 1}), \ClockValuation{\text{sync}}) \in ((l_{A, \sigma}, l_{B, 1}), \Zone{1, \text{sync}})$ by backward stability (Proposition~\ref{prop:app:fwd-and-bwd-stab:backward-stability}), which implies $((l_{A, \sigma}, l_{B, 1}), \ClockValuation{1}) \in \text{states}(R_{\text{symb}, \text{virt}})$ and the second condition of Definition~\ref{def:main:virtual-bisimulation:virtual-bisimulation} holds in this case, too.
\end{itemize}
\end{enumerate}
Therefore, $\text{states}(R_{\text{symb}, \text{virt}})$ is a virtual bisimulation.
Since we add the initial symbolic state to $R_{\text{symb}, \text{virt}}$ and the initial state is element of the initial symbolic state, $\text{states}(R_{\text{symb}, \text{virt}})$ is a virtual bisimulation that contains the initial state and, therefore, a witness for the timed bisimulation of $A$ and $B$ by Definition~\ref{def:main:virtual-bisimulation:virtual-bisimulation}.
\end{proof}

\section{Contradiction DAGs are Correct}%
\label{app:DAG-are-correct}

In this appendix, we demonstrate that there exists a Contradiction DAG rooted at the initial state of the \PTVC{} if and only if the corresponding TA cannot be virtually bisimilar. We first show that if a Contradiction DAG exists, then the TA are not virtually bisimilar and afterwards the opposite direction.

\begin{proposition}
\label{prop:app:DAG-are-correct:if-ct-then-not-bisim}
If there exists a Contradiction DAG rooted at the initial state of the \PTVC, then the corresponding TA cannot be virtually bisimilar.
\end{proposition}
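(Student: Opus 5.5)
We argue by contradiction. Suppose a Contradiction DAG $(S,E)$ rooted at the initial state $s_{AB,0}$ of the \PTVC{} exists, and suppose $A$ and $B$ are virtually bisimilar. Then some virtual bisimulation $R_{\text{virt}}$ contains $s_{AB,0}$. Since the DAG is finite and acyclic, we will walk from the root towards a leaf while keeping the following invariant: every synchronized state visited lies in $R_{\text{virt}}$, and every non-synchronized state visited is the source of a $\sigma$-edge target whose sync-successor lies in $R_{\text{virt}}$. Acyclicity and finiteness guarantee that the walk ends at a leaf. We will then show that no leaf can belong to $R_{\text{virt}}$, which is the contradiction.

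The walk advances by case analysis on Definition~\ref{def:main:new-def-virt-bisim:DAG} at an inner node $s_1$.
\begin{enumerate}
\item \emph{$s_1$ is not synchronized.} The sync-edge is in $E$. By Definition~\ref{def:main:virtual-bisimulation:product-synchronizable-tlts}, the sync-successor of a state is uniquely determined, since the virtual clocks take the values of the original clocks. So the successor in the DAG is exactly the state guaranteed to be in $R_{\text{virt}}$ by clause~2(b) of Definition~\ref{def:main:virtual-bisimulation:virtual-bisimulation}, and we move there.
\item \emph{$s_1$ has a single delay edge $(s_1,d,s_2)$.} Then $s_2=((l_A,l_B),\ClockValuation{1}+d)$, and the \PTVC{} delay rule gives $\ClockValuation{1}+d\models I(l_A)$. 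Clause~1 of Definition~\ref{def:main:virtual-bisimulation:virtual-bisimulation} therefore places $s_2$ in $R_{\text{virt}}$.
\item \emph{$s_1$ branches on a switch $l_{A,1}\TATrans{g_A}{\sigma}{R_A}l_{A,2}$.} We may assume the switch is enabled, i.e.\ $\ClockValuation{1}\models g_A$ and $[R_A\rightarrow 0]\ClockValuation{1}\models I(l_{A,2})$. If it were not, the node would have no edges; this degenerate situation has to be excluded by reading the definition as requiring an enabled switch, as in Example~\ref{ex:main:new-def-virt-bisim:nondeterm-contradiction-example}. Clause~2 of Definition~\ref{def:main:virtual-bisimulation:virtual-bisimulation} yields a \PTVC{} transition $s_1\STLTSTrans{\sigma}s_2$ that uses some $B$-switch and satisfies 2(a) or 2(b). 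By Definition~\ref{def:main:virtual-bisimulation:product-synchronizable-tlts}, $s_2$ has the form $((l_{A,2},l_{B,2}),[R_A\cup R_B\rightarrow 0]\ClockValuation{1})$, which coincides with $R_A(R_B(\ClockValuation{1}))$. The second bullet of the DAG definition then forces $(s_1,\sigma,s_2)\in E$, and we move to $s_2$. If $s_2$ is synchronized it is in $R_{\text{virt}}$; otherwise the invariant for non-synchronized states holds, and the next step is case~1.
\end{enumerate}
The symmetric case with the roles of $A$ and $B$ swapped is handled identically.

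At a leaf, which is synchronized and hence in $R_{\text{virt}}$, each leaf condition directly negates a clause of Definition~\ref{def:main:virtual-bisimulation:virtual-bisimulation}. The first condition gives a $d$ with $\ClockValuation{1}+d\models I(l_{A,1})$ but no $d$-transition, violating clause~1. The second gives an enabled $A$-switch with no matching $\sigma$-transition agreeing with $[R_A\rightarrow 0]\ClockValuation{1}$ on $C_A$, violating clause~2. The analog for $B$ violates clause~3. Hence $R_{\text{virt}}$ is not a virtual bisimulation, a contradiction.

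I expect the main obstacle to be case~3, i.e.\ matching the DAG's completeness condition on edges to the transition supplied by the virtual bisimulation. This requires checking carefully that the guard and invariant conditions of the \PTVC{} rule coincide with the existence of the transition referred to in the DAG bullet, and handling the ill-formed case in which the chosen $A$-switch is not enabled.
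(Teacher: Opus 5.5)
Your argument is correct and is essentially the paper's proof. The paper shows by induction on the depth below a node that no DAG node can lie in any virtual bisimulation, whereas you fix one $R_{\text{virt}}$ containing the root and walk down inside it, but the case analysis is identical: unique sync successor, clause~1 for delay edges, the completeness bullet for the chosen $A$-switch, and leaves negating a clause of Definition~\ref{def:main:virtual-bisimulation:virtual-bisimulation}.

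The paper assumes only tacitly that the chosen switch is enabled, and this assumption is genuinely needed, but not for the reason you give: a disabled choice does not leave the node edgeless, but can make the completeness bullet vacuous while $E$ still contains $\sigma$-edges generated by a different enabled switch, and under that literal reading even bisimilar TA could admit a Contradiction DAG.
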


\begin{proof}
We show that for any state of a \PTVC{} that is element of a Contradiction DAG, there exists no virtual bisimulation according to Definition~\ref{def:main:virtual-bisimulation:virtual-bisimulation}, which contains this state. Since Contradiction DAGs are finite, there is also a finite maximum depth. We show the statement by induction in the maximum depth of the DAG that has the state under consideration as root. \\
\textbf{Hypothesis: } Any state $s_1$ that is element of a Contradiction DAG cannot be element of a virtual bisimulation. \\
\textbf{Base Case: } Any leaf of a Contradiction DAG cannot be element of a virtual bisimulation.\\
\textbf{Induction Step: } If the hypothesis holds if the element has at most $n$ steps to any leaf, then it also holds if the element has at most $n+1$ steps to any leaf.

We first show the base case. By Definition~\ref{def:main:new-def-virt-bisim:DAG}, a leaf has to fulfill one of several conditions. We only analyze the first two conditions, as the analog conditions for $B$ can be shown analogously.
\begin{itemize}
\item If there exists a $d \in \TimeDomain$ such that $\ClockValuation{1} + d \models I(l_{A, 1})$ but no transition $s_1 \STLTSTrans{d} s_2$, then $s_1$ cannot fulfill the first condition of Definition~\ref{def:main:virtual-bisimulation:virtual-bisimulation}.
\item If there exists a transition $l_{A, 1} \TATrans{g_A}{\sigma}{R_A} l_{A, 2}$ with $\ClockValuation{1} \models g_A$ and $[R_A \rightarrow 0]\ClockValuation{1} \models I(l_{A, 2})$ but no transition $s_1 \STLTSTrans{\sigma} ((l_{A, 2}, l_{B, 2}), \ClockValuation{2})$ with $\forall c \in C_A : [R_A \rightarrow 0]\ClockValuation{1}(c) = \ClockValuation{2}(c)$, then $s_1$ cannot fulfill the second condition of Definition~\ref{def:main:virtual-bisimulation:virtual-bisimulation}.
\end{itemize}
Therefore, for any relation $R$ with $s_1 \in R$, one of the conditions of Definition~\ref{def:main:virtual-bisimulation:virtual-bisimulation} is not fulfilled and $R$ is not a virtual bisimulation. We now show the induction step.

Since $s_1$ is not a leaf in the induction step, we know that there exists at least a single directed edge $(s_1, \sigma, s_2) \in E$. We distinguish three cases:
\begin{enumerate}
\item $\sigma = \text{sync}$. In this case, $s_1$ is not synchronized and cannot be element of a virtual bisimulation by Definition~\ref{def:main:virtual-bisimulation:virtual-bisimulation}. 
\item $\sigma \in \TimeDomain$. In this case, $s_1$ is synchronized and has exactly a single outgoing transition labeled with $\sigma$ by Definition~\ref{def:main:virtual-bisimulation:product-synchronizable-tlts}. Since we know by the induction hypothesis that there exists no virtual bisimulation that contains $s_2$, Definition~\ref{def:main:virtual-bisimulation:virtual-bisimulation} implies that there exists no virtual bisimulation that contains $s_1$.
\item $\sigma \in \Sigma$. We denote $s_1 = ((l_{A, 1}, l_{B, 1}), \ClockValuation{1})$ and $s_2 = ((l_{A, 2}, l_{B, 2}), \ClockValuation{2})$. We assume that there exists a transition $l_{A, 1} \TATrans{g_A}{\sigma}{R_A} l_{A, 2}$ such that
\begin{itemize}
\item for all edges $(s_1, \sigma_1, s_2) \in E$ the statement $\sigma_1 = \sigma$ holds,
\item for any outgoing transition $(s_1, \sigma, ((l_{A, 2}, l_{B, 2}), \ClockValuation{2})) \in E$, there exists a transition $l_{B, 1} \TATrans{g_B}{\sigma}{R_B} l_{B, 2}$ such that $\ClockValuation{2} = R_A(R_B(\ClockValuation{1}))$,
\item for any outgoing transition $l_{B, 1} \TATrans{g_B}{\sigma}{R_B} l_{B, 2}$ it holds that if there exists a transition $(s_1, \sigma, ((l_{A, 2}, l_{B, 2}), \ClockValuation{2})) \in \STLTSTrans{}$ with $\ClockValuation{2} = R_A(R_B(\ClockValuation{1}))$, then $(s_1, \sigma, \allowbreak ((l_{A, 2}, l_{B, 2}), \ClockValuation{2})) \in E$ holds,
\end{itemize}
since in case there exists such a transition $l_{B, 1} \TATrans{g_B}{\sigma}{R_B} l_{B, 2}$, the induction step can be shown analogously. The second condition of Definition~\ref{def:main:virtual-bisimulation:virtual-bisimulation} describes two conditions for $s_1$, caused by the existence of the transition $l_{A, 1} \TATrans{g_A}{\sigma}{R_A} l_{A, 2}$. We show that none of those conditions are fulfilled by any $(s_1, \sigma_1, s_2) \in E$.
\begin{enumerate}
\item If $s_2$ is synchronized, we can apply the induction hypothesis and, therefore, $s_2$ cannot be element of a virtual bisimulation.
\item If $s_2$ is not synchronized, it has an outgoing sync transition $s_2 \STLTSTrans{\text{sync}} ((l_{A, 2}, l_{B, 2}), \ClockValuation{2, s})$, which is also element of the Contradiction DAG. The target has at most $n$ steps (actually, it has $n-1$) to a leaf and, therefore, cannot be element of a virtual bisimulation by the induction hypothesis.
\end{enumerate}
However, by Definition~\ref{def:main:virtual-bisimulation:product-synchronizable-tlts}, there can be no other outgoing transition of $s_1$ in $\STLTSTrans{}$, labeled $\sigma$, that fulfills the condition $\forall c \in C_A : [R_A \rightarrow 0]\ClockValuation{1}(c) = \ClockValuation{2}(c)$. Therefore, the second condition of Definition~\ref{def:main:virtual-bisimulation:virtual-bisimulation} cannot be fulfilled by $s_1$ and $s_1$ cannot be element of a virtual bisimulation.
\end{enumerate}
We have shown that no state that is element of a Contradiction DAG can be element of a virtual bisimulation. Since a Contradiction DAG is always rooted at the initial state, there can be no virtual bisimulation that contains this state in case there exists a Contradiction DAG.
\end{proof}

We now show the opposite direction. To do so, we introduce bounded virtual bisimulation, analogously to \cite{FullReport}.

\begin{definition}[Bounded Virtual Bisimulation]
  \label{def:app:DAG-are-correct:bounded-virtual-bisimulation}
  Given two TA $A$, $B$, with their \PTVC{} $(\STLTSAllStates{}, s_{AB, 0}, \Sigma \cup \TimeDomain \cup \{\text{sync}\}, \STLTSTrans{})$. Any relation $R_{\text{virt}, 0} \subseteq \STLTSAllStates{}$ that only contains synchronized states is a virtual bisimulation in order 0. A relation  $R_{\text{virt}, n+1} \subseteq \STLTSAllStates{}$ for $n \in \mathbb{N}^{\geq 0}$ that only contains synchronized states is a virtual bisimulation in order n+1 if and only if it holds that
  \begin{enumerate}
    \item for any $d \in \TimeDomain$ with $\ClockValuation{} + d \models I(l_A)$, there exists a transition $((l_A, l_B), \ClockValuation{}) \STLTSTrans{d} ((l_A, l_B), \ClockValuation{} + d)$ and there exists a virtual bisimulation $R_{\text{virt}, n}$ in order n such that $((l_A, l_B), \ClockValuation{} + d) \in R_{\text{virt}, n}$,
    \item if there exists a transition $l_A \TATrans{g_A}{\sigma}{R_A} l_{A, 1}$ with $\ClockValuation{} \models g_A$ and $[R_A \rightarrow 0] \ClockValuation{} \models I(l_{A, 1})$, then there exists a transition $((l_A, l_B), \ClockValuation{}) \STLTSTrans{\sigma} ((l_{A, 1}, l_{B, 1}), \ClockValuation{1})$ with $\forall c \in C_A : [R_A \rightarrow 0] \ClockValuation{}(c) = \ClockValuation{1}(c)$ and either
    \begin{enumerate}
      \item $((l_{A, 1}, l_{B, 1}), \ClockValuation{1})$ is synchronized and there exists a virtual bisimulation $R_{\text{virt}, n}$ in order n such that $((l_{A, 1}, l_{B, 1}), \ClockValuation{1}) \in R_{\text{virt}, n}$ holds, or
      \item $((l_{A, 1}, l_{B, 1}), \ClockValuation{1})$ is not synchronized and there exists a transition $((l_{A, 1}, \allowbreak l_{B, 1}), \ClockValuation{1}) \STLTSTrans{\text{sync}} ((l_{A, 1}, l_{B, 1}), \ClockValuation{1, s})$ there exists a virtual bisimulation $R_{\text{virt}, n}$ in order n such that $((l_{A, 1}, l_{B, 1}), \ClockValuation{1, s}) \in R_{\text{virt}}$, and
    \end{enumerate}
    \item the analog for $B$.
  \end{enumerate}
\end{definition}

For $n \rightarrow \infty$, $R_{\text{virt}, n}$ is a virtual bisimulation. If we define a series of functions $f_n$, that maps $\STLTSAllStates{}$ to the subset $R_n \subseteq \STLTSAllStates{}$ that contains all states of $\STLTSAllStates{}$ that can be element of a virtual bisimulation in order $n$, then there exists a finite $m$ such that for all $n > m$ the statement $R_{\text{virt}, m} = R_{\text{virt}, n}$ holds. For more information see \cite{FullReport}. Using this definition, we can show the following proposition.

\begin{proposition}
\label{prop:app:DAG-are-correct:if-not-bisim-then-ct}
If the TA are not virtually bisimilar then there exists a Contradiction DAG rooted at the initial state of the corresponding \PTVC.
\end{proposition}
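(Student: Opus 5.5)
We argue via the bounded virtual bisimulation of Definition~\ref{def:app:DAG-are-correct:bounded-virtual-bisimulation}. For each synchronized state $s$ let $\mathrm{ord}(s)$ be the largest $n$ such that $s$ belongs to some virtual bisimulation in order $n$ (or $\infty$). The paper already states that the sets $R_n$ stabilize after a finite $m$, and that their limit is a virtual bisimulation. Hence a state with $\mathrm{ord}(s)=\infty$ lies in a virtual bisimulation. Since $A$ and $B$ are not virtually bisimilar, the initial state $s_{AB,0}$ has $\mathrm{ord}(s_{AB,0}) = n_0 < \infty$ for some finite $n_0$, and in fact $n_0 \le m$.

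We then prove by induction on $n$ the following claim: every synchronized state $s$ with $\mathrm{ord}(s)=n<\infty$ is the root of a Contradiction DAG of depth bounded in terms of $n$. Since $s$ is not in order $n+1$, one condition of Definition~\ref{def:app:DAG-are-correct:bounded-virtual-bisimulation} fails for $s$ at level $n+1$. We split into cases according to which condition fails.

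\begin{itemize}
\item \emph{Delay condition.} There is a $d$ with $\ClockValuation{}+d \models I(l_A)$. Either there is no $d$-transition, and $s$ is a leaf of the first kind in Definition~\ref{def:main:new-def-virt-bisim:DAG}. Or the target $\ClockValuation{}+d$ has order $<n$; then we add the single edge labeled $d$ and attach the DAG for the target given by the induction hypothesis. The target is synchronized, because delays preserve synchronization.
\item \emph{Action condition for $A$.} There is a switch $l_A \TATrans{g_A}{\sigma}{R_A} l_{A,1}$ with its precondition met, but every matching \PTVC{} transition leads (after a sync step if needed) to a state of order $<n$. If no matching transition exists, $s$ is a leaf of the second kind. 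Otherwise we include \emph{all} matching $\sigma$-edges, as the non-leaf clause requires. For each non-synchronized target we add its unique sync edge, and below each resulting synchronized state we attach the inductively obtained DAG. There are finitely many $B$-switches, so finitely many edges are added.
\item \emph{Action condition for $B$.} Symmetric to the previous case.
\end{itemize}

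The base case is $n=0$. Then $s$ fails order $1$, and every successor trivially has order $\ge 0$. So the failure must be the absence of a required transition, which makes $s$ a leaf.

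Two subtle points remain.

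First, the order bookkeeping must be done carefully. A failure at level $n+1$ only means that some required successor is not in order $n$. That successor's order is therefore strictly less than $n$, since orders are downward closed: $R_{\text{virt},n} \subseteq R_{\text{virt},n-1}$ in the sense of membership. This downward closure itself needs a short induction on the definition, which we carry out first. It makes $\mathrm{ord}$ well-defined and strictly decreasing along the edges we add, and that is exactly what guarantees finiteness of the construction.

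Second, and this is the main obstacle, the result must be a genuine DAG. The union of subDAGs may share states; merging shared states is harmless, because each state's outgoing edges are chosen canonically as a function of $s$ alone. Acyclicity follows because $\mathrm{ord}$ strictly decreases along every action and delay edge and is unchanged only across sync edges. Sync edges go from non-synchronized to synchronized states and cannot repeat, so no cycle can form. Finally, we check that the leaves are synchronized and satisfy one of the leaf conditions, and that each inner node satisfies exactly one of the non-leaf clauses. This verification is where we expect the most detailed case-work.
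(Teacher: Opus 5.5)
Your proposal is correct and follows essentially the same route as the paper. Both argue by induction on the order of bounded virtual bisimulation: leaves in the base case; in the step, either a single delay edge or all $B$-pairings of a failing $A$-switch, with sync edges added where needed; and finally the finite stabilization of the $R_n$ to handle the initial state. Beyond the paper, you prove downward closure of orders so that $\mathrm{ord}$ is well defined, and you fix one canonical choice per state so that shared sub-DAGs merge into a genuine acyclic Contradiction DAG. This second point fills in what the paper's proof glosses over.
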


\begin{proof}
We show by induction that any synchronized state that cannot be element of a virtual bisimulation in order n, can be the root of a Contradiction DAG with maximum depth 2n. \\
\textbf{Hypothesis: } Any synchronized state that cannot be element of a virtual bisimulation in order n can be the root of a Contradiction DAG with maximum depth 2n. \\
\textbf{Base Case: } The hypothesis holds for n=0 and n=1. \\
\textbf{Induction Step: } If the hypothesis holds for any value lower than n, then it also holds for n+1.

Since there exists no synchronized state that cannot be element of a virtual bisimulation in order 0, the hypothesis holds trivially for n=0. Since any synchronized state can be element of a virtual bisimulation in order 0, we know by Definition~\ref{def:app:DAG-are-correct:bounded-virtual-bisimulation} that for a synchronized state $((l_A, l_B), \ClockValuation{})$ that cannot be element of a virtual bisimulation in order 1, one of the following statements hold:
\begin{enumerate}
\item Either there exists a $d \in \TimeDomain$ such that $\ClockValuation{} + d \models I(l_A)$ but no transition $((l_A, l_B), \ClockValuation{}) \STLTSTrans{d} ((l_A, l_B), \ClockValuation{} + d)$ or
\item there exists a transition $l_A \TATrans{g_A}{\sigma}{R_A} l_{A, 1}$ with $\ClockValuation{} \models g_A$ and $[R_A \rightarrow 0]\ClockValuation{} \models I(l_{A, 1})$ but no transition $((l_A, l_B), \ClockValuation{}) \STLTSTrans{\sigma} ((l_{A, 1}, l_{B, 1}), \ClockValuation{1})$ with $\forall c \in C_A : [R_A \rightarrow 0]\ClockValuation{}(c) = \ClockValuation{1}(c)$ or
\item any analog condition for $B$ hold.
\end{enumerate}
Therefore, $((l_A, l_B), \ClockValuation{})$ is the leaf of a Contradiction DAG, which implies that it is the root of a Contradiction DAG with depth 1.

We now show the induction step. In case the state $((l_A, l_B), \ClockValuation{})$ cannot be element of a virtual bisimulation in order n, the induction step holds trivially by the induction hypothesis. Therefore, we assume that the state cannot be element of a virtual bisimulation in order n+1 but it can be element of a virtual bisimulation in order n. Due to the base case, we are allowed to assume $n \geq 1$. Under these circumstances, Definition~\ref{def:app:DAG-are-correct:bounded-virtual-bisimulation} implies that one of the following three statements hold:
\begin{enumerate}
\item Either there exists an $d \in \TimeDomain$ such that there exists a transition $((l_A, l_B), \ClockValuation{}) \STLTSTrans{d} ((l_A, l_B), \ClockValuation{} + d)$ and $((l_A, l_B), \ClockValuation{} + d)$ cannot be element of a virtual bisimulation in order n or
\item there exists a transition $l_A \TATrans{g_A}{\sigma}{R_A} l_{A, 1}$ with $\ClockValuation{} \models g_A$ and $[R_A \rightarrow 0] \ClockValuation{} \models I(l_{A, 1})$ and for any transition $((l_A, l_B), \ClockValuation{}) \STLTSTrans{\sigma} ((l_{A, 1}, l_{B, 1}), \ClockValuation{1})$ with $\forall c \in C_A : [R_A \rightarrow 0]\ClockValuation{}(c) = \ClockValuation{1}(c)$ 
\begin{enumerate}
\item either $((l_{A, 1}, l_{B, 1}), \ClockValuation{1})$ is synchronized but cannot be element of a virtual bisimulation in order $n$ or
\item $((l_{A, 1}, l_{B, 1}), \ClockValuation{1})$ is not synchronized and the target of the outgoing sync-transition $((l_{A, 1}, l_{B, 1}), \ClockValuation{1}) \STLTSTrans{\text{sync}} ((l_{A, 1}, l_{B, 1}), \ClockValuation{1, s})$ cannot be element of a virtual bisimulation in order $n$ or
\end{enumerate}
\item any analog condition for $B$ holds.
\end{enumerate}
We only consider the first two cases, as the third case can be shown analogously. 

If the first case holds, we can create a Contradiction DAG by creating the Contradiction DAG of maximum depth 2n with $((l_A, l_B), \ClockValuation{} + d)$ as root (which is possible by the induction hypothesis) and then adding the transition $((l_A, l_B), \ClockValuation{}) \STLTSTrans{d} ((l_A, l_B), \ClockValuation{} + d)$ to it. Therefore, the induction step holds in this case.

If the second case holds, we create the Contradiction DAG by iterating through all transitions $l_B \TATrans{g_B}{\sigma}{R_B} l_{B, 1}$ and adding the corresponding transition $((l_A, l_B), \allowbreak \ClockValuation{}) \STLTSTrans{\sigma} ((l_{A, 1}, l_{B, 1}), R_A(R_B(\ClockValuation{})))$ to $E$, in case it exists in $\STLTSTrans{}$. We distinguish two cases:
\begin{enumerate}
\item If $((l_{A, 1}, l_{B, 1}), R_A(R_B(\ClockValuation{})))$ is synchronized, we know by the precondition that this state cannot be element of a virtual bisimulation in order $n$. Therefore, we can create a Contradiction DAG with maximum depth 2n with this state as root by the induction hypothesis.
\item If $((l_{A, 1}, l_{B, 1}), R_A(R_B(\ClockValuation{})))$ is not synchronized, we add the $\text{sync}$-transition $((l_{A, 1}, l_{B, 1}), R_A(R_B(\ClockValuation{}))) \STLTSTrans{\text{sync}} ((l_{A, 1}, l_{B, 1}), \ClockValuation{1, s})$ to $E$ and by the precondition and the induction hypothesis, we can build a Contradiction DAG with $((l_{A, 1}, l_{B, 1}), \ClockValuation{1, s})$ as root with maximum depth 2n.
\end{enumerate}
Since the added transitions obviously fulfill the conditions of the second case of Definition~\ref{def:main:new-def-virt-bisim:DAG}, we can build a Contradiction DAG with maximum depth 2(n+1) in each case and, therefore, the induction step holds.

Since the induction hypothesis holds and the initial state cannot be element of a virtual bisimulation if and only if there exists a finite value $m$ such that it cannot be element of a virtual bisimulation in order $m$, the proposition holds.
\end{proof}

Proposition~\ref{prop:app:DAG-are-correct:if-ct-then-not-bisim} and Proposition~\ref{prop:app:DAG-are-correct:if-not-bisim-then-ct} imply Proposition~\ref{prop:main:virtual-bisimulation:contradiction-DAGs-only-for-non-bisimilar}.

\section{Construction of a Contradiction DAG}%
\label{app:create-cont-dag}

We show Proposition~\ref{prop:main:contradiction:find-a-DAG} here. We first show that there finite many elements reachable using the procedure.

\begin{proposition}
\label{prop:app:create-cont-dag:finite}
The number of elements reachable by the procedure described in Proposition~\ref{prop:main:contradiction:find-a-DAG} is finite.
\end{proposition}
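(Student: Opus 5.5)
The plan is to show that every state the procedure of Proposition~\ref{prop:main:contradiction:find-a-DAG} can ever create falls into one of finitely many equivalence classes. The last step of that proposition already declares location-equivalent states with region-equivalent clock valuations to be equivalent. I will therefore abstract each generated \PTVC{} state $((l_A,l_B),\ClockValuation{})$ to the pair consisting of its location pair and the region of $\ClockValuation{}$ over $C_{AB}\cup\chi(C_{AB})$ with respect to the maximal constant $k$. Both $L_A\times L_B$ and the set of regions for a fixed finite clock set and fixed $k$ are finite~\cite{AlurDill}. Hence it suffices to show two things: states in the same class are interchangeable for the procedure, and the procedure never produces infinitely many distinct states inside one class in a way that matters for termination.

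\textbf{Step 1 (transitions preserve the abstraction).} I will go through the transition kinds used by the procedure one by one.
\begin{compactitem}
\item Sync-transitions copy original clock values into virtual clocks. The result is therefore determined up to region by the source region.
\item Action transitions apply $[R_A\cup R_B\rightarrow 0]$ after guards that use only integer constants $\leq k$. By the standard argument, region-equivalent sources satisfy the same guards and invariants and reach region-equivalent targets.
\item Delay transitions come in three forms: $k+1$, an integer $d$, or $d+d_{\text{step}}$. In the first case all clocks end up above $k$, and by the first region condition all such valuations are equivalent with respect to the integer parts. In the integer case the fractional parts are unchanged and only integer parts shift. In the $d_{\text{step}}$ case, $d_{\text{step}}$ is chosen so that every delay in $[d_{\text{step}},d_{\text{max}})$ stays in a single region. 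The reached region is thus a function of the source region and of the region in which the contradiction is maximally satisfied.
\end{compactitem}
Consequently, the set of regions reachable from the root is contained in the finite set of region-successors of the initial region.

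\textbf{Step 2 (bounded values).} The delay rule only fires if some clock has value $\leq k$, and otherwise jumps straight to $k+1$. I will use this to argue that no clock is pushed arbitrarily far beyond $k+1$ by repeated delays. Concretely, every valuation produced has integer parts in $\{0,\dots,k+1\}$, or it is region-equivalent to such a valuation. Integer delays and resets keep the fractional parts inside the finite set generated by the finitely many $d_{\text{step}}$ choices along a path. Combined with the fifth step of the proposition (cycles are avoided by choosing an alternative), every path visits each class at most once. Branching is finite, because each action step adds at most as many edges as $B$ (resp.\ $A$) has $\sigma$-switches, and each other step adds only one edge. Paths therefore have length at most the number of classes, and K\"onig-style counting gives finitely many elements overall.

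\textbf{Main obstacle.} The hardest part will be the $d_{\text{step}}$ delays in Step 1. The concrete fractional values really do change, namely to the centre of an interval, so the set of actual valuations is not obviously finite. Here I will lean entirely on the equivalence convention in the last item of Proposition~\ref{prop:main:contradiction:find-a-DAG}: I will count states modulo region equivalence and check that the chosen $d_{\text{step}}$ depends only on the region of the source valuation. This holds because the interval bounds $1-\{\ClockValuation{}(c)\}$ are ordered according to the fractional-part ordering, which is an invariant of the region. With that, the count of elements reduces to the finite product of location pairs and regions.
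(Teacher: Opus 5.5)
Your core reduction is exactly the paper's one-line proof: you count the generated states modulo the procedure's own convention that identifies location-equivalent states with region-equivalent valuations, which leaves at most $|L_A\times L_B|$ times the finite number of regions. Steps~1 and~2 (region-compatibility of the transitions, bounded depth with finite branching) go beyond what the statement needs. One correction to your last paragraph: the numeric value of $d_{\text{step}}$ is the centre of an interval whose endpoints depend on the concrete fractional parts, so it is not region-determined; your ordering argument only shows that the region reached by the $d_{\text{step}}$-delay is.
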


\begin{proof}
Since the number of regions is finite~\cite{ALUR1994183}, the number of location-equivalent states with clock valuations that are not region-equivalent is also finite.
\end{proof}

We now show that if the procedure returns a DAG (and does not terminate due to the fact that all possible alternatives are tried and no result has been found), then it is a Contradiction DAG.

\begin{proposition}
\label{prop:app:create-cont-dag:valid-implies-cont-dag}
Eventually, all leaves of the DAG created by the procedure described in Proposition~\ref{prop:main:contradiction:find-a-DAG} can serve as leaves of a Contradiction DAG and the returned result is a Contradiction DAG.
\end{proposition}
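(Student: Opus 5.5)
We argue in two parts. First, the procedure of Proposition~\ref{prop:main:contradiction:find-a-DAG} stops: it adds only finitely many non-equivalent states, by Proposition~\ref{prop:app:create-cont-dag:finite}, and it never closes a cycle, by its last step. Second, once it stops, the graph built satisfies Definition~\ref{def:main:new-def-virt-bisim:DAG}.

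The core of the first part is an invariant. Every synchronized leaf added by the procedure satisfies the contradiction of its symbolic state. By the results of~\cite{FullReport}, such a state therefore cannot belong to any virtual bisimulation. Equivalently, by Proposition~\ref{prop:app:DAG-are-correct:if-not-bisim-then-ct}, for some order $n$ it cannot be an element of a virtual bisimulation in order $n$.

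We establish the invariant by induction over the construction steps. For the root, $A \not\sim B$ together with Proposition~\ref{prop:app:equi-def-vb:VB-implies-TB} shows that the initial state is not in any virtual bisimulation. For the inductive step there are three cases.
\begin{itemize}
\item \emph{Sync step.} The synchronized target has the same original clock values, so it inherits the contradiction of the source.
\item \emph{Delay step.} The chosen delay ($k+1$, the integer maximum, or $d+d_{\text{step}}$) keeps the target inside the contradiction. The region argument supplies this: valuations in the same region satisfy the same constraints, because all constants are integers, so region-equivalent states either both satisfy a virtual constraint or both fail it.
\item \emph{Action step.} By case~2 of Definition~\ref{def:app:DAG-are-correct:bounded-virtual-bisimulation}, if a synchronized state cannot be in a virtual bisimulation of order $n+1$ while its delays are all fine, then some switch of $A$ (or of $B$) has every matching successor outside order $n$. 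These successors are exactly the targets the procedure adds.
\end{itemize}

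The invariant gives progress: as long as some leaf fails the leaf conditions of Definition~\ref{def:main:new-def-virt-bisim:DAG}, one of the delay or action steps applies to it. Its order $n$ strictly decreases along such steps, and this is what "eventually" means. Any leaf reached at order $1$ satisfies one of the leaf conditions, as in the base case of Proposition~\ref{prop:app:DAG-are-correct:if-not-bisim-then-ct}.

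For the second part we check the clauses of Definition~\ref{def:main:new-def-virt-bisim:DAG}, and most are immediate from construction.
\begin{itemize}
\item Unsynchronized nodes have exactly their sync edge.
\item Delay nodes have exactly one delay edge.
\item Action nodes fix one switch of $A$ (or $B$) and receive all matching \PTVC{} transitions $R_A(R_B(\ClockValuation{}))$. This is exactly the completeness requirement in Definition~\ref{def:main:new-def-virt-bisim:DAG}.
\item Acyclicity holds by the final step.
\item Finiteness follows from Proposition~\ref{prop:app:create-cont-dag:finite}.
\end{itemize}

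The main obstacle is the cycle-avoidance clause. We must show that whenever a step would close a cycle, an alternative exists. The plan is to use the order measure: choose, at each node, the witness switch or delay that realises the minimal order $n$ of failure. The successors then have strictly smaller order, and a cycle would force $n<n$. Identifying region-equivalent states is harmless, because bounded virtual bisimilarity is region-invariant, as noted in~\cite{FullReport}.
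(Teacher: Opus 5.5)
\textbf{Where your proposal matches the paper.} Your second part, the clause-by-clause check of Definition~\ref{def:main:new-def-virt-bisim:DAG} on the graph the procedure stops with, is essentially the paper's proof. The paper inducts on the maximal distance to a leaf. It handles unsynchronized nodes by the sync step and synchronized ones by the argument of Proposition~\ref{prop:app:DAG-are-correct:if-not-bisim-then-ct}. It takes termination from Proposition~\ref{prop:app:create-cont-dag:finite} and the cycle-avoidance clause, and does not attempt a progress argument.

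\textbf{Where it breaks: the progress argument.} Let $f(s)$ be the least $n$ such that $s$ lies in no virtual bisimulation of order $n$. For a delay successor $s'$ of $s$, Definition~\ref{def:app:DAG-are-correct:bounded-virtual-bisimulation} only yields $f(s) \le f(s')+1$. The equality $f(s') = f(s)-1$ holds for a \emph{witness} delay. The third step of the procedure does not take the witness, however; it takes $k+1$ or the largest delay that still fulfils the contradiction. For an example, give each automaton one location with an $a$-switch to a sink (guard $x \le 1$ in $A$, $y < 1$ in $B$) and a $b$-self-loop guarded by $x \ge 5$ (resp.\ $y \ge 5$) that resets the clock. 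Every state of this location pair is non-bisimilar. At the root $f = 2$: delay $1$, then $a$ is unmatched. Yet the procedure must delay by $k+1 = 6$, where $a$ is disabled. The only remaining step is $b$ followed by sync, which returns to the root. So $f$ goes from $2$ to $3$, and the prescribed steps close a cycle. Your remedy, choosing the witness delay, is not something the procedure permits: the delay is fixed by the third step, and the fifth step merely asserts that an alternative exists. Your acyclicity and ``eventually'' argument therefore concerns a modified procedure, not the one in Proposition~\ref{prop:main:contradiction:find-a-DAG}.

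\textbf{A smaller gap: soundness versus completeness of contradictions.} The procedure tests ``fulfils the contradiction of its symbolic state'', whereas your invariant and measure are about ``lies in no (bounded) virtual bisimulation''. You invoke only the soundness direction: fulfilling the contradiction implies non-membership. Your root case and your action step need the converse. To apply the fourth step to the witness switch, you must know that its successors failing at order $n$ fulfil the contradiction. That means the contradictions computed by~\cite{FullReport} must be complete.

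\textbf{The sync step.} The synchronized target of a sync edge does not inherit anything from its unsynchronized source. Contradictions constrain the virtual clocks, and sync overwrites them. What places the target inside the contradiction is the action step's own test on the synchronized valuation.

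\textbf{How to repair it.} Either state these assumptions and the altered delay rule explicitly, or restrict the claim, as the paper effectively does: if the procedure returns a graph, that graph is a Contradiction DAG.
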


\begin{proof}
To show this proposition, we have to show that for all states in the DAG Definition~\ref{def:main:new-def-virt-bisim:DAG} holds. We show this statement by induction in the number of maximum steps to a leaf. Obviously, all synchronized states fulfill a contradiction of the corresponding symbolic state by construction of the procedure.\\
\textbf{Hypothesis: } Any state of the DAG fulfills the conditions of Definition~\ref{def:main:new-def-virt-bisim:DAG}. \\
\textbf{Base Case: } If the state is a leaf, then it fulfills the conditions of Definition~\ref{def:main:new-def-virt-bisim:DAG}. \\
\textbf{Induction Step: } If the hypothesis holds for any state that has no more than $n$ steps to any leaf reachable from this state within the DAG, then the hypothesis also holds for any state that has no more than $n+1$ steps to any leaf reachable from this state within the DAG.

The base case holds trivially by construction. To show the induction step, we consider a state $((l_A, l_B), \ClockValuation{})$ that has no more than $n+1$ steps to any leaf reachable from this state within the DAG. If the maximum number of steps is less than $n+1$, the statement holds trivially by the induction hypothesis. If the maximum number of steps is $n+1$, it cannot be a leaf and we distinguish the following cases:
\begin{enumerate}
\item If $((l_A, l_B), \ClockValuation{})$ is not synchronized, the induction step trivially holds by the second condition of the procedure described in Proposition~\ref{prop:main:contradiction:find-a-DAG}.
\item If $((l_A, l_B), \ClockValuation{})$ is synchronized, we know that it fulfills the contradiction of the corresponding symbolic state and, therefore, we know that it cannot be element of a virtual bisimulation. Since it cannot be a leaf of a Contradiction DAG, we know that the corresponding conditions of Definition~\ref{def:main:new-def-virt-bisim:DAG} are not fulfilled. Therefore, this case can be shown analogously to Proposition~\ref{prop:app:DAG-are-correct:if-not-bisim-then-ct}.
\end{enumerate}
\end{proof}

Proposition~\ref{prop:app:create-cont-dag:valid-implies-cont-dag} implies Proposition~\ref{prop:main:contradiction:find-a-DAG}.
\end{subappendices}
\end{document}